\begin{document}

\newtheorem{theorem}{Theorem}[section]
\newtheorem{corollary}[theorem]{Corollary}
\newtheorem{lemma}[theorem]{Lemma}
\newtheorem{remark}[theorem]{Remark}
\newtheorem{definition}[theorem]{Definition}
\newtheorem{example}[theorem]{Example}
\theoremstyle{plain}
\newtheorem{proposition}[theorem]{Proposition}
\newtheorem{observation}[theorem]{Observation}
\newcommand{\NN}{\mbox{$\cal N$}}
\newcommand{\PP}{\mbox{$\cal P$}}
\newcommand{\WW}{\mbox{$\cal W$}}
\newcommand{\RR}{\mbox{$\cal R$}}
\newcommand{\A}{\mbox{$\cal A$}}
\newcommand{\calS}{\mbox{$\cal S$}}
\newcommand{\poly}{\mathrm{poly}}

\def\R{\mathbb{R}}
\def\Z{\mathbb{Z}}
\def\N{\mathbb{N}}
\def\ds{\displaystyle}
\def\oa#1{\overrightarrow{#1}}
\def\ola#1{\overleftarrow{#1}}
\def\row#1#2{{#1}_1,\ldots ,{#1}_{#2}}
\def\brow#1#2{{\bf {#1}}_1,\ldots ,{\bf {#1}}_{#2}}
\def\rrow#1#2{{#1}_0,{#1}_1,\ldots ,{#1}_{#2}}
\def\urow#1#2{{#1}^1,\ldots ,{#1}^{#2}}
\def\irow#1#2{{#1}_1,\ldots ,{#1}_{#2},\ldots}
\def\lcomb#1#2#3{{#1}_1{#2}_1+{#1}_2{#2}_2+\cdots +{#1}_{#3}{#2}_{#3}}
\def\blcomb#1#2#3{{#1}_1{\bf {#2}}_1+\cdots +{#1}_{#3}{\bf {#2}}_{#3}}
\def\2vec#1#2{\left(\begin{array}{c}{#1}\\{#2}\end{array}\right)}
\def\threevec#1#2#3{\left[\begin{array}{r}{#1}\\{#2}\\{#3}\end{array}\right]}
\def\mod#1{\ \hbox{\rm (mod $#1$)}}
\def\gcd#1#2{\hbox{gcd}\>(#1,#2)}
\def\lcm#1#2{\hbox{lcm}\>(#1,#2)}
\def\card#1{\hbox{card}\>(#1)}
\def\adv{\hbox{adv}}
\def\union{\cup}
\def\intsct{\cap}
\def\Union{\bigcup}
\def\Intsct{\bigcap}

\newcommand{\calA}{\mathcal{A}}
\newcommand{\calE}{\mathcal{E}}
\newcommand{\calR}{\mathcal{R}}
\newcommand{\calF}{\mathcal{F}}
\newcommand{\calB}{\mathcal{B}}
\newcommand{\calC}{\mathcal{C}}
\newcommand{\calQ}{\mathcal{Q}}
\newcommand{\calD}{\mathcal{D}}
\newcommand{\calP}{\mathcal{P}}
\newcommand{\calT}{\mathcal{T}}
\newcommand{\calL}{\mathcal{L}}
\newcommand{\Dec}{\mathit{Dec}}
\newcommand{\revnot}[1]{\overleftarrow{#1}}

\newcommand{\peak}{{\mathrm{peak}}}
\newcommand{\dfs}{{\mathrm{dfs}}}

\newcommand{\p}{{{\mathrm{P}}}}
\newcommand{\np}{{{\mathrm{NP}}}}

\title{Clone Structures in Voters' Preferences}

\author{Edith Elkind \\
       School of Physical and \\Mathematical Sciences \\
       Nanyang Technological University\\
       Singapore
\and
       Piotr Faliszewski\\
       Department of Computer Science\\ 
       AGH University of Science and\\ Technology,
       Poland
\and
       Arkadii Slinko\\
       Deptartment of Mathematics\\
       University of Auckland\\
       New Zealand
}

\maketitle

\begin{abstract}
  In elections, a set of candidates ranked consecutively
  (though possibly in different order) by all voters is called a clone set, 
  and its members are called clones. A clone structure is a family of all
  clone sets of a given election. In this paper we study properties of
  clone structures. In particular, we give an axiomatic
  characterization of clone structures, show their hierarchical
  structure, and analyze clone structures in single-peaked and
  single-crossing elections. We give a polynomial-time algorithm that
  finds a minimal collection of clones that need to be collapsed for an
  election to become single-peaked, and we show that this problem
  is $\np$-hard for single-crossing elections.
\end{abstract}

\section{Introduction}\label{sec:intro}
Group decision making plays an important role in the proper
functioning of human societies and multiagent systems. Collective
decisions are often made by aggregating the preferences of individual
agents by means of {\em voting}: each agent ranks the available
alternatives, and a voting rule is used to select one or more winners
(see~\cite{arr-sen-suz:b:handbook-of-social-choice} for a general
overview of voting, and~\cite{fal-hem-hem:j:cacm-survey} for a more
algorithmic perspective).  In general, the structure of the set of
alternatives may be quite complex. For instance, Ephrati and
Rosenschein~\cite{eph-ros:j:multiagent-planning} explore the situation
where multiple agents try to coordinate their actions in order to
devise a global plan. There the space of alternatives, i.e., of
possible plans, may be huge, with some alternatives being very similar
to each other.  In such a case it may be reasonable to establish which
plans differ fundamentally, and which are viewed as minor variations
of each other.

Such structured decision-making environments have been studied in the
social choice literature: for instance, Laffond et
al.~\cite{laf-lai-las:j:composition} describe the situation when a
group of agents has to choose from a set that is partitioned into
several ``projects,'' where each project is defined as a set of
possible variants. In this setting, all agents are likely to rank the
variants of each projects contiguously.  This model was further
investigated by Laslier~\cite{las:j:rank-based,las:j:aggregation}.
Tideman~\cite{tid:j:clones} suggests a
different explanation of why several alternatives in an election may
be very similar to each other: a malicious party may try to
``duplicate'' an existing candidate in order to change the voting
outcome. This procedure is known as {\em cloning} and the alternatives
that appear together in all preference orders (though not
necessarily in the same order) are called {\em clones}. 
Elkind et al.~\cite{elk-fal-sli:c:cloning} study algorithms for
cloning and show that optimal cloning is easy for many voting
rules.

Both when clones arise naturally and when they are created by a
manipulator, it may be useful to understand the internal structure of
the resulting clone sets. Indeed, such an understanding could be
instrumental in uncovering hidden properties of voters' preferences
such as, for example, a hierarchical structure of the alternative set, 
or the fact that after collapsing a small number of clones the
election becomes single-peaked or single-crossing (informally, 
both single-peaked~\cite{bla:b:polsci:committees-elections} 
and single-crossing~\cite{mir:j:single-crossing} elections model societies
focused on a single issue, such as, e.g., taxation level).
In either case we could run the election in a better way by using
a more suitable voting rule: in the former case we can use
hierarchic voting, and in the latter case we can use the median
voter rule---which is known to be strategy-proof for single-peaked
(single-crossing) profiles---to select a group of clones, and then
pick the final winner among them. Such an approach is likely to
produce a better voting outcome as well as reduce the voters'
incentives for manipulation. 

Our goal in this paper is to provide a formal understanding of what
families of clone sets---which we call \emph{clone structures}---can
arise in elections (we give an axiomatic characterization), to provide
convenient means of representing them (we show that PQ-trees of Booth
and Lueker~\cite{boo-lue:j:consecutive-ones-property} very
conveniently describe clone structures), and to find a polynomial-time
algorithms that restores single-peakedness/single-crossingness in
elections by collapsing a minimal number of clones (we succeed for the
case of single-peakedness and prove $\np$-hardness for the case of
single-crossing).  We believe that our results are useful for
understanding the impact of clones in decision-making scenarios, and
will help in developing algorithms for settings where some of the
candidates may be very similar to each other. Due to space limit,
almost all proofs, as well as some discussions, are relegated to the
appendix.

\section{Preliminaries}\label{sec:prelim}
Given a finite set $C$ of {\em candidates} (or {\em alternatives}), a
{\em preference order} (or {\em ranking}) over $C$ is a total order over $C$,
i.e., a complete, transitive and antisymmetric relation on $C$.
Intuitively, a preference order is a ranking of the candidates from
the most desirable one to the least desirable one. 
By $\ola{\succ}$ we denote an order obtained by reversing order $\succ$,
that is, $j \ola\succ i$ if and only if $i\succ j$.
For two disjoint sets
$X, Y\subseteq C$ and an order $\succ$, we write $X\succ Y$ if
$x\succ y$ for all $x\in X$ and all $y\in Y$.  
Given two sets $X, Y\subseteq C$, we say that $X$ is a {\em proper
  subset} of $Y$ if $X\subseteq Y$ and $1<|X|< |Y|$. We say that $X$
and $Y$ {\em intersect non-trivially} and write $X\bowtie Y$ if $X\cap
Y\neq\emptyset$, $X\setminus Y\neq\emptyset$ and $Y\setminus
X\neq\emptyset$.

A {\em preference profile} $\calR=(\row Rn)$ on $C$ is a collection of
$n$ preference orders over $C$, where each order $R_i$, $1 \leq i \leq n$,
represents the preferences of the $i$-th voter; for readability, we
sometimes write $\succ_i$ in place of $R_i$.  An {\em election}
over $C$ is a pair $\calE = (C, \calR)$, where $\calR$ is a preference
profile over $C$.  A {\em voting rule} is a mapping $\calF$ that,
given an election $\calE$ over $C$, outputs a set
$\calF(\calE)\subseteq C$; the elements of $\calF(\calE)$ are called
the {\em election winners}. Many voting rules are used in practice and studied    
theoretically; see~\cite{arr-sen-suz:b:handbook-of-social-choice}.
However, since we focus on the nature of preference profiles, our
results do not depend on the choice of a voting rule.

\begin{example}\label{exmpl:rules}
  $\calR = (R_1, R_2, R_3)$
  with $R_1: a \succ_1 b \succ_1 c \succ_1 d$, $R_2: b \succ_2 d
  \succ_2 c \succ_2 a$, $R_3: a \succ_3 b \succ_3 d \succ_3 c$
  is a preference profile over $C = \{a,b,c,d\}$.
\end{example}

\noindent
The following
definition, inspired by~\cite{tid:j:clones}, is fundamental for our
work.

\begin{definition}
  Let $\calR=(\row Rn)$ be a preference profile over a candidate set
  $C$.  We say that a non-empty subset $X\subseteq C$ is a {\em clone
    set} for $\calR$ if
  for every $c,c'\in X$, every $a\in C\setminus X$, and every
  $i=1,2,\ldots, n$
  it holds that
  $c\succ_i a$ if and only if $c'\succ_i a$.
\end{definition}
Unlike Tideman~\cite{tid:j:clones}, we define singletons to be clone
sets;
in the election from Example~\ref{exmpl:rules} each of
$\{a\}$, $\{b\}$, $\{c\}$, $\{d\}$, $\{d,c\}$, $\{b,c,d\}$, and
$\{a,b,c,d\}$ is a~clone~set.

\section{Axiomatic Characterization of Clone Structures}\label{sec:char}
Our first goal is to understand which set families can be obtained
as clone structures.  That is, given a collection $\calC$ of subsets
of a candidate set $C$, we %
want to determine if there exists a preference profile $\calR$ over
$C$ such that each clone in $\calR$ appears in our collection and vice
versa; we will say that such $\calR$ {\em implements} $\calC$.  The
main technical results of this section are (a) an axiomatic
characterization of implementable collections of subsets, and (b) a
polynomial-time algorithm for recognizing such families. 

In this section, we will consider elections over the set
$[m]=\{1, \dots, m\}$.  We will write $[j, k]$ to denote $\{j, j+1,
\dots, k\}$ for $j, k\in[m]$.

\begin{definition}\label{def:c-structure}
  Given a profile $\calR=(\row Rn)$ over $[m]$, let
  ${\calC}(\calR)\subseteq 2^{[m]}$ be the collection of all clone
  sets for $\calR$.  We say that a family ${\calC}\subseteq 2^{[m]}$
  is a {\em clone structure} on $[m]$ if it is equal to
  ${\calC}(\calR)$ for some profile $\calR$ on $[m]$.
\end{definition}

We remark that clone structures are very close in spirit to clans 
in $2$-structures~\cite{ehr-har-roz:b:2-structures},
and many results in this section resemble those for clans; the
characterization given in this section and high-level proof approach are close in spirit to
those of M\"{o}hring~\cite{moe:j:decomposition}.  We will, however, present
a direct argument rather than translate these prior results, both because we
need intermediate results for the analysis of single-peaked elections
and because such a translation
is non-trivial and would obscure useful intuition.

\begin{example}
\label{exm:string}
Let $\calR$ consist of a single linear order $R: 1 \succ 2 \succ
\cdots \succ m$.  Then ${\calC}(\calR)=\{[i,j]\mid i\le j\}$ (see
Figure~\ref{fig:string}(a)).
Let $\calR'$ be a cyclic profile on $[m]$, i.e., 
$\calR'=(R_1, \dots, R_m)$, and the preferences 
of the $i$-th voter are given by
$R_i: i \succ_i i+1 \succ_i \cdots \succ_i m\succ_i 1\succ_i \cdots \succ_i i-1.$
Then ${\calC}(\calR')=\{[m]\}\cup \{\{i\}\mid i\in [m]\}$
(see Figure~\ref{fig:string}(b)).
\end{example}

\begin{wrapfigure}{r}{4cm}
  \vspace{-0.4cm}
  \begin{center}
  \subfloat[{A string of sausages.}]{%
  \resizebox{4cm}{!}{\input{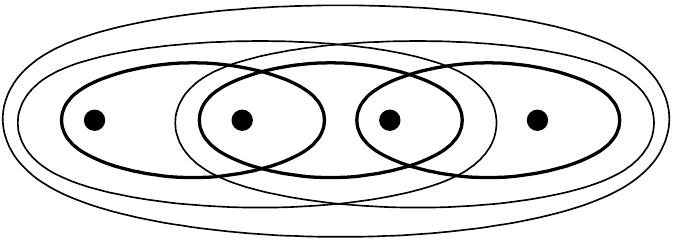_t}}}
\vspace{0.2cm}
\subfloat[{A fat sausage.}]{%
  \raisebox{0.20cm}[\totalheight][0pt]{
  \resizebox{3.5cm}{!}{\input{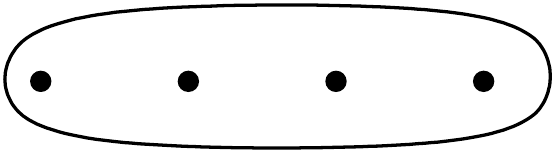_t}}}}
  \end{center}
\vspace{-0.4cm}
\caption{\label{fig:string}Diagrams representing clone structures
    from Example~\ref{exm:string} for ${m = 4}$. }
  \vspace{-0.5cm}
\end{wrapfigure}

We call the first clone structure from Example~\ref{exm:string} a {\em
  string of sausages} and the second one a {\em fat sausage}.  Note
that any clone structure over $[m]$ consists of at most
$\frac{m(m+1)}{2}$ sets, since each clone set can be described by its
location (i.e., beginning and end) in the preference ordering of a
fixed voter.  Thus, a string of sausages and a fat sausage can be
thought of as, respectively, the maximal and the minimal clone
structure over $[m]$.
Let us now establish some basic properties of clone structures.

\begin{proposition}
\label{pro:inclusive}
Let $\calR$ be a profile on $[m]$. Then
(1) $\{i\} \in {\calC}(\calR)$ for any $i\in [m]$;
(2) $\emptyset \notin \calC(\calR)$ and $[m] \in {\calC}(\calR)$;
(3) if $C_1$, $C_2$ are in ${\calC}(\calR)$ and $C_1\cap
  C_2\ne \emptyset$, then $C_1\cup C_2$ and $C_1\cap C_2$ are in
  ${\calC}(\calR)$;
(4) if $C_1$, $C_2$ are in $\calC(\calR)$ and $C_1 \bowtie
  C_2$, then $C_1 \setminus C_2$ and $C_2 \setminus C_1$ are in
  $\calC(\calR)$.
\end{proposition}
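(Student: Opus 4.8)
The plan is to check each of the four claims directly against the definition of a clone set, since in every case we are handed one or two clone sets and must verify the defining biconditional for a derived set. Claims (1) and (2) are essentially free: for a singleton $\{i\}$ the only admissible pair is $c=c'=i$, so ``$c\succ_i a$ iff $c'\succ_i a$'' holds trivially for all $a\in[m]\setminus\{i\}$ and all $i$; $\emptyset$ is barred because clone sets are non-empty by definition; and $[m]\in\calC(\calR)$ because $[m]\setminus[m]=\emptyset$ leaves no candidate $a$ to test. I would dispatch these in a couple of lines.

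For claim (3), to see that $C_1\cap C_2$ is a clone set, fix $c,c'\in C_1\cap C_2$ and $a\notin C_1\cap C_2$; then $a$ misses at least one of $C_1,C_2$, say $a\notin C_1$, and since $c,c'\in C_1$ the clone property of $C_1$ yields the biconditional. For $C_1\cup C_2$, fix $c,c'\in C_1\cup C_2$ and $a\notin C_1\cup C_2$. If $c,c'$ both lie in $C_1$ or both lie in $C_2$ we are done immediately by that set's clone property; otherwise one of them lies in $C_1\setminus C_2$ and the other in $C_2\setminus C_1$, so I would pick a bridge element $b\in C_1\cap C_2$ (non-empty by hypothesis) and chain ``$c\succ_i a\Leftrightarrow b\succ_i a$'' (from the clone property of $C_1$, valid since $a\notin C_1$) with ``$b\succ_i a\Leftrightarrow c'\succ_i a$'' (from the clone property of $C_2$, valid since $a\notin C_2$). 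Non-emptiness of both $C_1\cap C_2$ and $C_1\cup C_2$ is clear, so both are genuine clone sets.

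Claim (4) is the one requiring an actual idea, so I would spend most of the effort there. Set $X=C_1\setminus C_2$; it is non-empty since $C_1\bowtie C_2$. Fix $c,c'\in X$ and $a\notin X$. If $a\notin C_1$, the clone property of $C_1$ finishes it. The substantive case is $a\in C_1\setminus X=C_1\cap C_2$, so $a\in C_2$ while $c,c'\notin C_2$. Here I would introduce an element $e\in C_2\setminus C_1$, which exists precisely because $C_1\bowtie C_2$. Applying the clone property of $C_2$ to the pair $\{a,e\}\subseteq C_2$ against $c\notin C_2$ gives $a\succ_i c\Leftrightarrow e\succ_i c$, and negating both sides (using that each $R_i$ is a total order) turns this into $c\succ_i a\Leftrightarrow c\succ_i e$; the same step against $c'\notin C_2$ gives $c'\succ_i a\Leftrightarrow c'\succ_i e$. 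Finally, since $e\notin C_1$ and $c,c'\in C_1$, the clone property of $C_1$ gives $c\succ_i e\Leftrightarrow c'\succ_i e$, and composing the three equivalences yields $c\succ_i a\Leftrightarrow c'\succ_i a$. The claim for $C_2\setminus C_1$ follows by exchanging the roles of $C_1$ and $C_2$.

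The only real obstacle is choosing the intermediary in (4) correctly: the ``obvious'' candidate, an element of $C_1\cap C_2$, fails because it lives inside $C_1$, so the clone property of $C_1$ cannot be invoked against it; one must instead route through $C_2\setminus C_1$. Recognizing that this is exactly where the full strength of $C_1\bowtie C_2$ is needed---beyond merely guaranteeing $C_1\setminus C_2\neq\emptyset$---is the crux of the argument.
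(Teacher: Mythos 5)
Your proof is correct in all four parts, including the delicate case of (4) where $a\in C_1\cap C_2$. The route differs from the paper's in its low-level mechanism: the paper works with the equivalent "interval" view of clone sets (a set is a clone set iff no outside candidate splits it, i.e., its members are contiguous in every vote) and proves (3) and (4) by separation arguments — e.g., for $C_1\setminus C_2$ it notes that no element outside $C_1$ can split $C_1\setminus C_2$ without splitting $C_1$, and that in each $R_i$ the block $C_1\cap C_2$ separates $C_1\setminus C_2$ from $C_2\setminus C_1$, so its elements cannot split $C_1\setminus C_2$ either. You instead stay entirely with the biconditional definition and resolve the hard case by routing through a witness $e\in C_2\setminus C_1$: applying the clone property of $C_2$ to the pair $\{a,e\}$ against $c\notin C_2$, negating via totality, and then closing the chain with the clone property of $C_1$ against $e\notin C_1$. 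The two arguments encode the same fact (your witness $e$ plays the role of the "other side" of the separating block $C_1\cap C_2$), but yours is fully elementary and does not rely on the separation claim, which the paper asserts without explicit proof; the paper's interval picture, in turn, is shorter and supplies the geometric intuition that is reused later (e.g., in Propositions~\ref{pro:exclusive1} and~\ref{pro:exclusive2}). Your observation that the naive bridge through $C_1\cap C_2$ fails in (4), and that the full strength of $C_1\bowtie C_2$ is needed to supply $e\in C_2\setminus C_1$, is exactly the right crux.
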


Proposition~\ref{pro:inclusive} does not give sufficient conditions
for a family of subsets of $[m]$ to be a clone structure.  For
example, $P = 2^{[m]} \setminus \{\emptyset\}$, where $m \geq 3$,
satisfies all the conditions of Proposition~\ref{pro:inclusive}. Yet,
the cardinality of $P$ is $2^m-1$, whereas each clone structure over
$[m]$ has at most $\frac{m(m+1)}{2}$ elements.
The next proposition provides a further
necessary condition for a family of subsets of $[m]$ to be a clone
structure.  It is strong enough to exclude the collection $2^{[m]}
\setminus \{\emptyset\}$ for $m>3$.

Given a profile $\calR$ over $[m]$ and a set $X \in \calC(\calR)$, we
say that a set $Z \in \calC(\calR)$ is a {\em proper minimal superset}
of $X$ if $X \subseteq Z$, $X \neq Z$, and there is no set $Y \in
\calC(\calR)$ such that $X \neq Y$, $Y \neq Z$ and $X \subseteq Y
\subseteq Z$.

\begin{proposition}
\label{pro:exclusive1}
For any profile $\calR$ on $[m]$, each 
$X \in \calC(\calR)$ 
has at most two proper minimal supersets in $\calC(\calR)$.
\end{proposition}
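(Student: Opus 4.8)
The plan is to fix an arbitrary voter, say the first one, and exploit the fact that every clone set occupies a contiguous block of positions in $R_1$'s ranking (this is immediate from the definition of a clone set: no candidate outside the set can sit between two of its members in $R_1$). Let $[j,k]$ be the block of positions occupied by $X$ in $R_1$, and let $a$ (resp.\ $b$) be the candidate in position $j-1$ (resp.\ $k+1$) of $R_1$, whenever that position exists. Any clone set $Z$ with $X\subsetneq Z$ also forms a block in $R_1$ strictly containing $[j,k]$, so its block must include position $j-1$ or position $k+1$; hence $Z$ contains $a$ or $b$. Thus every proper superset of $X$ in $\calC(\calR)$ lies in $\calF_a$ or $\calF_b$, where $\calF_a$ (resp.\ $\calF_b$) denotes the family of clone sets containing $X\cup\{a\}$ (resp.\ $X\cup\{b\}$).

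The next step is to produce a unique $\subseteq$-minimum element in each of these families. The family $\calF_a$ is nonempty (it contains $[m]$ by Proposition~\ref{pro:inclusive}(2)) and all of its members contain $a$; applying Proposition~\ref{pro:inclusive}(3) inductively---the intersection of finitely many clone sets sharing a common element is again a clone set, with $[m]$ keeping the family nonempty at each step---we get that $M_a:=\bigcap\calF_a$ is a clone set. Since $M_a\supseteq X\cup\{a\}$, in fact $M_a\in\calF_a$, so $M_a$ is the minimum of $\calF_a$ under inclusion; define $M_b:=\bigcap\calF_b$ analogously.

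Finally I would check that every proper minimal superset of $X$ equals $M_a$ or $M_b$. Let $Z$ be such a set. By the first step $Z\in\calF_a$ or $Z\in\calF_b$; suppose $Z\in\calF_a$. Then $M_a\subseteq Z$, and since $a\in M_a\setminus X$ we have $X\subsetneq M_a\subseteq Z$. Because $Z$ is a proper minimal superset of $X$, there is no clone set strictly between $X$ and $Z$, which forces $M_a=Z$. Symmetrically, $Z\in\calF_b$ gives $Z=M_b$. Hence the collection of proper minimal supersets of $X$ is a subset of $\{M_a,M_b\}$ and therefore has size at most two. The degenerate cases only lower the count: if $X=[m]$ there are no proper supersets at all, and if $X$ sits at the very top or bottom of $R_1$ then only one of $a,b$ exists.

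I do not anticipate a genuine obstacle. The only mildly delicate points are upgrading Proposition~\ref{pro:inclusive}(3) from two sets to arbitrary finite intersections (immediate by induction, since $[m]$ always remains available) and disposing of the boundary cases above, neither of which threatens the bound of two.
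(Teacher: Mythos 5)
Your proof is correct. It rests on the same underlying geometric fact as the paper's argument---every clone set is an interval in a fixed vote, so a clone set strictly containing $X$ can only extend past one of the two endpoints of $X$'s block---but the architecture is genuinely different. The paper proceeds by contradiction: it assumes three proper minimal supersets $Y,Z,W$, uses closure under intersection to show $Y\cap Z=Y\cap W=Z\cap W=X$, and then notes that $X$ would have to separate each pair among the disjoint remainders $Y\setminus X$, $Z\setminus X$, $W\setminus X$ within a single linear order (after normalizing orientation via Proposition~\ref{pro:reversing}), which is impossible since $X$'s block has only two sides. You instead argue constructively: every strict clone superset of $X$ must contain one of the two neighbors $a,b$ of $X$'s block in $R_1$, and intersecting all clone sets containing $X\cup\{a\}$ (respectively $X\cup\{b\}$)---legitimate by a finite induction on Proposition~\ref{pro:inclusive}(3), since all these sets share $a$ (respectively $b$) and the family is nonempty because it contains $[m]$---produces a minimum element $M_a$ (respectively $M_b$), after which minimality forces any proper minimal superset to equal $M_a$ or $M_b$. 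Your route buys a slightly stronger conclusion, namely an explicit identification of the (at most two) minimal supersets as the smallest clone sets containing $X\cup\{a\}$ and $X\cup\{b\}$, and it avoids both the three-set case analysis and the appeal to reversal-invariance; the paper's version is shorter and never needs to form arbitrary finite intersections. Your treatment of the boundary cases ($X=[m]$, or $X$'s block touching the top or bottom of $R_1$) is also fine, since they only remove one or both of $a,b$.
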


Note, however, that for $m=3$ the set family
$2^{[m]}\setminus\{\emptyset\}$ satisfies the conclusion of
Proposition~\ref{pro:exclusive1}. Yet, it is obviously not a clone
structure, since it contains a ``cycle'' $\{1, 2\}, \{2, 3\}$, $\{3,
1\}$. More generally, consider a set family over $[m]$ that can be
obtained from a string of sausages by adding the ``missing link'',
i.e., the set $\{m, 1\}$ as well as all of its supersets that are
necessary to satisfy the conclusions of
Proposition~\ref{pro:inclusive}; we will call this set family a {\em
  ring of sausages}. Clearly, a ring of sausages is not a clone
structure, because it cannot be implemented by an acyclic preference
relation; yet, the conclusion of Proposition~\ref{pro:exclusive1} is
satisfied. Thus, we need to forbid rings of sausages;
in fact, we require a somewhat more general condition.

\begin{definition}
  We say that a set family $\{A_0, \dots, A_{k-1}\}$ is a {\em bicycle
    chain} if $k\ge 3$ and for all $i=0, \dots, k-1$ it holds that (1)
  $A_{i-1}\bowtie A_i$; (2) $A_{i-1}\cap A_i\cap A_{i+1}=\emptyset$;
  (3) $A_i\subseteq A_{i-1}\cup A_{i+1}$, where all indices are
  computed modulo $k$.
\end{definition}

\begin{proposition}~\label{pro:exclusive2} If $\calC$ is a
  clone structure, %
  it does not contain a bicycle chain.
\end{proposition}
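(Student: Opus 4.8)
The plan is to suppose, for the sake of contradiction, that some clone structure $\calC=\calC(\calR)$ contains a bicycle chain $\{A_0,\dots,A_{k-1}\}$, and to reach a contradiction by examining how these sets sit inside one voter's ranking (all indices mod $k$ throughout). The starting observation is that each clone set occupies a contiguous block of every voter's ranking: if $X\in\calC(\calR)$ and some $a\notin X$ were ranked by voter $i$ strictly between two elements $c,c'\in X$, then $c\succ_i a$ would hold while $c'\succ_i a$ would fail, contradicting the definition of a clone set. Hence, relabelling candidates if necessary (this affects neither being a clone structure nor being a bicycle chain), I may assume voter $1$ ranks the candidates $1\succ_1 2\succ_1\cdots\succ_1 m$, so that every set in $\calC(\calR)$, and in particular every $A_i$, is an interval $[a_i,b_i]$.

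Next I would unpack the axioms. Since $A_{i-1}\bowtie A_i$, the intervals $A_{i-1}$ and $A_i$ overlap properly; they cannot share an endpoint (that would make one contain the other, contradicting $\bowtie$), so exactly one of two patterns holds. Call step $i$ a \emph{descent} if $a_{i-1}<a_i\le b_{i-1}<b_i$ --- then $I_i:=A_{i-1}\cap A_i=[a_i,b_{i-1}]$ is an initial segment of $A_i$ containing $a_i$ but not $b_i$ --- and an \emph{ascent} if $a_i<a_{i-1}\le b_i<b_{i-1}$ --- then $I_i=[a_{i-1},b_i]$ is a final segment of $A_i$ containing $b_i$ but not $a_i$. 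By axiom (2), $I_i\cap I_{i+1}=A_{i-1}\cap A_i\cap A_{i+1}=\emptyset$; by axiom (3), $A_i=(A_i\cap A_{i-1})\cup(A_i\cap A_{i+1})=I_i\cup I_{i+1}$; and by axiom (1) both $I_i$ and $I_{i+1}$ are nonempty. So $\{I_i,I_{i+1}\}$ partitions the interval $A_i$ into two nonempty subintervals, which forces one of them to contain $a_i$ and the other to contain $b_i$.

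Putting the pieces together: the same endpoint analysis applied to $I_{i+1}=A_i\cap A_{i+1}$ shows that $I_{i+1}$ contains $b_i$ exactly when step $i+1$ is a descent, and contains $a_i$ exactly when step $i+1$ is an ascent. The only ways for $\{I_i,I_{i+1}\}$ to split $A_i$ into an $a_i$-part and a $b_i$-part are therefore ``step $i$ and step $i+1$ both descents'' or ``both ascents''. Thus consecutive steps always share a type, and since $k\ge 3$ and the indices run cyclically, all $k$ steps are descents or all are ascents. But all descents gives $a_{k-1}<a_0<a_1<\cdots<a_{k-1}$ and all ascents gives $a_{k-1}>a_0>\cdots>a_{k-1}$; either way a contradiction, so $\calC$ contains no bicycle chain.

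The one place to be careful is the exhaustive case check that a non-trivial intersection of two clone intervals yields exactly the descent/ascent dichotomy, which hinges on excluding coincident endpoints; everything else is routine interval arithmetic together with bookkeeping of the modular indices, and I expect no real obstacle there.
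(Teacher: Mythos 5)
Your proof is correct, and it is essentially the paper's argument in interval language: both restrict attention to a single vote, use the fact that every clone set is contiguous there, and then propagate an orientation (your descent/ascent types, the paper's ``$A_i$ ranked below $A_{i-1}\setminus A_i$'') around the cycle to obtain a cyclic strict ordering, which is impossible. Your endpoint bookkeeping just makes explicit the same partition $A_i=(A_{i-1}\cap A_i)\cup(A_i\cap A_{i+1})$ that the paper derives from conditions (2) and (3).
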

Propositions~\ref{pro:inclusive},~\ref{pro:exclusive1}
and~\ref{pro:exclusive2} lead to the following set of axioms (note that
these axioms are not normative; they simply tell us what clone structures
\emph{are} and not what they \emph{should} be):

\begin{description}%
\vspace{-0.2cm}
\item[A1.] $\{f\} \in \calF$ for any $f\in F$, $\emptyset \notin \calF$, and $F \in \calF$.
\item[A2.] if $C_1$ and $C_2$ are in $\calF$ and $C_1\cap C_2\ne
  \emptyset$, then $C_1\cup C_2$ and $C_1\cap C_2$ are in
  $\calF$.
\item[A3.] If $C_1$ and $C_2$ are in $\calF$ and $C_1\bowtie C_2$,
  then $C_1 \setminus C_2$ and $C_2 \setminus C_1$ are  in
  $\calF$.
\item[A4.] Each $C \in \calF$ has at most two proper minimal supersets in $\calF$.
\item[A5.] $\calF$ does not contain a bicycle chain.
\vspace{-0.2cm}
\end{description}

Our next goal is to show that these five axioms indeed characterize
clone structures. Axioms A1--A3 and axioms A4--A5 play different roles in
our characterization result: the former
ones ensure sufficient richness of a given set family, while the latter ones
prevent it from being ``too rich.''

\begin{wrapfigure}{r}{4cm}
\vspace{-0.8cm}
\begin{center}
  \subfloat[{Before embedding.}]{%
  \resizebox{3cm}{!}{\input{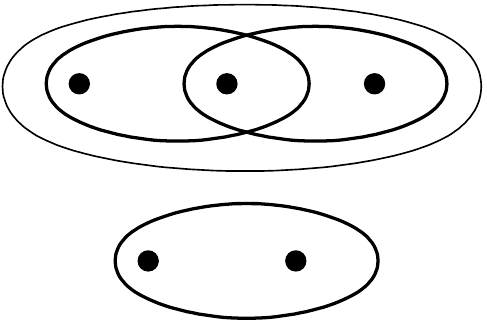_t}}}\\
  \subfloat[{After embedding.}]{%
  \resizebox{3.5cm}{!}{\input{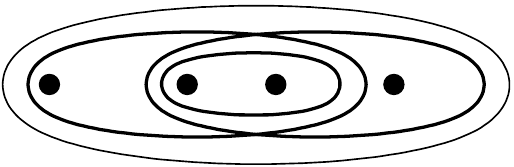_t}}}
\end{center}
\vspace{-0.3cm}
\caption{\label{fig:embed}Clone structures from
  Example~\ref{exm:embed}.}
\vspace{-0.8cm}
\end{wrapfigure}

We will first build up the necessary tools for our inductive argument.
Let $\calE$ and $\calF$ be two families of subsets on two disjoint
finite sets $E$ and $F$, respectively.  We can embed $\calF$ into
$\calE$ as follows.  Given $e\in E$, let $\calE(e \rightarrow \calF)$
denote the family of subsets $\calE' \union \calF\subseteq
2^{(E\setminus\{e\})\cup F}$, where $\calE'$ is obtained from $\calE$
by replacing each set $X$ containing $e$ with $(X \setminus \{e\})\cup F$.

\begin{example}\label{exm:embed}
  Consider set families $\calD = \{\{x\},\{y\}, \{x,y\}\}$ and
  $\calC = \{\{a\},\{b\},\{c\},$ $\{a,b\},\{b,c\}, \{a,b,c\}\}$ 
  (both are strings of sausages and hence
  clone structures). Then, $\calC(b \rightarrow \calD)=
  \{ \{a\},\{x\},\{y\},\{x,y\}, \{c\}, \{a,x,y\}, \{x,y,c\},
  \{a,x,y,c\}\}$.
  It is easy to check that this, again, is a clone structure.
\end{example}

If $\calE$ and $\calF$ satisfy axioms A1--A5
then so does $\calE(e \rightarrow \calF)$. %
We prove it directly (it also follows from  Theorem~\ref{thm:char-main} 
combined with Proposition~\ref{pro:embedding}).

\begin{proposition}\label{prop:compose}
  Let $\calE$ and $\calF$ be families of subsets on disjoint sets $E$
  and $F$, respectively, that satisfy A1--A5. Then for any $e\in E$
  the set family $\calE(e \rightarrow \calF)$ also satisfies A1--A5.
\end{proposition}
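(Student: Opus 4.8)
The plan is to argue directly by a structural analysis of $\calG := \calE(e\to\calF)$, a family on the ground set $G := (E\setminus\{e\})\cup F$. (One could alternatively invoke Theorem~\ref{thm:char-main} and Proposition~\ref{pro:embedding}, but a direct argument is cleaner and reusable later.) Every member of $\calG$ is of exactly one of three kinds: a \emph{copy}, i.e.\ a set $Z\in\calE$ with $e\notin Z$; an \emph{expanded set} $\widehat X := (X\setminus\{e\})\cup F$ with $X\in\calE$ and $e\in X$ (so $F\subseteq\widehat X$); or a \emph{new set} $Y\in\calF$ (so $Y\subseteq F$). I would first record the elementary incidences between kinds: a copy and a new set are disjoint; every new set is contained in every expanded set; a copy $Z$ and an expanded set $\widehat X$ meet exactly in $Z\cap X$; any two expanded sets share $F$; and the map $X\mapsto\widehat X$ is an order isomorphism from $\{X\in\calE: e\in X\}$ onto the expanded sets, with $\widehat{\{e\}}=F$ (here $\{e\}\in\calE$ and $F\neq\emptyset$, $E\neq\emptyset$ by A1).

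Axioms A1--A3 then follow by routine case analysis. For A1, singletons of $G$ are singletons of $\calE$ or of $\calF$, $G=\widehat E$ is the image of $E\in\calE$, and no member of $\calG$ is empty. For A2 and A3 I would check each unordered pair of kinds: same-kind pairs reduce immediately to A2/A3 for $\calE$ or $\calF$; for a copy/expanded pair the relevant intersection and set differences are computed on the $E\setminus\{e\}$-side and transfer to A2/A3 for $\calE$, using that $e\in X\setminus Z$ automatically (so $Z\bowtie X$ in $\calE$ whenever the two sets intersect non-trivially in $\calG$); copy/new pairs are disjoint and expanded/new pairs are nested, so the hypothesis of A3 is never met and that of A2 is settled at once.

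The crux of A4 is a lemma that the embedding neither creates nor destroys proper minimal supersets: for $C\in\calG$, the proper minimal supersets of $C$ in $\calG$ biject with those, in $\calE$ or $\calF$, of the set that $C$ ``comes from'' (namely $Z$ if $C$ is a copy, $X$ if $C=\widehat X$, and $C$ itself inside $\calF$ if $C$ is a new set properly contained in $F$). The work is in checking that no set of the ``wrong'' kind can be interposed: a copy cannot lie between sets one of which contains $F$; a new set cannot surround a nonempty copy; and an expanded set lying strictly between two members forces a strict containment among the corresponding $\calE$-sets, while an intermediate $\calE$-set containing $e$ would, if it caused trouble, force $e$ into a copy. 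Given the bijection, A4 for $\calG$ is immediate from A4 for $\calE$ and $\calF$.

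For A5 I would argue by contradiction: suppose $A_0,\dots,A_{k-1}$ is a bicycle chain in $\calG$. Since consecutive members intersect non-trivially, and copies are disjoint from new sets while expanded sets contain new sets, a new set can be adjacent only to another new set; hence the chain consists either entirely of new sets---in which case it is literally a bicycle chain in $\calF$---or entirely of copies and expanded sets. In the latter case replace each $A_i$ by its underlying $\calE$-set $\widetilde A_i$ ($A_i$ itself, or $X_i$ if $A_i=\widehat{X_i}$); distinctness of the $A_i$ gives distinctness of the $\widetilde A_i$, and conditions (1)--(3) of a bicycle chain transfer, the only subtlety being that three consecutive expanded sets would share $F$ and hence violate condition (2) in $\calG$, so that configuration does not arise, after which condition (2) and the $e$-coordinate of condition (3) are easily verified. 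Either way we contradict A5 for $\calE$ or $\calF$. The main obstacle is A4: one must use the incidence facts between the three kinds in full to rule out that the embedding silently merges two covers or inserts a spurious intermediate set; A5 is similar in spirit but shorter.
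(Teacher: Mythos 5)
Your proposal is correct and follows essentially the same route as the paper's proof: your three-kind taxonomy is exactly the paper's key observation that no set of $\calE(e\rightarrow\calF)$ intersects $F$ non-trivially, and your per-axiom translations back to $\calE$ and $\calF$ (including the minimal-superset correspondence for A4 and the ``all new sets or no new sets'' dichotomy for A5) match the paper's casework, merely phrased via an order isomorphism and a bijection lemma instead of a direct contradiction. (One cosmetic nit: $F$ itself is both the expanded set $\widehat{\{e\}}$ and a new set, so the kinds are not quite mutually exclusive, but this affects nothing.)
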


Next, we define an inverse operation to embedding, 
which we call {\em  collapsing}. 
Observe that when we embed $\calF\subseteq 2^F$ into
$\calE\subseteq 2^E$, any $C\in \calE(e \rightarrow \calF)$ is either
a subset of $F$, a superset of $F$, or does not intersect $F$ at all.
Thus, for a set family $\calC$ on $A$ to be collapsible, it should
contain a set $A'$ that does not intersect non-trivially with any
other set in $\calC$.

\begin{definition}\label{def:subfamily}
  Let $\calF$ be a family of subsets on a finite set $F$.  A subset
  $\calE\subseteq\calF$ is a {\em subfamily} of $\calF$ if there
  is a set $E\in\calF$ such that
(i) $\calE=\{F\in\calF\mid F\subseteq E\}$;
(ii) for any $X\in \calF\setminus\calE$ %
    we have either $E\subseteq X$ or $X\cap E=\emptyset$.
  The set $E$ is called the {\em support} of $\calE$.  $\calE$ is
  called a {\em proper subfamily} of $\calF$ if $E$ is a proper subset~of~$F$.
\end{definition}
One can check that if $\calF$ satisfies A1--A5 and $\calE$ is a
subfamily of $\calF$, then $\calE$ satisfies A1--A5 as well. Note
that we require $E\in\calF$ (rather than just $E\subseteq F$), and
hence $E\in\calE$.

Let $\calF$ be a family of subsets on $F$ that satisfies A1--A5
and let $\calE$ be a proper subfamily of $\calF$ on $E\subset F$.
Then no set $Y\in\calF$ intersects $E$ non-trivially, and hence
$\calE$ can be ``collapsed''. That is, we can obtain a new set family
$\calB$ from $\calF$ by picking some alternative $b\notin F$, removing
all sets $X\in\calE\setminus\{E\}$ from $\calF$, and replacing each
set $Y$ that contains $E$ with $(Y\setminus E)\cup\{b\}$.  It is not
hard to check that ${\calB}$ satisfies A1--A5; the proof is similar
to that of Proposition~\ref{prop:compose}.  We will write
$\calF(\calE\rightarrow b)$ to denote the set family obtained by
collapsing a subfamily $\calE$ of $\calF$. That is, we have
$\calB=\calF(\calE\rightarrow b)$ if and only if $\calF = {\calB}(b
\rightarrow \calE)$.

Suppose that $\calF$ has no proper subfamilies; we will call such
subset families {\em irreducible}.  

\begin{theorem}
\label{thm:irreducible}
Any irreducible family of subsets satisfying A1--A5 is either a string
of sausages or a fat sausage.
\end{theorem}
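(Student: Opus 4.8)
The plan is to reduce, step by step, to a classification of the $2$-element members of $\calF$. First record the reformulation of irreducibility we will use: a member $C$ with $1<|C|<m$ that overlaps no member (i.e.\ $C\bowtie D$ fails for every $D\in\calF$) is exactly the support of a proper subfamily, and conversely; hence $\calF$ is irreducible if and only if \emph{every} member $C$ with $1<|C|<m$ satisfies $C\bowtie D$ for some $D\in\calF$. The cases $m\le 2$ are immediate, so take $m\ge 3$. If every member is a singleton or $F$ then $\calF$ is a fat sausage and we are done; so assume $\calF$ has a member of size strictly between $1$ and $m$, and hence a member minimal among non-singletons. It then suffices to prove $\calF$ is a string of sausages.

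I would next show that \textbf{every minimal non-singleton member has size exactly $2$}. If $X$ is such a member then $X\ne F$ (else $\calF$ is a fat sausage), so $1<|X|<m$ and, by irreducibility, $X\bowtie Y$ for some $Y\in\calF$; by A2 and A3 the sets $X\cap Y$ and $X\setminus Y$ are nonempty members and proper subsets of $X$, so by minimality both are singletons and $|X|=2$. Thus the $2$-element members form a graph $G$ on $F$. By A4, $G$ has maximum degree $\le 2$ (three edges at a vertex $x$ would be three distinct proper minimal supersets of $\{x\}$); by A5, $G$ is acyclic (a cycle $a_0-a_1-\dots-a_{k-1}-a_0$ with $k\ge 3$ yields the bicycle chain $A_i=\{a_i,a_{i+1}\}$, $i\in\mathbb{Z}/k$). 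So $G$ is a disjoint union of simple paths, at least one carrying an edge. Two auxiliary lemmas are then needed. (a) For any member $C$ and any path component $V$ of $G$, $C\cap V$ is an interval of the path $V$: the key observation is that if $\{a,b\},\{b,c\}\in\calF$ and $\{a,c\}\notin\calF$ then no member contains $a$ and $c$ but not $b$ (otherwise $\{a,b,c\}\in\calF$ by A2, whence $\{a,c\}=C\cap\{a,b,c\}\in\calF$ by A2, a contradiction); applied along $V$ with a short induction this forces $C\cap V$ to be consecutive. (b) Every member $C$ meets at most two components $V$ \emph{properly} (meaning $\emptyset\ne C\cap V\ne V$): three such components would, via ``boundary edges'' $\{x,y\}$ of $G$ with $x\in C$, $y\notin C$ inside each of them, give by A2 three distinct members $C\cup\{y\}$ of size $|C|+1$, i.e.\ three distinct proper minimal supersets of $C$, contradicting A4.

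The heart of the matter is to prove that \textbf{$G$ is a single path spanning all of $F$}, and this is the step I expect to be the main obstacle. The intended approach: suppose $G$ has at least two components; starting from a component $V$ with $\ge 2$ vertices (which is a member, being the union of its path edges by A2), repeatedly adjoin all members overlapping the current set. Each stage stays in $\calF$ by A2, and the process stabilizes at a member $Z\supseteq V$ that is overlapped by nothing; if $Z\ne F$ this contradicts irreducibility. Excluding $Z=F$ is the delicate part: one uses Lemma (b), the fact that a maximal proper member $M$ that is overlapped must satisfy $F\setminus M\in\calF$ (immediate from A3 once $M\cup Y=F$ for any overlapper $Y$), and a case analysis of how $Z$ and the coatoms partition the components of $G$ (single-vertex components; coatoms that are unions of components versus ``prefix/suffix'' coatoms of a path component; and so on). I would spend most effort here making the case analysis close cleanly.

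Finally, once $G$ is a Hamiltonian path $v_1-v_2-\dots-v_m$, the conclusion is quick: every interval $[v_i,v_j]$ is a member (union the edges $\{v_i,v_{i+1}\},\dots,\{v_{j-1},v_j\}$ by A2), while conversely every member $C$ equals $C\cap F$ and hence is an interval of the path by Lemma (a); so $\calF$ is precisely the family of all intervals of this linear order, i.e.\ a string of sausages.
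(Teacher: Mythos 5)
Your preparatory work is sound and in fact mirrors the paper's own route: your reformulation of irreducibility, the argument that minimal non-singleton members have size two, and the graph $G$ of two-element members with maximum degree at most two (from A4) and no cycles (from A5) correspond exactly to the paper's auxiliary propositions on minimal proper subsets, and your Lemmas (a) and (b) are correct and provable as sketched (for (a), the induction does close: adjoin to $C$ the neighbour $v_1$ of $a$ on the path using A2, apply the inductive hypothesis to put the interior vertices into $C$, then recover $v_1$ from the distance-two observation). The final step, that the members are precisely the intervals of a Hamiltonian path, is also fine and even slightly more explicit than the paper's ending.

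However, there is a genuine gap, and it sits exactly where you yourself flag the main obstacle: you never prove that a path component of $G$ carrying an edge must exhaust $F$. This is the heart of the theorem---the only place where irreducibility interacts with A1--A5 in a nontrivial global way---and your sketch does not yet constitute an argument. Reaching $Z=F$ by closing $V$ under overlapping members is not by itself a contradiction; the coatom fact $F\setminus M\in\calF$ is a small local observation; and nothing in the outline explains why the vertices outside $V$ (singleton components, or a second path component) cannot be arranged compatibly with A1--A5. ``A case analysis \dots I would spend most effort here making the case analysis close cleanly'' is precisely the missing content. For comparison, the paper closes this step with an extremal argument: assuming the vertex set $V$ of the component is not all of $[m]$, irreducibility supplies some $E\in\calF$ with $E\bowtie V$; one chooses such an $E$ minimizing $|E\setminus V|$ and splits into the cases $|E\setminus V|=1$ (where A3, applied with initial and terminal segments of the path, forces a new edge of $G$ leaving $V$, a contradiction) and $|E\setminus V|>1$ (where applying irreducibility to $E\setminus V\in\calF$ produces an overlapper of $V$ with strictly smaller difference, contradicting minimality). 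Your Lemma (a) would actually streamline the first case, since it gives contiguity of $E\cap V$ for free, but without the minimality device---or some substitute of comparable force---your proposal does not prove the theorem.
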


Thus, any irreducible set family that satisfies A1--A5 is a clone
structure. This provides the basis for our inductive argument. For the
inductive step, we need to show that if $\calC$ and $\calD$ are two
clone structures over disjoint sets $C$ and $D$, and $c$ is some
candidate in $C$, then $\calC(c \rightarrow \calD)$ is a clone
structure. However, the proof of this fact is somewhat more
complicated than one might expect.  Indeed, suppose that we have a
pair of profiles $\calR=(R_1, \dots, R_n)$ and $\calQ=(Q_1, \dots,
Q_n)$ over sets $C$ and $D$, respectively, such that $\calC =
\calC(\calR)$ and $\calD = \calC(\calQ)$.  One might think that, given
$c\in C$, we can obtain a preference profile $\calR'$ such that
$\calC(c \rightarrow \calD)=\calC(\calR')$ simply by substituting
$Q_i$ for $c$ in $R_i$, for $i=1, \dots, n$. This intuition is not
entirely correct: without additional precautions, we may introduce
``parasite'' clones, i.e., clones that cross the boundary between $C$
and $D$.  However, we can construct $\calR'$, containing $n$
preference orders, from $\calR$ and $\calQ$ by tweaking this
construction slightly.

\begin{proposition}
\label{pro:embedding}
Let $\calC$ and $\calD$ be two clone structures over sets $C$ and $D$,
respectively, where $|C|=m$, $|D|=k$, and $C \cap D= \emptyset$.  Then
for each $c\in C$, the family of subsets $\calC(c \rightarrow \calD)$
is a clone structure.
\end{proposition}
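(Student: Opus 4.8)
The plan is to construct, from profiles $\calR=(R_1,\dots,R_n)$ implementing $\calC$ on $C$ and $\calQ=(Q_1,\dots,Q_n)$ implementing $\calD$ on $D$, a new profile $\calR'=(R'_1,\dots,R'_n)$ on $(C\setminus\{c\})\cup D$ with $\calC(\calR')=\calC(c\to\calD)$. We may assume $\calR$ and $\calQ$ have the same number $n$ of voters: if not, duplicate voters in the shorter profile (repeating a linear order does not change its clone structure). The naive attempt is to set $R'_i$ to be $R_i$ with the slot occupied by $c$ replaced by the block $Q_i$ of the $D$-candidates. First I would verify that every set in $\calC(c\to\calD)$ is a clone set of $\calR'$: a set $X\subseteq D$ is a clone set because it was one in $\calQ$ and the $D$-block sits contiguously inside each $R'_i$; a set of the form $(Y\setminus\{c\})\cup D$ with $Y\ni c$ a clone set of $\calR$ is a clone set because the $D$-block moves as a unit exactly where $c$ used to sit; and a set $Z\subseteq C\setminus\{c\}$ that is a clone set of $\calR$ not containing $c$ remains one, since replacing $c$ by a block outside $Z$ does not affect the relative position of $Z$ against the rest. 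So $\calC(c\to\calD)\subseteq\calC(\calR')$.

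The hard part — and the reason the proposition says "tweaking this construction slightly" — is the reverse inclusion: ruling out "parasite" clones that straddle the boundary, i.e. clone sets $W$ of $\calR'$ with $\emptyset\neq W\cap D\neq D$ and $W\not\subseteq D$. Such a $W$ would force, in every $R'_i$, the $W\cap D$ part of the $D$-block to be adjacent to some $C$-candidates while the rest of the $D$-block is pushed away — this can happen if the blocks $Q_i$ "line up" at their endpoints across all voters in a way that mimics a clone. The fix is to choose the profiles more carefully so that the endpoints of the $D$-block are "anchored": concretely, I would add to $\calR$ and $\calQ$ one or two extra voters (an extra pair of reversed copies of a fixed linear order, say) so that in the resulting profile the top element and bottom element of $D$ each appear, across the voters, on both sides of the rest of $D$; equivalently, arrange that $D$ itself is the only clone set of $\calQ$ meeting both the "top" and "bottom" of $D$. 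With such an anchoring, any candidate $a\in C\setminus\{c\}$ that is adjacent to the $D$-block in some $R'_i$ is adjacent to a different endpoint of the block in a reversed voter, so $\{a\}\cup(\text{part of }D)$ cannot be a clone set unless it contains all of $D$; this collapses the straddling case back to a clone set of $\calR$ containing $\{c\}$, and the purely-$D$ case back to a clone set of $\calQ$. One must check this anchoring modification does not disturb $\calC(\calR)=\calC$ or $\calC(\calQ)=\calD$ — since we only add reversed pairs of the identity order, which is the "string of sausages" whose clone sets are all intervals, intersecting with the existing clone structures leaves them unchanged.

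I would then assemble the argument: (i) reduce to equal numbers of voters and perform the anchoring modification, noting it preserves both clone structures; (ii) define $\calR'$ by block substitution; (iii) prove $\calC(c\to\calD)\subseteq\calC(\calR')$ by the three-case check above; (iv) prove $\calC(\calR')\subseteq\calC(c\to\calD)$ by taking an arbitrary clone set $W$ of $\calR'$, distinguishing $W\subseteq D$, $W\cap D=\emptyset$, and $W$ straddling, and using the anchoring to eliminate the straddling case and to show the other two cases pull back to clone sets of $\calQ$ and of $\calR$ respectively. The main obstacle is step (iv)'s straddling case; everything else is bookkeeping. Finally, observe that the construction uses only $n$ (or $n+O(1)$, still $\poly$) preference orders, as claimed.
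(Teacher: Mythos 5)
Your overall plan (block substitution, then eliminate the boundary-straddling ``parasite'' clones by introducing some reversal) is the same as the paper's, and steps (i)--(iii) are fine; the gap is in the mechanism you propose for step (iv). First, adding to the profile a voter together with its \emph{full} reverse can never eliminate a parasite: a set is contiguous in a linear order if and only if it is contiguous in the reversed order (cf.\ Proposition~\ref{pro:reversing}), so if the reversed pair added to $\calR$ is paired with the reversed pair added to $\calQ$ in the natural way, the two new combined orders are exact reverses of each other and exclude nothing. Second, the anchoring condition you state is not sufficient. Take $C=\{a,c\}$ with $R_1: a\succ c$, $R_2: c\succ a$, and $D=\{d_1,d_2,d_3\}$ with $Q_1: d_1\succ d_2\succ d_3$, $Q_2: d_3\succ d_2\succ d_1$. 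The substituted orders are $a\succ d_1\succ d_2\succ d_3$ and $d_3\succ d_2\succ d_1\succ a$; here both $d_1$ and $d_3$ appear on both sides of the rest of $D$ across the voters, and $D$ is the only clone set of $\calQ$ containing both $d_1$ and $d_3$, yet $\{a,d_1\}$ is a parasite clone. The property your argument actually needs --- that a $C$-candidate adjacent to the $D$-block in one voter faces the \emph{other} endpoint of the block in another voter --- cannot be achieved by whole-order reversals; it requires reversing the $D$-block \emph{relative to} the surrounding $C$-candidates inside a single combined order, and neither your recipe nor your stated condition guarantees this.

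That relative reversal is exactly the paper's ``tweak'': keep $R'_i$ equal to $R_i$ with $Q_i$ substituted for $i<n$, but in the last voter substitute the reversed block $\ola{Q_n}$ instead of $Q_n$ (this is only needed if the naive profile already has a parasite). Then any clone of the new profile meeting both $C$ and $D$ but not containing all of $D$ would have to include an end-segment of the block at the $d_1$-end in the first voter and at the $d_k$-end in the last one, and a short case analysis on whether it contains $d_1$ or $d_k$ gives a contradiction; moreover the number of voters stays at $n$, which the paper later exploits for Theorem~\ref{thm:3rep}. A secondary slip in your write-up: adding a reversed pair of an \emph{arbitrary} fixed linear order does change the clone structure in general, since it intersects it with the interval family of that order; if you add voters at all, they must be copies or reverses of orders already consistent with every existing clone set (e.g.\ duplicates of $R_1$, resp.\ $Q_1$), not of an arbitrary order.
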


\begin{theorem}\label{thm:char-main}
  A family $\calF$ of subsets of $[m]$ is a clone structure if and only
  if it satisfies conditions A1--A5.
\end{theorem}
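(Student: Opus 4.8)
The plan is to prove Theorem~\ref{thm:char-main} by combining the necessity direction, which is already essentially done, with an inductive sufficiency argument built on the structural tools developed in this section. For necessity, Propositions~\ref{pro:inclusive}, \ref{pro:exclusive1}, and~\ref{pro:exclusive2} collectively state that every clone structure satisfies A1--A5, so there is nothing left to do there beyond citing them. The substance is the converse: every family $\calF$ of subsets of $[m]$ satisfying A1--A5 is implementable, i.e., equals $\calC(\calR)$ for some profile $\calR$ on $[m]$.

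For sufficiency, I would induct on $m=|F|$. The base case $m=1$ is trivial: $\calF=\{\{1\}\}$ by A1, and the single-voter profile implements it. For the inductive step, take $\calF$ satisfying A1--A5 on $F=[m]$ with $m\ge 2$. If $\calF$ is irreducible, then Theorem~\ref{thm:irreducible} tells us $\calF$ is a string of sausages or a fat sausage, and Example~\ref{exm:string} exhibits explicit profiles implementing each of these, so we are done. If $\calF$ is reducible, fix a proper subfamily $\calE\subseteq\calF$ with support $E\subsetneq F$. Pick a fresh element $b\notin F$ and form the collapsed family $\calB=\calF(\calE\rightarrow b)$ on the ground set $(F\setminus E)\cup\{b\}$, which has strictly fewer than $m$ elements and (as noted in the text, by the argument analogous to Proposition~\ref{prop:compose}) satisfies A1--A5. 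Also $\calE$ itself, as a subfamily, satisfies A1--A5 and lives on $E$ with $|E|<m$. By the induction hypothesis both $\calB$ and $\calE$ are clone structures. Since $\calF={\calB}(b\rightarrow\calE)$ by the very definition of collapsing, Proposition~\ref{pro:embedding} (applied with $\calC=\calB$, the candidate $b\in\calB$'s ground set, and $\calD=\calE$) yields that $\calF$ is a clone structure. This closes the induction.

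One point that needs a little care: to invoke the inductive hypothesis I should phrase it over an arbitrary $m$-element ground set, not literally $[m]$, since $\calB$ lives on $(F\setminus E)\cup\{b\}$ and $\calE$ lives on $E$; but clone structures are clearly invariant under relabeling of candidates (applying a bijection to every preference order), so this is a harmless reindexing. I would state this relabeling-invariance explicitly as a one-line remark before starting the induction. I also need the existence of at least one proper subfamily in the reducible case — but that is exactly the definition of reducible (not irreducible), so nothing is owed there.

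The main obstacle is not in this final assembly, which is genuinely short given the machinery; the real work has already been front-loaded into Theorem~\ref{thm:irreducible} and Proposition~\ref{pro:embedding}. The only place where I must be vigilant is ensuring that every object fed into the induction genuinely satisfies A1--A5 and genuinely has a smaller ground set: the claim that a subfamily $\calE$ inherits A1--A5 and the claim that $\calB=\calF(\calE\rightarrow b)$ inherits A1--A5 are both asserted in the text as ``one can check'' / ``the proof is similar to Proposition~\ref{prop:compose},'' so if a referee wanted more, the burden would fall on spelling out those two inheritance checks (routine case analyses on how an intersecting or nesting pair of sets can sit relative to $E$). Assuming those, the theorem follows immediately by the induction above.
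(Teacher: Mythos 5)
Your proposal is correct and follows essentially the same route as the paper's own proof: induction on the number of candidates, with Theorem~\ref{thm:irreducible} handling the irreducible case and collapsing a proper subfamily plus Proposition~\ref{pro:embedding} handling the reducible case, relying on the (text-asserted) facts that subfamilies and collapsed families inherit A1--A5. Your explicit remarks on relabeling invariance and on the strict size decrease after collapsing are harmless refinements the paper leaves implicit.
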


Based on Theorem~\ref{thm:char-main}, it is easy to test
in polynomial time (and, in fact, even in logarithmic space) 
if a given set family 
(represented explicitly as a list of subsets) is a clone
structure: one simply needs to check if all the axioms hold.
We state this result more formally in Appendix~\ref{app:poly}.

\section{Compact Representations of Clone Structures}
\label{sec:representation}

Let us now consider the issue of representing clone structures.
We say that a clone structure 
$\calC$ is {\em $k$-implementable} if there is a $k$-voter profile
$\calR$ such that $\calC=\calC(\calR)$.
One might expect that to obtain a complex clone structure we
need an election with many voters. Yet,
each clone structure can be implemented by a profile with at
most three voters. 

\begin{theorem}\label{thm:3rep}
  Any clone structure is $3$-implementable.
\end{theorem}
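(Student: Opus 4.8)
The plan is to argue by induction on the structure of a clone structure $\calC$, using the recursive decomposition implicit in Theorem~\ref{thm:irreducible}, Proposition~\ref{pro:embedding}, and the subfamily/collapsing machinery. Concretely, I would set up an induction on $|C|$ (or equivalently on the number of nodes in the associated decomposition tree). In the base case, $\calC$ is irreducible, so by Theorem~\ref{thm:irreducible} it is either a string of sausages or a fat sausage; for these I exhibit explicit three-voter (in fact, often one- or two-voter) implementations directly, as in Example~\ref{exm:string}. For the inductive step, if $\calC$ is reducible, pick a proper subfamily $\calE$ with support $E$, collapse it to a fresh element $b$ to obtain $\calB=\calC(\calE\rightarrow b)$, so that $\calC=\calB(b\rightarrow\calE)$ with both $\calB$ and $\calE$ being clone structures on strictly smaller ground sets; by the inductive hypothesis each has a three-voter implementation.

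The heart of the argument is then to show that embedding preserves $3$-implementability: if $\calB$ is implemented by a three-voter profile $\calR=(R_1,R_2,R_3)$ on $B$ and $\calE$ is implemented by a three-voter profile $\calQ=(Q_1,Q_2,Q_3)$ on $E$, then $\calB(b\rightarrow\calE)$ has a three-voter implementation on $(B\setminus\{b\})\cup E$. The naive idea---substitute $Q_i$ into $R_i$ in place of $b$---is exactly the construction flagged as subtly wrong in the discussion preceding Proposition~\ref{pro:embedding}: it can create ``parasite'' clones straddling the boundary of $E$. The fix, which I would follow Proposition~\ref{pro:embedding}'s idea for, is to cyclically permute which copy of $\calQ$ is substituted into which copy of $\calR$ (e.g.\ substitute $Q_1$ into $R_1$, $Q_2$ into $R_2$, but $Q_3$ reversed, or rotate the index, so that no single subinterval of the $E$-block lines up consistently across all three voters unless it was already a clone set in $\calE$). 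One must check two things: (a) every clone set of $\calC$ is realized---clone sets inside $E$ survive because the $Q_i$'s still implement $\calE$ up to reversal (and reversal does not change the clone structure), clone sets disjoint from $E$ survive from $\calR$, and clone sets containing $E$ correspond to clone sets of $\calB$ containing $b$; and (b) no spurious clone set is created---any set $X$ intersecting $E$ non-trivially that were a clone set would force $X\cap E$ to be a common interval in all three permuted orders, which the rotation/reversal trick rules out, and any set containing $E$ properly reduces to a clone set of $\calB$.

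The main obstacle is precisely step (b): verifying that the boundary between the two blocks is ``scrambled enough'' by three voters that no parasite clone survives, while simultaneously being ``aligned enough'' that the genuine clone sets containing $E$ are preserved. This is a case analysis on how a putative clone set $X$ can meet $E$ and $B\setminus\{b\}$, and it is where the number three (rather than two) is essential---two voters can always be made to agree on an interval by reversing one of them, but with three voters one can break any unwanted alignment. I expect the cleanest route is to prove the embedding-preserves-$3$-implementability statement as a standalone lemma (mirroring Proposition~\ref{prop:compose} and Proposition~\ref{pro:embedding}), handling the reversal bookkeeping carefully, and then the theorem follows immediately by the induction sketched above; alternatively, one can simply note that this lemma is essentially the content of the proof of Proposition~\ref{pro:embedding} with the extra observation that the constructed profile $\calR'$ has exactly $3$ voters whenever $\calR$ and $\calQ$ do.
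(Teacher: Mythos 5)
Your plan is essentially the paper's own proof: induct on the decomposition, handle the irreducible base cases (strings of sausages and fat sausages) by explicit small profiles, and observe that the embedding construction of Proposition~\ref{pro:embedding} (substitute $Q_i$ into $R_i$ and, if parasite clones appear, reverse the last $Q$) preserves the number of voters --- exactly the argument of Theorem~\ref{thm:b:2} in the appendix, from which Theorem~\ref{thm:3rep} follows. Two small points to tighten: Example~\ref{exm:string} implements the fat sausage with $m$ voters, so the base case still requires the explicit two-voter interleaving profile for fat sausages with $m\ge 4$ and a three-voter profile for $m=3$ (Proposition~B.1), and it is this $3$-candidate fat sausage --- not any boundary-scrambling in the embedding step, which already works with two voters --- that forces the bound of three.
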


Nonetheless, we would like a more structured representation. 
 In the
previous section we have seen that clone structures are organized
hierarchically and, thus, it is natural to represent them using
trees. The specific type of trees that are most convenient for this
task are PQ-trees introduced by Booth and
Lueker~\cite{boo-lue:j:consecutive-ones-property}.
A PQ-tree $T$ over a set $A = \{a_1, \ldots, a_n\}$ is an ordered tree 
that represents a family of permutations over $A$ as follows.  
The leaves of the tree correspond to the elements of $A$.
Each internal node is either of type P or of type Q. 
A {\em frontier} of $T$ is a permutation of $A$ obtained by reading 
the leaves of $T$ from left to right (recall that $T$ is ordered). 
The following operations are
allowed on the tree: If a node is of type P, then its children can be
permuted arbitrarily. If a node is of type Q, then the order of its
children can be reversed. A given permutation $\pi$ of $A$ is {\em  consistent} 
with a PQ-tree $T$, if we can obtain $\pi$ as the frontier of $T$
by applying the above operations.

We now describe a natural way to represent clone structures as
PQ-trees. Consider a clone structure $\calC$ over a finite set $C$.
Our characterization of irreducible clone structures implies that any
two proper irreducible subfamilies of $\calC$ have non-intersecting
supports.

\begin{proposition}\label{prop:unique}
  Let $\calC$ be a clone structure over a finite set $C$, and let
  $\calB$ and $\calD$ be two proper irreducible subfamilies of $\calC$
  on sets $B\subseteq C$ and $D\subseteq C$, respectively.  Then
  $B\cap D=\emptyset$.
\end{proposition}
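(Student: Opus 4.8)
The plan is to argue by contradiction: assume $B \cap D \neq \emptyset$ and derive a violation of the defining properties of a subfamily (Definition~\ref{def:subfamily}). First I would recall that $\calB$ and $\calD$ being proper irreducible subfamilies means $\calB = \{X \in \calC \mid X \subseteq B\}$ and $\calD = \{X \in \calC \mid X \subseteq D\}$, that $B, D \in \calC$ are themselves proper subsets of $C$, and that the irreducibility hypothesis (combined with Theorem~\ref{thm:irreducible}) tells us each of $\calB, \calD$ is either a string of sausages or a fat sausage on its support.

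The key case analysis splits on whether $B \bowtie D$ or one contains the other. If $B$ and $D$ intersect non-trivially, then by Proposition~\ref{pro:inclusive}(3) the set $B \cap D$ lies in $\calC$; since $B \cap D \subseteq B$, it lies in $\calB$, and since $B \cap D \subseteq D$, it lies in $\calD$. Now $B\cap D$ is a nonempty proper subset of $B$ (as $B \setminus D \neq \emptyset$), so it is a set in $\calB$ other than $\{b\}$-type singletons and other than $B$ itself — but more to the point, condition (ii) of Definition~\ref{def:subfamily} applied to the subfamily $\calD$ with support $D$ forces every set in $\calC$ to either contain $D$, be disjoint from $D$, or be contained in $D$; applying this to $B$ (which is in $\calC$, does not contain $D$, is not disjoint from $D$) yields $B \subseteq D$, contradicting $B \bowtie D$. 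So the non-trivial-intersection case is immediately impossible, and by symmetry we may assume, say, $D \subsetneq B$ after noting the sets are not equal (equality would make $\calB = \calD$, and then... actually one should check whether $B = D$ is excluded; if $\calB$ and $\calD$ are given as two distinct subfamilies with the same support they'd be equal as families, so one may assume $B \neq D$).

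It remains to rule out proper containment, say $D \subsetneq B$. Here I would use condition (i) for $\calB$: since $\calB = \{X \in \calC \mid X \subseteq B\}$ and $D \subseteq B$ with $D \in \calC$, we get $D \in \calB$; moreover every subset of $D$ lying in $\calC$ also lies in $\calB$, so $\calD \subseteq \calB$. Thus $\calD$ would be a proper subfamily of $\calB$ (its support $D$ is a proper subset of $B$) — but $\calB$ is \emph{irreducible}, meaning it has no proper subfamilies, a contradiction. One subtlety to verify: that $\calD$, viewed inside $\calB$, genuinely satisfies clauses (i) and (ii) of the subfamily definition \emph{relative to $\calB$}, not just relative to $\calC$. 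Clause (i) is inherited since $\{X \in \calB \mid X \subseteq D\} = \{X \in \calC \mid X \subseteq D\} = \calD$ (as $D \subseteq B$). Clause (ii) relative to $\calB$ follows from clause (ii) relative to $\calC$: any $X \in \calB \setminus \calD$ is in $\calC \setminus \calD$, hence contains $D$ or is disjoint from $D$.

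The main obstacle I anticipate is the bookkeeping around clause (ii) and making sure the irreducibility of $\calB$ is applied to a bona fide proper subfamily in the precise sense of Definition~\ref{def:subfamily} — in particular confirming that "proper subfamily of $\calC$ with support $D \subsetneq B \subseteq C$" upgrades correctly to "proper subfamily of $\calB$". A secondary point is handling the degenerate situation $B = D$ cleanly; presumably the statement implicitly intends $\calB \neq \calD$ (two genuinely distinct subfamilies), and one short remark settles it since distinct subfamilies cannot share a support. Everything else is a direct application of Proposition~\ref{pro:inclusive} and the subfamily axioms, so no heavy computation is needed.
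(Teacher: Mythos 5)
Your proof is correct, and it takes a somewhat different route from the one the paper indicates. The paper gives no standalone proof of Proposition~\ref{prop:unique}; it presents it as a consequence of the characterization of irreducible families (Theorem~\ref{thm:irreducible}), i.e., of the fact that $\calB$ and $\calD$ must each be a string of sausages or a fat sausage. You never use that classification: you dispose of the nontrivial-overlap case purely with clause (ii) of Definition~\ref{def:subfamily} applied to the support $B\in\calC\setminus\calD$ (the detour through Proposition~\ref{pro:inclusive}(3) is superfluous), and you dispose of the nesting case $D\subsetneq B$ by upgrading $\calD$ to a proper subfamily of $\calB$ and invoking the bare definition of irreducibility; your verification of clauses (i) and (ii) relative to $\calB$ is exactly the bookkeeping needed there, and excluding $B=D$ via ``distinct subfamilies have distinct supports'' is fine. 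The only detail worth adding is that $\calD$ is a \emph{proper} subfamily of $\calB$ in the paper's sense because the paper's notion of proper subset requires $1<|D|<|B|$, and $|D|>1$ is guaranteed since $\calD$ is a proper subfamily of $\calC$. Your argument is more elementary and applies to any set family equipped with the subfamily notion, with no appeal to axioms A1--A5; the classification-based route the paper has in mind buys nothing beyond the fact that, once Theorem~\ref{thm:irreducible} is available, the nesting case can alternatively be settled by inspecting strings and fat sausages directly.
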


Proposition~\ref{prop:unique} implies that every element of $C$
belongs to at most one proper irreducible subfamily of $\calC$.  Thus,
given a clone structure $\calC\subseteq 2^C$, there is a unique maximal
collection of pairwise disjoint sets $\Dec(C) = \{C_1, \dots, C_k\}$
such that $C_i\subseteq C$, $|C_i|\ge 2$, and for each $i=1, \dots, k$
the set family $\calC_i = \{C\in\calC\mid C\subseteq C_i\}$ is an
irreducible subfamily of $\calC$ (if $\calC$ is itself irreducible,
then $k=1$ and $C_1=C$). This collection can be efficiently constructed
by identifying the minimal (with respect to inclusion) non-singleton sets in
$\calC$: any such set of size $s\ge 3$ is itself an irreducible
clone structure (a fat sausage), and for a set of size $s=2$ we need to
find the maximal string of sausages that contains it. Note that it
need not be the case that $\bigcup_{i=1}^k C_i=C$; some elements may not
belong to any proper irreducible clone structure (consider, for
instance, the clone structure over $\{a, b, c, d\}$ given by $\{\{a\},
\{b\},\{c\},\{d\},\{b, c\},\{a, b, c, d\}\}$).  We will refer to the
collection $\Dec(C)$ as the {\em decomposition} of $\calC$.

\begin{wrapfigure}{r}{3.2cm}
\vspace{-0.5cm}
\centering
  \resizebox{3cm}{!}{\input{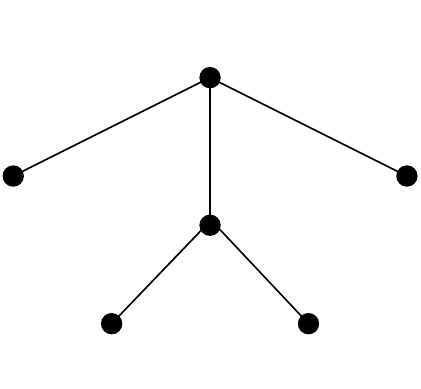_t}}
  \caption{\label{fig:pqtree} Tree representation of the embedded
  clone structure from Example~\ref{exm:embed}.}
  \vspace{-0.3cm}
\end{wrapfigure}

We can now inductively define a PQ-tree $T(\calC)$ associated with a
clone structure $\calC\subseteq 2^C$ (for convenience, our PQ-trees
will be labeled).
Suppose first that $\calC$ is an
irreducible clone structure over the set $C=\{c_1, \dots, c_m\}$.
Then, by Theorem~\ref{thm:irreducible}, it is either a string of
sausages or a fat sausage. In the former case, assume without loss of
generality that $\calC$ is associated with the order $c_1\succ
c_2\succ \ldots\succ c_m$, i.e., it contains sets $\{c_i, c_{i+1}\}$
for $i=1, \dots, m-1$.  In both cases, we let $T(\calC)$ to be a tree
of depth $1$ that has $m$ (ordered) leaves. The $i$-th leaf is labeled
by $c_i$.  If $\calC$ is a string of sausages, the root of the tree is
of type Q and is labeled by $c_1\oplus\ldots\oplus c_m$; if $\calC$ is
a fat sausage, the root is of type P and is labeled by
$c_1\odot\ldots\odot c_m$. For $m=2$ the clone structure $\calC$
is both a string of sausages and a fat sausage; we treat it
as a fat sausage.

Now, if $\calC$ is reducible, we compute its decomposition
$\Dec(C)=\{C_1, \dots, C_k\}$.  For $i=1, \dots, k$, we set $\calC_i =
\{X\in\calC\mid X\subseteq C_i\}$, pick $c^1, \ldots, c^k\not\in C$,
and let $\calC'$ be the set family on the set $C'=(C\setminus
\bigcup_{i=1}^k C_i)\cup\{c^1, \dots, c^k\}$ given by
$
\calC'=\calC(\calC_1\rightarrow c^1, \ldots, \calC_k\rightarrow c^k). 
$
We then construct the tree $T(\calC')$.  This tree has leaves labeled
by $c^1, \dots, c^k$.  We replace each such leaf $c^i$ by the labeled
tree $T(\calC_i)$ for the irreducible set family $\calC_i$.

Given
the tree $T(\calC)$, we can reconstruct the clone structure $\calC$ in
an obvious way.
To illustrate this discussion, in Figure~\ref{fig:pqtree} we give a
PQ-tree for the clone structure from Example~\ref{exm:embed}.
We remark that the descendants of any internal node of $T(\calC)$ 
form a clone set. However, the converse is not necessarily true, i.e., 
there are clone sets that cannot be obtained in this way: 
if an internal node $v$ is labeled with a string of sausages and has $k$
children, $k\ge 3$, the descendants of any $\ell$ consecutive children
of $v$, $\ell<k$, form a clone set. Indeed, it is not hard to see
that any clone set corresponds either to a subtree of $T(\calC)$
or to a collection of subtrees of $T(\calC)$ 
whose roots are consecutive children of the same Q-node.

\section{Clones in Single-Peaked Elections}\label{sec:sp}
It is not unusual for voters to make 
their decisions based on the candidates' position 
on a single prominent issue, such as, e.g., the
level of taxation. Elections where all voters maike their decisions
in this way (with respect to the same issue) are known as {\em single-peaked}.
In such elections, the candidates are ordered with respect to their position on
the issue. This ordering is called \emph{the societal axis}; it can,
for example, order the candidates from those supporting the lowest level of taxation
to those supporting the highest level of taxation. Each voter $v$ forms her
preference order as follows. First, $v$ picks the candidate 
who is the closest to her ideal point on the societal axis. She then ranks
the remaining candidates according to her perceived distance
to the ideal point. The perception of distance may differ from one voter
to another: some voters may view a large deviation to the right
as less significant than a small deviation to the left, while others
may hold the opposite view. Thus, the voter will zig-zag
through the candidate list, ending her ranking with either the leftmost 
or the rightmost candidate.
For example, if the possible tax rates are $10\%$, $15\%$, $20\%$, and
$25\%$, and an election is single-peaked with respect to the axis
$10\% > 15\% > 20\% > 25\%$, a voter's  preference order 
may be, e.g., $20\% \succ 15\% \succ 10\% \succ 25\%$, but not 
$20\% > 10\% > 15\% > 25\%$.
Formally, a single-peaked preference profile is defined as follows.

\begin{definition}
  Let $\calR = (R_1, \ldots, R_n)$ be a profile over a candidate set
  $C$, and let $>$ be a linear order over $C$ (the societal axis).  We
  say that an order $\succ$ over $C$ is {\em compatible} with $>$ if
  for all $c,d,e \in C$ such that either $c > d > e$ or $e > d > c$, it holds that
  $c \succ d \implies d \succ e$.
  We say that $\calR$ is {\em single-peaked with respect to $>$}
  if each preference order in $\calR$ is compatible with $>$.
  A profile $\calR\in\calL^n(C)$ is called {\em single-peaked} if
  there exists a linear order over $C$ such that $\calR$ is
  single-peaked with respect to $>$; we say that $>$ {\em witnesses}
  single-peakedness of $\calR$.
\end{definition}

The literature on single-peaked elections is vast; 
for examples and intuition 
we point the
reader to the original paper of Black~\cite{bla:b:polsci:committees-elections}, 
which introduced this notion, and, for a more
algorithmic perspective, to some recent computational social choice 
papers~\cite{wal:c:uncertainty-in-preference-elicitation-aggregation,esc-lan-ozt:c:single-peaked-consistency,con:j:eliciting-singlepeaked,bra-bri-hem-hem:c:sp2,fal-hem-hem-rot:j:single-peaked-preferences}.

There are several reasons why single-peaked elections received so much
attention; the notion of single-peakedness is very natural, 
and indeed quite a few real-life elections are (close to) single-peaked. Further,
single-peaked elections have many desirable properties, of which
perhaps the best-known one is that they admit non-manipulable, non-dicatorial voting
rules (e.g., the median voter
rule~\cite{bla:b:polsci:committees-elections}).  This is a very
attractive property, which distinguishes single-peaked elections 
from those with unrestricted preferences 
(see~\cite{gib:j:polsci:manipulation,sat:j:polsci:manipulation}).

Unfortunately, if a candidate in a single-peaked election is cloned,
the election may lose the single-peakedness property: if many voters
find the clones very similar, they are likely to rank them randomly,
which may be incompatible with any societal axis. Thus, given an
election, we might want to check if it can be made single-peaked by
``decloning.'' Of course, we would like the resulting election to be
close to the original one.  Thus, we look for a single-peaked election
that ``collapses'' as few clones as possible, i.e., has the maximum
number of alternatives among all single-peaked elections that can be
obtained from the original one by decloning. The main result of this
section is a polynomial time algorithm for this problem.  Our
secondary goal is to understand which clone structures can arise in
single-peaked elections. We give a partial answer by, on one hand,
providing an example of a clone structure that cannot be implemented
by a single-peaked profile, and, on the other hand, identifying a
large family of clone structures that admit such an implementation.
\smallskip

\noindent\textbf{Preliminary Observations.}\quad
Let $\calR = (R_1, \ldots, R_n)$ be a preference profile over a
candidate set $C$, and let $D$ be some clone set in
$\calC(\calR)$. Given a $c\not\in C$, we write $\calR(D \mapsto c)$ to
denote the profile $\calR' = (R'_1, \ldots, R'_n)$, where each $R'_i
\in \calR'$ is obtained from $R_i$ by replacing the block of
candidates from $D$
with $c$.  We refer to the process of converting $\calR$ to $\calR(D
\mapsto c)$ as \emph{decloning $D$ to $c$ in $\calR$}. 
Note that we can declone $D$ even if the collection of subsets $\calD =
\{X\in\calC(\calR)\mid X\subseteq D\}$ is not a subfamily of
$\calC(\calR)$ (this is the only difference between decloning operation
$\mapsto$ and the collapsing operation $\rightarrow$).
Given a 
preference profile $\calR$ over a set of candidates $C$,
we let $c(\calR)=|C|$.

Given a profile $\calR = (R_1, \ldots, R_n)$, we refer to the
top-ranked candidate in $R_i\in\calR$ as the \emph{peak of $R_i$}, and
denote it by $\peak(R_i)$. We write $\peak(\calR)$ to denote the set
$\{ \peak(R_i) \mid R_i \in \calR\}$.  (Note that if $\calR$ is
single-peaked, this does not imply that $|\peak(\calR)|=1$:
intuitively, the term ``single-peaked'' refers to the shape of
individual preference orders with respect to a given societal axis.)
Theorem~\ref{thm:unique} and surrounding discussion in the appendix
clarifies the relation between possible societal axes and the set of
peaks of a profile.

We are ready to start our investigation of clones in single-peaked
profiles.  Consider a preference profile $\calR$ over $C$ that is
single-peaked  with respect to some order $>$, and let $D$ be a clone set with
respect to $\calR$. Do members of $D$ appear consecutively in $>$? Not necessarily
(take two votes, $b\succ c \succ a \succ d$ and $c\succ b \succ a \succ d$,
and axis $a > b > c > d$; $\{a,d\}$ is a clone even though $a$ and $d$ are
not consecutive in $>$), but they form at most two blocks within $>$.

\begin{proposition}
\label{thm:second-variety}
Let $\calR = (R_1, \ldots, R_n)$ be a preference profile over a
candidate set $C$ that is single-peaked  with respect to some order $>$, 
and let $D\in\calC(\calR)$ be a clone set for $\calR$ with $|D|\ge 2$.
Then $C$ can be partitioned
into pairwise disjoint sets $A_1$, $A_2$, $D_1$, $D_2$, and $P$ so that 
$C\setminus D=A_1\cup P\cup A_2$, $D=D_1\cup D_2$, 
$D_1\neq\emptyset$, $D_2\neq\emptyset$, and
$
A_1 > D_1 > P > D_2 > A_2.
$
Further, if $P\neq\emptyset$, then $\peak(\calR)\subseteq P$, and, moreover,  
$P \succ_i D \succ_i A_1 \union A_2$ for each $i=1, \dots, n$.
\end{proposition}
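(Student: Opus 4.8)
The plan is to lean on the standard ``prefix'' characterization of single\-/peakedness. For a voter $i$, let $S_i(t)$ denote the set of the $t$ top\-/ranked candidates of $R_i$. First I would record the elementary fact that compatibility of $R_i$ with $>$ implies that $S_i(t)$ is an interval of the axis $>$ for every $t$: if $S_i(t)$ were not an interval, there would be $x,z\in S_i(t)$ and some $y\notin S_i(t)$ lying strictly between $x$ and $z$ on the axis, and then $x\succ_i y$ (since $x\in S_i(t)$, $y\notin S_i(t)$) together with compatibility forces $y\succ_i z$, contradicting $z\succ_i y$; the converse direction is equally routine, but only this one is needed. I would also recall that $D\in\calC(\calR)$ means exactly that $D$ is a contiguous block in every $R_i$.

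Next comes the geometric core. Fix a voter $i$ and let $t_0$ be the number of candidates that $R_i$ ranks above the block $D$; then $D=S_i(t_0+|D|)\setminus S_i(t_0)$ is the set\-/difference of two nested intervals of the axis, and such a difference $I_2\setminus I_1$ is always a union of at most two intervals (the part of $I_2$ above $I_1$ and the part below it). Hence $D$ is a union of at most two maximal blocks of $>$. This yields the partition at once: if $D$ is a single axis\-/block, split it arbitrarily into nonempty $D_1,D_2$ with $D_1>D_2$ and set $P=\emptyset$; otherwise let $D_1$ be the upper block of $D$ and $D_2$ the lower one, let $P$ be the (necessarily nonempty) set of candidates of $C\setminus D$ lying strictly between $D_1$ and $D_2$ on the axis, and let $A_1$ (resp.\ $A_2$) be the candidates of $C\setminus D$ above $D_1$ (resp.\ below $D_2$). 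In both cases $C\setminus D=A_1\cup P\cup A_2$ and $A_1>D_1>P>D_2>A_2$, as required.

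For the ``further'' clause, suppose $P\neq\emptyset$, so $D$ genuinely splits into the two axis\-/blocks $D_1,D_2$ with the nonempty axis\-/hole $P$ between them; these are now fixed global sets. Fix any voter $i$ and write again $D=A\setminus B$ with $A=S_i(t_0+|D|)$ and $B=S_i(t_0)$ nested axis\-/intervals. I would then argue that the two pieces of $A\setminus B$ (the part of $A$ above $B$ and the part below $B$) coincide with $D_1$ and $D_2$: both pieces are nonempty (else $D$ would be a single block, contradicting $P\neq\emptyset$), each is a maximal run inside $D$ because the positions immediately flanking it lie in $B$ or outside $A$, so the unique maximal\-/block decomposition of $D$ forces them to be $D_1$ and $D_2$, with the axis orientation fixing which is which. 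Consequently $B=P$ and $A=D_1\cup P\cup D_2=D\cup P$. Thus $R_i$ ranks exactly $P$ first, then all of $D$, then $C\setminus(D\cup P)=A_1\cup A_2$, i.e.\ $P\succ_i D\succ_i A_1\cup A_2$; and since $|B|=|P|=t_0\ge 1$ we get $\peak(R_i)\in S_i(1)\subseteq S_i(t_0)=P$. Taking the union over all voters gives $\peak(\calR)\subseteq P$.

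The only mildly delicate point is this matching step: one must ensure that, for each voter, $A\setminus B$ does not degenerate to a single interval and that its two pieces line up with the globally defined $D_1,D_2$ in the correct order. This is precisely what the hypothesis $P\neq\emptyset$ provides — it forces $A\setminus B$ to be a genuine two\-/block set whose ``hole'' is exactly $P=B$ — and the above/below orientation of the axis pins down the correspondence; everything else is routine bookkeeping with intervals on a line.
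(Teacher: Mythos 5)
Your proof is correct, and it reaches the statement by a genuinely different route than the paper. You hinge everything on the prefix characterization of single-peakedness (every top-$t$ segment $S_i(t)$ of an order compatible with $>$ is an interval of the axis), so that contiguity of $D$ in a single vote exhibits $D$ as a difference of two nested axis intervals and hence a union of at most two axis blocks; the ``further'' clause then follows in one stroke from identifying the two pieces with $D_1,D_2$ and the prefix $B=S_i(t_0)$ with $P$, which gives $\peak(R_i)\in P$ and $P\succ_i D\succ_i A_1\cup A_2$ simultaneously for each voter. The paper argues directly from the compatibility definition instead: for the first claim it assumes the block structure fails, obtains $d_1>c_1>d_2>c_2>d_3$ with $c_1,c_2\notin D$, and derives a contradiction by locating the peak of a single voter relative to $d_2$; for the second claim it proves the three conclusions in separate steps (the peak must lie in $P$, else $D$ could not be contiguous; $P\succ_i D$ via compatibility applied to $d_1>p>d_2$ with $d_1\in D_1$, $d_2\in D_2$; and $D\succ_i A_1\cup A_2$ via contiguity). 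Your version buys uniformity---once $B=P$ is in place (which indeed holds: $A$ is an interval containing both blocks, so $P\subseteq A\setminus D=B$, while $B$ lies strictly between the blocks and is disjoint from $D$, so $B\subseteq P$; spelling out that one line would make the ``consequently'' airtight), the entire second part is immediate---at the price of first proving the interval-prefix lemma. The paper's version is more self-contained and keeps the voters' peaks in the foreground, a viewpoint it reuses in Corollary~\ref{cor:2type-nopeaks}, Corollary~\ref{cor:2types} and Lemma~\ref{prop:type1-strings}. Both arguments handle the degenerate case identically, splitting a single-block $D$ arbitrarily into two nonempty parts with $P=\emptyset$.
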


Proposition~\ref{thm:second-variety} motivates a very useful
classification of clone sets in single-peaked profiles.
  Let $\calR$ be a profile over some candidate set $C$ that is single-peaked
   with respect to some order $>$. If the elements of $D$  are ranked contiguously in $>$, 
  then we say that $D$ is a {\em clone set of the first type} with respect to $>$; 
  otherwise, we say that $D$ is a {\em clone set of the second type} with respect to $>$.
The following is an immediate corollary of the second part of 
Proposition~\ref{thm:second-variety}.

\begin{corollary}\label{cor:2type-nopeaks} 
  Let $\calR$ be a single-peaked preference
  profile. %
  If $D \in \calC(\calR)$ is a clone set of the second type
  w.r.t.~some societal axis $>$, then $D$ does not contain any peaks
  of $\calR$.
\end{corollary}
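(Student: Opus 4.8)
The plan is to read the conclusion essentially straight off Proposition~\ref{thm:second-variety}; the only thing that needs checking is that a clone set of the second type forces the ``middle block'' $P$ in that proposition to be non-empty. So first I would invoke Proposition~\ref{thm:second-variety}: since $D\in\calC(\calR)$ with $|D|\ge 2$ and $\calR$ is single-peaked with respect to $>$, there is a partition of $C$ into pairwise disjoint sets $A_1,A_2,D_1,D_2,P$ with $D=D_1\cup D_2$, $D_1\neq\emptyset$, $D_2\neq\emptyset$, $C\setminus D=A_1\cup P\cup A_2$, and $A_1 > D_1 > P > D_2 > A_2$.

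Next I would show $P\neq\emptyset$. Suppose not; then the displayed chain becomes $A_1 > D_1 > D_2 > A_2$, so every candidate of $C\setminus D$ is either ranked above all of $D$ (those in $A_1$) or below all of $D$ (those in $A_2$). Hence the members of $D=D_1\cup D_2$ occupy a contiguous segment of $>$, which means $D$ is a clone set of the \emph{first} type with respect to $>$, contradicting the hypothesis. Therefore $P\neq\emptyset$. (This is really just the observation that, in the partition supplied by Proposition~\ref{thm:second-variety}, ``$D$ is contiguous in $>$'' is equivalent to ``$P=\emptyset$'', which is precisely what that partition is built to capture.)

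Finally I would apply the ``Further'' part of Proposition~\ref{thm:second-variety}: since $P\neq\emptyset$, we have $\peak(\calR)\subseteq P$. Because the five sets in the partition are pairwise disjoint, $P$ is disjoint from both $D_1$ and $D_2$, so $P\cap D=\emptyset$, and therefore $\peak(\calR)\cap D=\emptyset$; that is, $D$ contains no peak of $\calR$. There is no genuine obstacle here: the argument is routine, and the only point requiring a moment's care is the equivalence between ``$D$ ranked contiguously in $>$'' and ``$P=\emptyset$'' used to rule out the first-type case.
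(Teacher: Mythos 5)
Your proposal is correct and follows exactly the route the paper intends: the paper states this as an immediate consequence of the second part of Proposition~\ref{thm:second-variety}, and your argument simply spells out the one small point left implicit there, namely that for a second-type clone set the partition $A_1 > D_1 > P > D_2 > A_2$ must have $P\neq\emptyset$ (otherwise $D$ would be contiguous in $>$), after which $\peak(\calR)\subseteq P$ and $P\cap D=\emptyset$ give the claim. No issues.
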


Observe that if $D$ is a clone set of the first type that does not contain
any peaks of $\calR$, then for each voter $i$ either $\peak(R_i)\succ_i D$, 
in which case $\succ_i$ coincides with $>$ on $D$, or $D\succ_i \peak(R_i)$, 
in which case $\succ_i$ coincides with $\ola{>}$ on $D$. Thus,
the following corollary.

\begin{corollary}\label{cor:2types}
  Let $\calR$ be a single-peaked preference profile over a candidate set $C$.
  If $D \in \calC(\calR)$ is a clone set of the first type with respect to some
  societal axis $>$, $|D|\ge 2$, and $D\cap\peak(\calR)=\emptyset$, 
  then $D$ is a string of sausages.
\end{corollary}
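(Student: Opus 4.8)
The plan is to show that no ``fat'' clone structure can occur inside $D$, i.e. that every minimal non-singleton clone set contained in $D$ has size exactly $2$, and then to chain these pairs together into a single linear order. By Corollary~\ref{cor:2type-nopeaks} and the observation immediately preceding the statement, since $D$ is of the first type and contains no peak of $\calR$, every voter $i$ restricted to $D$ ranks the members of $D$ either in the order induced by $>$ or in the order induced by $\ola{>}$. In other words, the subprofile $\calR|_D = (R_1|_D, \dots, R_n|_D)$ obtained by restricting each vote to $D$ consists solely of copies of $>$ and copies of $\ola{>}$ (a ``reversal profile'' on $D$).

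The first key step is to verify that the clone structure of $\calR$ restricted to $D$ coincides with the clone structure of the subprofile $\calR|_D$ on the ground set $D$: a subset $X \subseteq D$ is a clone set of $\calR$ if and only if it is a clone set of $\calR|_D$. One direction is immediate (restricting votes can only create, not destroy, clone sets among subsets of $D$); for the other direction one uses that $D$ itself is a clone set of $\calR$, so for $c, c' \in X \subseteq D$ and $a \in C \setminus X$, separating $a$ from $\{c,c'\}$ only matters for $a \in D \setminus X$, which is exactly the condition tested inside $\calR|_D$.

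The second key step is to compute the clone structure of a profile on $D$ consisting only of copies of a linear order $>$ and its reverse $\ola{>}$. If $D = \{d_1, \dots, d_\ell\}$ with $d_1 > d_2 > \cdots > d_\ell$, then a set $X$ is a clone set here iff $X$ is an interval $[d_j, d_k]$ of this order: intervals are clearly clones in every $>$-vote and every $\ola{>}$-vote, and conversely any non-interval $X$ is split by some $d_t \notin X$ with $d_t$ lying strictly between two elements of $X$ in $>$, so $d_t$ separates those two elements in the $>$-votes (and also in the $\ola{>}$-votes). Hence the clone structure on $D$ is exactly $\{[d_j,d_k] : j \le k\}$, which is precisely the string of sausages associated with the order $>$ on $D$. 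Combining with the first step, $\{X \in \calC(\calR) \mid X \subseteq D\}$ is a string of sausages, which is the claim.

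The main obstacle I anticipate is the ``conversely'' clause of the first step — making precise that clone sets of $\calR$ inside $D$ are detected already by the restricted profile — since one must carefully use the hypothesis that $D$ is itself a clone set of $\calR$ to handle separators $a \in C \setminus D$. A secondary subtlety is the degenerate case $|D| = 2$: then the single pair is trivially a string of sausages (and, as the paper notes, also a fat sausage), so the statement still holds.
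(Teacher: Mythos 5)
Your proof is correct and takes essentially the same route as the paper: the paper's justification is exactly the observation you invoke (each voter ranks $D$ according to $>$ or $\ola{>}$ because $D$ is contiguous in $>$ and contains no peak), and your two steps---reducing to the restricted profile $\calR|_D$ via the fact that $D$ is itself a clone set, and noting that the clone sets of a profile consisting of one order and its reverse are exactly its intervals---merely spell out what the paper leaves implicit in its ``thus, the following corollary.'' The only cosmetic slip is citing Corollary~\ref{cor:2type-nopeaks}, which concerns second-type clone sets and is not needed here since the hypothesis $D\cap\peak(\calR)=\emptyset$ is assumed directly.
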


\noindent\textbf{Decloning Towards a Single-Peaked Profile.}\quad
We will now present an algorithm for transforming a given election
into a single-peaked one by decloning.  Our algorithm works with a
PQ-tree $T$ that captures the clone structure of our profile.
Informally, it first contracts the tree to a single node, and then
greedily reintroduces clone sets, following the branches of the tree,
while maintaining the invariant that the resulting profile is
single-peaked (this requires care as single-peakedness of a profile
can be witnessed by many different societal axes).  In what follows,
we present a version of our algorithm that only declones clone sets
that correspond to subtrees of $T$. This algorithm produces the
optimal decloning for many settings (and, in particular, for all
profiles whose PQ-trees contain P-nodes only), but may fail to find an
optimal solution in some cases. We will show an example of such a case
and outline a more sophisticated polynomial-time algorithm that works
for all profiles.
We describe our algorithm in terms of proper colorings of a
PQ-tree.

\begin{definition}
  Let $T$ be a PQ-tree. A {\em coloring} of $T$ is
  a function $f$ from the set of nodes of $T$ to
  the set $\{\mathit{black}, \mathit{white}\}$.
  A coloring of $T$ is {\em proper} if the children of each black node are black.
  Given a coloring $f$ of $T$, let $W(f)$ denote the set of nodes
  that are $\mathit{white}$ under $f$.
\end{definition}

\begin{definition}\label{def:d2c}
  Let $\calR$ be a profile over a candidate set $C$, set $T =
  T(\calC(\calR))$, and let $f$ be a proper coloring of $T$.  For a
  node $v\in T$, let $C_v = \{c \in C \mid c$ is a leaf of $T$'s
    subtree rooted in  $v \}$.  Define $\calR(T,f)$ to be the profile
  obtained from $\calR$ as follows: for each internal node $v$, if $v$
  is black and its parent is white (or $v$ is the root), declone the
  set $C_v$ to a single candidate $v$.
\end{definition}

Our algorithm {\sc BasicDecloneSP} takes as an input  
a preference profile $\calR$. It then constructs
a PQ-tree $T = T(\calC(\calR))$ 
that corresponds to the clone structure of $\calR$.
It initializes $f$ to be a coloring of $T$ in which every node is $\mathit{black}$;
this coloring will be modified during the execution of the algorithm.  
Note that at this point $\calR(T,f)$ is single-peaked and $c(\calR(T, f))=1$. 
The algorithm also maintains a queue of nodes that it intends to visit.
Initially, the queue contains the root of $T$ only.
Throughout the execution, we ensure that $\calR(T,f)$ is single-peaked
and all ancestors of each node in the queue are white;
note that this is indeed the case just after the initialization stage.

At each stage, {\sc BasicDecloneSP}$(\calR)$ picks a node $v$ from the
queue (if the queue is empty, the algorithm terminates).  It then
executes the following steps:
  \begin{enumerate}%
  \item Set $f(v) = \mathit{white}$.
  \item Check if $\calR(T,f)$ is single-peaked (possible in polynomial
    time~\cite{bar-tri:j:stable-matching-from-psychological-model,esc-lan-ozt:c:single-peaked-consistency}).
  \item If $\calR(T,f)$ is single-peaked, add all children of $v$ to the queue.
        Otherwise, reset $f(v) = \mathit{black}$.
  \end{enumerate}

By induction on the execution of the algorithm, one can see
that at each point in time $f$ is a proper coloring of $T$, 
and therefore $\calR(T,f)$ is well-defined. Further, each node is processed
at most once, so {\sc BasicDecloneSP}$(\calR)$ runs in polynomial time.
Finally, it is clear that {\sc BasicDecloneSP}$(\calR)$ produces
a single-peaked election.

We now show that
given a profile $\calR$ over a candidate set $C$, {\sc BasicDecloneSP}
outputs a single-peaked profile $\calR'$ with the following property:
$c(\calR')\le c(\calR'')$ for any single-peaked preference profile
$\calR''$ that can be obtained from $\calR$ by decloning clone sets
that correspond to subtrees of $T(\calC(\calR))$.  
First, as a sanity check, we note (Proposition~\ref{thm:declone-to-sp}
in the appendix) that if a profile is already single-peaked then it
remains single-peaked after decloning. Second, we show that the greedy
way in which {\sc BasicDecloneSP} reintroduces clones cannot prevent
us from finding an optimal solution.

\begin{proposition}\label{cor:greedy} 
  Let $\calR$ be a preference profile over a set of candidates $C$.
  Let $D^1, \ldots, D^k \in \calC(\calR)$ be a sequence of
  pairwise disjoint clone sets, and let $c^1, \ldots, c^k$ be
  a sequence of distinct candidates not in $C$. For each $i=1, \dots, k$,
  let $\calR_i$ denote a preference profile in which for
  each $j=1, \dots, k$, $j \neq i$, $D^j$ is decloned to $c^j$.
  Then  $\calR$ is single-peaked if and only if each of the profiles
  $\calR_i$, $i=1, \dots, k$, is single-peaked.
\end{proposition}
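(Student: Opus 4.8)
\textbf{Proof plan for Proposition~\ref{cor:greedy}.}

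The plan is to reduce the multi-way statement to the single-clone case and then argue that decloning a single clone set commutes with single-peakedness in both directions. First I would observe that the forward direction is essentially already available: Proposition~\ref{thm:declone-to-sp} (cited in the text as the appendix ``sanity check'') says that decloning a clone set of a single-peaked profile yields a single-peaked profile, so if $\calR$ is single-peaked then decloning $D^j$ for all $j\neq i$ makes $\calR_i$ single-peaked. (One must check that $D^j$ remains a clone set after decloning the earlier $D^{j'}$; this holds because the $D^j$ are pairwise disjoint, so decloning one does not destroy the block structure of another.) The substance is the converse: assuming every $\calR_i$ is single-peaked, conclude that $\calR$ is single-peaked.

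For the converse the key idea is a ``re-expansion'' argument. Fix $i=1$; since $\calR_1$ is single-peaked, let $>$ be a societal axis witnessing this. In $\calR_1$ the clone sets $D^2,\dots,D^k$ have been collapsed to single candidates $c^2,\dots,c^k$. I want to re-expand $c^1$ (which is still a genuine clone block $D^1$ in $\calR_1$, since $D^1$ is disjoint from the others) — but that is the wrong direction. Instead, the cleaner route is: start from $\calR_1$, which is single-peaked, and note that $D^1$ is a clone set of $\calR_1$; expanding it back cannot be done freely, but here is where the hypotheses on the \emph{other} $\calR_i$ enter. Concretely, I would prove the following single-clone lemma and then iterate it: \emph{if $D$ is a clone set of a profile $\calS$, $c\notin$ the candidate set, and $\calS(D\mapsto c)$ is single-peaked, then for $\calS$ itself to be single-peaked it suffices that the ``local'' profile obtained by restricting each vote to $D$ is single-peaked.} Then the statement falls out by taking $\calS$ to be (a suitable intermediate profile) and using that $\calR_i$ single-peaked forces the restriction to $D^i$ to be single-peaked (a subprofile of a single-peaked profile on a clone block is single-peaked, by Proposition~\ref{thm:second-variety} and Corollaries~\ref{cor:2type-nopeaks}--\ref{cor:2types}).

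More explicitly, here is the assembly. Work by induction on $k$. For $k=1$ there is nothing to prove. For the inductive step, consider $\calR_k$ (where $D^k$ is \emph{not} decloned but $D^1,\dots,D^{k-1}$ are): by the inductive hypothesis applied inside $\calR_k$ — whose relevant clone sets are $D^1,\dots,D^{k-1}$ — single-peakedness of all the further-decloned profiles (which coincide with the $\calR_i$, $i<k$, up to the already-performed decloning of $D^k$, and these are single-peaked by hypothesis) gives that $\calR_k$ is single-peaked. Now I have: $\calR_k$ is single-peaked, $\calR_1$ is single-peaked (hypothesis), and $\calR_1 = \calR_k(D^1\mapsto c^1)$ while $\calR_k = \calR_1(\text{re-expand }D^1)$ with the roles reversed. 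So it remains exactly to show: \emph{if $\calR' = \calR(D\mapsto c)$ is single-peaked and the profile $\calR|_D$ (each vote restricted to the block $D$) is single-peaked, then $\calR$ is single-peaked.} For this I would take a witnessing axis $>'$ for $\calR'$ and a witnessing axis $>_D$ for $\calR|_D$, and build a witnessing axis $>$ for $\calR$ by substituting the $>_D$-ordering of $D$ into the position of $c$ in $>'$ — with the one subtlety flagged by Proposition~\ref{thm:second-variety}, namely that in a single-peaked profile a clone block need not sit contiguously on the axis but may split into $D_1 > P > D_2$; so if $c$ is not at an ``end'' of $>'$ I must instead split $D$ according to $>_D$ and place one part just left and one part just right of $P$. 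Checking compatibility (Definition of single-peaked) of every vote with this $>$ is a case analysis on where $\peak(R_i)$ lies relative to $D$ and to $P$, using Corollary~\ref{cor:2types} (each vote coincides with $>_D$ or $\ola{>_D}$ on $D$) — this verification is the main obstacle, but it is the same kind of bookkeeping already carried out for Proposition~\ref{thm:second-variety}, so I expect it to go through without essentially new ideas.
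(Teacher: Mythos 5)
Your high-level architecture is fine (the forward direction via Proposition~\ref{thm:declone-to-sp}, and an induction on $k$ reducing the converse to a statement about a single clone set), but the single-clone lemma you reduce everything to is false. You claim: if $\calR'=\calR(D\mapsto c)$ is single-peaked and the restricted profile $\calR|_D$ is single-peaked, then $\calR$ is single-peaked. Counterexample: $C=\{a,d_1,d_2,b\}$, $D=\{d_1,d_2\}$, and $\calR=(R_1,R_2)$ with $R_1: a\succ_1 d_1\succ_1 d_2\succ_1 b$ and $R_2: b\succ_2 d_1\succ_2 d_2\succ_2 a$. Here $D$ is a clone set, $\calR(D\mapsto c)$ is single-peaked with respect to $a>c>b$, and $\calR|_D$ is trivially single-peaked (both voters rank $d_1\succ d_2$), yet $\calR$ is not single-peaked. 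Indeed, compatibility of a vote with an axis forces every top-initial segment of that vote to be an interval of the axis (if $x,z$ are ranked above $y$ and $y$ lies between them on the axis, the definition of compatibility is violated). So a common axis would have to make $\{a,d_1\}$, $\{a,d_1,d_2\}$, $\{b,d_1\}$, $\{b,d_1,d_2\}$ all intervals; the two $2$-element sets force $a$ and $b$ to be the two neighbors of $d_1$, leaving $d_2$ at an end of the axis, and then one of the two $3$-element sets is not an interval. (This does not contradict Proposition~\ref{cor:greedy}: there the hypothesis is that the profile with $D$ \emph{intact} and the other clone sets collapsed is single-peaked, which in this example is essentially $\calR$ itself and fails.)

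The conceptual gap is that $\calR|_{D}$ discards exactly the information the argument needs: the correlation, voter by voter, between the internal ordering of $D$ and where that voter's peak sits relative to $D$ (cf.\ Corollary~\ref{cor:2types}, which says an off-peak contiguous clone must be ranked by the axis or its reverse depending on which side the peak is on). In the counterexample both voters order $D$ identically while their peaks are forced to opposite sides, and no insertion of $D$ (in either orientation) at the position of $c$, nor any split $D_1>P>D_2$, can repair this. The paper's proof keeps this information by working with the full leave-one-out profile $\calR_i$: it takes an arbitrary axis $>$ witnessing single-peakedness of $\calR_i$ and an axis $>'$ for the completely decloned profile, and uses Proposition~\ref{thm:second-variety} and Theorem~\ref{thm:unique} to transplant the $D^i$-portion of $>$ (possibly reversed, or split around $P$) into $>'$ at a determined position; the final step is that these edits for different $i$ are local and mutually independent, so they can be applied jointly. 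Separately, your induction bookkeeping is off: $\calR_1\neq\calR_k(D^1\mapsto c^1)$, since $D^1$ is already collapsed in $\calR_k$; the intermediate profile you want is $\calR(D^k\mapsto c^k)$, whose leave-one-out profiles are $\calR_1,\dots,\calR_{k-1}$. But even with that fixed, the reduction to the restriction-based lemma cannot be made to work; you need the stronger hypothesis on $\calR_i$ and an axis-transplanting argument of the above kind.
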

The idea of the proof is to show that if reintroducing a given clone
does not break single-peakedness of a profile, then one can obtain a
societal axis witnessing this fact by local modifications of the
societal axis used prior to the clone's reintroduction. In the next
theorem we show correctness of {\sc BasicDecloneSP} for the case
of decloning subtrees only, and in the following proposition we show
that sometimes decloning subtrees does not suffice.

\begin{theorem}\label{thm:declone-alg}
  Given a preference profile $\calR$ over a candidate set $C$, 
  {\sc BasicDecloneSP}$(\calR)$ runs in time polynomial in $|\calR|$ and $|C|$,
  and produces a single-peaked preference profile $\calR'$ 
  such that $c(\calR')\le c(\calR'')$ for any single-peaked preference profile $\calR''$
  that can be obtained from $\calR$ by decloning one or more clone sets
  that correspond to subtrees of $T(\calC(\calR))$. 
\end{theorem}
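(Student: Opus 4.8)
The plan is to establish three things: (i) the algorithm runs in polynomial time, (ii) its output is single-peaked, and (iii) no single-peaked profile obtained by decloning subtrees of $T(\calC(\calR))$ has fewer candidates than $\calR'$. Items (i) and (ii) are essentially already argued in the surrounding text: each node is enqueued at most once (since we only enqueue children of a node at the moment we color it white, and a node is colored white at most once), the single-peakedness test in step~2 is polynomial by the cited recognition algorithms, and the invariant that $f$ stays a proper coloring — maintained because we only whiten a node $v$ all of whose ancestors are already white, and we only enqueue $v$'s children after whitening $v$ — guarantees $\calR(T,f)$ is well-defined and (by construction plus step~3's rollback) single-peaked at termination. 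So the real content is optimality, item (iii).

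For optimality, first observe that any single-peaked profile $\calR''$ obtainable from $\calR$ by decloning clone sets that correspond to subtrees of $T$ is exactly of the form $\calR(T,g)$ for some proper coloring $g$ of $T$: the decloned subtrees are precisely the maximal black subtrees of $g$, and properness of $g$ (children of black nodes are black) is forced by the fact that decloned sets are either nested or disjoint. Moreover $c(\calR(T,g))$ equals the number of leaves of the ``pruned'' tree, which is a monotone function of $g$ in the following sense: if $g \le f$ pointwise (i.e. $W(g)\subseteq W(f)$, so $g$ is ``more black''), then $c(\calR(T,g)) \le c(\calR(T,f))$, since making more nodes black only collapses more candidates. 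Hence it suffices to show that the coloring $f^*$ produced by {\sc BasicDecloneSP} is \emph{maximal} among proper colorings $g$ with $\calR(T,g)$ single-peaked: if $W(g)$ is inclusion-maximal subject to $\calR(T,g)$ single-peaked, then any other valid coloring $g'$ has $W(g')$ not properly containing $W(g)$, and combined with a monotonicity/exchange argument this will give $c(\calR(T,g'))\ge c(\calR(T,f^*))$.

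The heart of the matter is therefore a matroid-like exchange property, for which Proposition~\ref{cor:greedy} is the key tool. Concretely, I would prove: if $g$ is a proper coloring with $\calR(T,g)$ single-peaked, and $v$ is a white node of $g$ all of whose strict ancestors are white, and $g'$ is $g$ with $v$ re-blackened, then whenever $\calR(T,g')$ is single-peaked so is $\calR(T,g)$ — and conversely, adding a single eligible node to the white set preserves single-peakedness \emph{independently} of which other eligible nodes have been added. This ``order-independence of greedy whitening'' is exactly what Proposition~\ref{cor:greedy} delivers (reintroducing one clone among a family of pairwise disjoint clones preserves single-peakedness iff reintroducing it into the fully-decloned profile does), and it implies that the final white set $W(f^*)$ is the \emph{unique} maximal one: any node the greedy process rejected at step~3 would break single-peakedness in \emph{every} proper extension, so it lies in no valid coloring's white set, while any node it could have whitened it did whiten. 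Combining this with the monotonicity $W(g)\subseteq W(g') \Rightarrow c(\calR(T,g)) \ge c(\calR(T,g'))$ yields $c(\calR') = c(\calR(T,f^*)) \le c(\calR(T,g)) = c(\calR'')$ for every valid $g$, which is the claim. The main obstacle I anticipate is making the order-independence rigorous for \emph{nested} (not merely disjoint) clone sets: Proposition~\ref{cor:greedy} is stated for pairwise disjoint clone sets, so I would apply it level by level in the tree — at each fixed node, its children that have been whitened correspond to disjoint clone sets inside that node's block, and the induction hypothesis handles the already-expanded descendants — threading the "all ancestors white" invariant through so that each application of Proposition~\ref{cor:greedy} is to a genuinely disjoint family.
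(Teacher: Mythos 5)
Your overall route---recast subtree declonings as proper colorings of $T(\calC(\calR))$, use monotonicity of the candidate count in the white set, and reduce correctness of the greedy to an order-independence property powered by Propositions~\ref{cor:greedy} and~\ref{thm:declone-to-sp}---is exactly the one the paper sets up (the paper states those two propositions as the tools and does not print a full proof of Theorem~\ref{thm:declone-alg}), so the plan is sound. But two things need repair. First, the inequality at the end: your early monotonicity statement is the correct one, $W(g)\subseteq W(f)$ implies $c(\calR(T,g))\le c(\calR(T,f))$, yet in the last paragraph you silently reverse it in order to land on $c(\calR')\le c(\calR'')$. That matches the theorem as printed, but it contradicts your own earlier claim, and with ``$\le$'' the statement is vacuous (decloning everything to one candidate minimizes $c$). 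The stated goal is to keep as many candidates as possible; what maximality of $W(f^*)$ plus the correct monotonicity actually gives is $c(\calR')\ge c(\calR'')$, which is the meaningful optimality claim. State it that way and flag the sign discrepancy rather than inverting your own lemma.

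Second, the crux---order-independence for nested expansions---is asserted rather than proved, and your first formulation of the exchange property is vacuous as written (its hypothesis already assumes $\calR(T,g)$ is single-peaked, which is also its conclusion). Proposition~\ref{cor:greedy} alone does not ``deliver'' the property: it concerns pairwise disjoint clone sets of one profile, while the white nodes competing with $v$ sit on branches hanging off the root-to-$v$ path at many levels, and your proposed disjoint family (``the whitened children of a fixed node'') is not the right object. What is needed is a lemma such as: for every proper coloring $g$ with white root, $\calR(T,g)$ is single-peaked if and only if, for every $v\in W(g)$, the profile $\calR(T,g_v)$ is single-peaked, where $g_v$ is the coloring whose white nodes are exactly $v$ and its ancestors. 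The ``only if'' direction is Proposition~\ref{thm:declone-to-sp}; the ``if'' direction goes by induction on $|W(g)|$: pick a deepest white node $v$, and apply Proposition~\ref{cor:greedy} inside the profile $\calR(T,g)$ to the disjoint family consisting of the set of current candidates below $v$ together with, for each strict ancestor of $v$, the sets of current candidates below its off-path children; the ``all-but-one'' profiles are then exactly $\calR(T,g_v)$ and further declonings of $\calR(T,g')$, where $g'$ re-blackens $v$, and these are single-peaked by hypothesis, by the induction hypothesis, and by Proposition~\ref{thm:declone-to-sp}. With this lemma the algorithm's test at $v$ is equivalent to ``$\calR(T,g_v)$ is single-peaked'', independent of the current white set, the valid white sets are closed under union, $W(f^*)$ is the maximum one, and the (correctly oriented) monotonicity finishes the proof. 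Until a statement of this precision is proved, the heart of your argument is a gap.
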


\begin{proposition}
\label{thm:counterexample}
Let $\calC$ be a string of sausages over candidates $\{a,b,c\}$, 
let $\calD'$ be a string of sausages over candidates $\{1,2,3\}$, 
and let $\calD''$ be a fat sausage over candidates $\{x, y\}$.
Then the clone structure $\calC'=\calC(b \rightarrow \calD')$ is not
single-peaked, but the clone structure 
$\calC''=\calC(b \rightarrow \calD'')$ is single-peaked.
\end{proposition}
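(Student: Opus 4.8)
The plan is to treat the two claims separately. I would first record the two set families explicitly: writing $\calC$ for the string $a\succ b\succ c$ (whose only non-trivial clones are $\{a,b\}$ and $\{b,c\}$), the definition of the embedding gives $\calC''=\{\{a\},\{x\},\{y\},\{c\},\{x,y\},\{a,x,y\},\{x,y,c\},\{a,x,y,c\}\}$ and $\calC'=\{\{a\},\{1\},\{2\},\{3\},\{c\},\{1,2\},\{2,3\},\{1,2,3\},\{a,1,2,3\},\{1,2,3,c\},\{a,1,2,3,c\}\}$. For the positive half I exhibit a concrete single-peaked profile realizing $\calC''$; for the negative half I show that every profile realizing $\calC'$ is built from a rigid four-element pool of votes, no two ``compatible'' members of which share a common societal axis.

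For $\calC''$: take the two-voter profile $\calR=(R_1,R_2)$ with $R_1\colon a\succ x\succ y\succ c$ and $R_2\colon a\succ y\succ x\succ c$. A check over the subsets of $\{a,x,y,c\}$ shows that $\{x,y\}$, $\{a,x,y\}$, $\{x,y,c\}$ and the ground set are consecutive blocks in both votes, hence clones, while every other proper subset is split in at least one vote (for instance $\{a,x\}$ is split by $R_2$, witnessed by $y$, and $\{a,c\}$ is split by $R_1$, witnessed by $x$); so $\calC(\calR)=\calC''$. Moreover $\calR$ is single-peaked with respect to the axis $x>a>y>c$, since in each vote every initial segment — $\{a\}$; then $\{a,x\}$ resp.\ $\{a,y\}$; then $\{a,x,y\}$; then the whole set — is an interval of that axis. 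Hence $\calC''$ is single-peaked.

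For $\calC'$: suppose for contradiction that $\calR$ is single-peaked and $\calC(\calR)=\calC'$. Since $\{1,2,3\},\{1,2\},\{2,3\},\{a,1,2,3\},\{1,2,3,c\}\in\calC'=\calC(\calR)$, each is a clone of $\calR$, i.e.\ a consecutive block in every vote. This rigidly pins down each vote: inside the $\{1,2,3\}$-block, $2$ lies between $1$ and $3$ (because $\{1,2\}$ and $\{2,3\}$ are consecutive sub-blocks); $a$ and $c$ are each adjacent to that block; and $a,c$ lie on opposite sides of it, since otherwise one of $\{a,1,2,3\}$, $\{1,2,3,c\}$ would fail to be consecutive. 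As $|C|=5$, every vote of $\calR$ is therefore one of $v_1\colon a\succ1\succ2\succ3\succ c$, $v_2\colon a\succ3\succ2\succ1\succ c$, $v_3\colon c\succ1\succ2\succ3\succ a$, $v_4\colon c\succ3\succ2\succ1\succ a$. Now $\{a,1\}\notin\calC'$, yet $\{a,1\}$ is consecutive in both $v_1$ and $v_4$; so if $\calR$ used only votes from $\{v_1,v_4\}$, then $\{a,1\}$ would be a clone of $\calR$ — impossible. Hence $\calR$ uses $v_2$ or $v_3$; symmetrically, since $\{a,3\}\notin\calC'$ is consecutive only in $v_2$ and $v_3$, $\calR$ uses $v_1$ or $v_4$. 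Consequently $\calR$ contains two votes forming one of the pairs $\{v_1,v_2\}$, $\{v_1,v_3\}$, $\{v_4,v_2\}$, $\{v_4,v_3\}$.

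It remains to rule out each of these four pairs as jointly single-peaked, which I would do using the standard fact that an order is single-peaked with respect to $>$ iff each of its top-$k$ sets is an interval of $>$. In every case the top-$4$ sets force $c$ — and, for the mixed pairs $\{v_1,v_3\}$ and $\{v_2,v_4\}$, also $a$ — to be an endpoint of the axis. For $\{v_1,v_3\}$ (and likewise $\{v_2,v_4\}$), $a$ and $c$ then occupy the two endpoints, so $1$ (resp.\ $3$) would have to be adjacent to both, which is impossible on a five-element axis. For $\{v_1,v_2\}$ (and likewise $\{v_3,v_4\}$), $c$ is an endpoint and $a$, forced to be adjacent to both $1$ and $3$, lies strictly inside the interval occupied by $\{a,1,2,3\}$ flanked by $1$ and $3$, pushing $2$ to the far end of that interval; the top-$3$ constraints $\{a,1,2\}$ (from $v_1$) and $\{a,2,3\}$ (from $v_2$) then cannot both hold — contradiction. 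This contradicts single-peakedness of $\calR$, so no single-peaked profile realizes $\calC'$. The four-case argument of this last paragraph is the only real obstacle; the rest — the explicit forms of $\calC'$ and $\calC''$, the two-vote verification for $\calC''$, and the rigidity of $\calR$'s votes for $\calC'$ — is routine bookkeeping.
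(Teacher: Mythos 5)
The paper itself gives no proof of Proposition~\ref{thm:counterexample} (none appears in Appendix~\ref{app:sp}), so there is no official argument to compare yours against; judged on its own, your proof is correct and complete. Your explicit descriptions of $\calC'$ and $\calC''$ match the definition of the embedding; the two-voter profile $a\succ x\succ y\succ c$, $a\succ y\succ x\succ c$ does implement $\calC''$ and is single-peaked with respect to $x>a>y>c$ (note this explicit construction is really needed, since Proposition~\ref{prop:sp-tree} does not cover $\calC''$: the middle child of its Q-node is not a singleton). For the negative half, the rigidity step is right (contiguity of $\{1,2\},\{2,3\},\{1,2,3\},\{a,1,2,3\},\{1,2,3,c\}$ forces every vote to be one of $v_1,\dots,v_4$), the observation that killing the non-clones $\{a,1\}$ and $\{a,3\}$ forces one vote from $\{v_2,v_3\}$ and one from $\{v_1,v_4\}$ is exactly what is needed, and the four-pair incompatibility analysis checks out. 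The only point you should make explicit in a written version is the ``standard fact'' you invoke: the equivalence between the paper's compatibility condition and the requirement that every top-$k$ set of a vote be an interval of the axis is not stated anywhere in the paper, though it follows from the definition in two lines (if a prefix is not an interval, some excluded $d$ lies between two included $c,e$, violating $c\succ d\implies d\succ e$; conversely a violation yields a non-interval prefix). With that lemma spelled out, the argument is airtight.
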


In Proposition~\ref{thm:counterexample} the clone structure $\calC''$
can be obtained from $\calC'$ by decloning the clone $\{1, 2\}$ to $x$
(and renaming $3$ as $y$). Further, $\calC''$ has four candidates, whereas
any clone structure that can be obtained from $\calC'$ by decloning
clone sets that correspond to substructures of $\calC$ will have at most 
three candidates. This shows that if $T(\calC(\calR))$ contains Q-nodes, 
{\sc BasicDecloneSP}$(\calR)$ may fail to find
the optimal decloning of a given profile into a single-peaked one
(note, however, that this issue does not arise if $T(\calC(\calR))$
contains P-nodes only, in which case {\sc BasicDecloneSP}$(\calR)$
actually finds an optimal solution).

Hence, to obtain an algorithm that {\em always} finds an optimal decloning, 
we need to modify {\sc BasicDecloneSP}$(\calR)$ to also consider 
clone sets that correspond to substrings of strings of sausages. 
However, a straightforward implementation of this idea leads
to an exponential blow-up in the running time: there are exponentially many 
ways to choose a non-overlapping collection of substrings of a given string.
Fortunately, it turns out that it suffices to consider breaking 
strings of sausages into two parts. In Section~\ref{app:declone-alg} of the appendix 
we present a polynomial-time algorithm that makes use of this idea
and always finds an optimal single-peaked decloning of a given preference profile;
this algorithm differs from {\sc BasicDecloneSP}$(\calR)$
in its handling of Q-nodes only.
\smallskip

\noindent\textbf{Clone Structures via Single-Peaked Profiles.}\quad
Let us now turn to our second goal, the problem of characterizing
clone structures that can be implemented using single-peaked profiles.
Formally, we say that a clone structure $\calC$ is {\em single-peaked}
if there exists a single-peaked profile $\calR$ such that $\calC =
\calC(\calR)$. 

\begin{proposition}\label{prop:irr-sp}
  Fat sausages and strings of sausages are single-peaked.
\end{proposition}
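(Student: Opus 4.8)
The plan is to exhibit, for each value of $m$, an explicit single-peaked profile whose clone structure is a fat sausage over $[m]$ and another whose clone structure is a string of sausages over $[m]$. I will fix in both cases the natural societal axis $1 > 2 > \cdots > m$ and construct a small number of votes compatible with it, then verify that the resulting clone structure is exactly the desired one. By Theorem~\ref{thm:3rep} (or rather its underlying construction) it is reasonable to hope two or three votes suffice; I expect two to be enough in each case.

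For the \emph{string of sausages}, take $\calR$ to be the single vote $R: 1 \succ 2 \succ \cdots \succ m$. This is trivially compatible with the axis $1 > 2 > \cdots > m$ (indeed it \emph{is} the axis read top-down), so $\calR$ is single-peaked, and by Example~\ref{exm:string} we have $\calC(\calR) = \{[i,j] \mid i \le j\}$, which is precisely the string of sausages. So this case is immediate.

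For the \emph{fat sausage}, I need a single-peaked profile in which the only clone sets are the singletons and the whole set $[m]$. The idea is to mimic the cyclic profile of Example~\ref{exm:string} but using only votes compatible with the axis $1 > 2 > \cdots > m$. A clean choice: let $\calR$ consist of the two votes $R_1: 1 \succ 2 \succ \cdots \succ m$ and $R_2: m \succ (m-1) \succ \cdots \succ 1$; these are $\ola{>}$ and $>$ read as orders, hence both compatible with the axis, so $\calR$ is single-peaked. Now I must check $\calC(\calR) = \{[m]\} \cup \{\{i\}\mid i\in[m]\}$. Any clone set $D$ with $|D| \ge 2$ must be an interval $[i,j]$ with respect to $R_1$ (since $R_1$ puts the elements in the order $1,2,\dots,m$) and also an interval with respect to $R_2$ (order $m,\dots,1$) — but every subset that is an interval in both of these reversed orders is already forced to be an interval $[i,j]$; the extra constraint comes from requiring the block to sit consistently relative to outside elements. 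Concretely, if $1 < i$ then in $R_1$ the element $i-1$ precedes all of $D$ while in $R_2$ the element $i-1$ follows all of $D$; but $i-1 \notin D$, contradicting the clone condition unless no such element exists, i.e.\ $i = 1$; symmetrically $j = m$; hence $D = [m]$. This is the crux of the argument and the only place real checking happens.

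The main obstacle is the fat-sausage verification — one has to be careful that two votes genuinely kill \emph{all} non-trivial proper clone sets, not just the intervals, and that the argument degenerates gracefully for $m = 1$ and $m = 2$ (for $m \le 2$ the fat sausage and string of sausages coincide and a single vote suffices). For $m \ge 3$ the two-vote construction above works cleanly. I would present the two cases as two short paragraphs, citing Example~\ref{exm:string} for the string case and giving the three-line clone-set elimination argument for the fat case.
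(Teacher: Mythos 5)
Your string-of-sausages case is fine and matches the paper. The fat-sausage case, however, contains a genuine error: the profile $R_1: 1\succ 2\succ\cdots\succ m$, $R_2: m\succ\cdots\succ 1$ does \emph{not} implement a fat sausage. The clone condition is a per-vote condition: $D$ is a clone set iff in \emph{each individual vote} all members of $D$ compare identically to every outside candidate (equivalently, $D$ is contiguous in each vote). Your ``crux'' step claims a contradiction from the fact that $i-1$ lies above $D$ in $R_1$ but below $D$ in $R_2$; this is not a violation of anything, since the definition imposes no consistency across different votes. In fact every interval $[i,j]$ is contiguous in both of your votes, so $\calC(R_1,R_2)$ is the full string of sausages, not the fat sausage. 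This is exactly the content of Proposition~\ref{pro:reversing}: since $R_2=\ola{R_1}$, reversing it does not change the clone structure, so your two-vote profile has the same clone structure as the single vote $R_1$. Moreover, for $m=3$ no two-vote profile can work at all: Proposition~\ref{prop:irr3} shows a fat sausage on three candidates is $3$-implementable but not $2$-implementable, so your claim that the construction ``works cleanly for $m\ge 3$'' fails already at $m=3$.

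The repair is essentially what the paper does: take the interleaved two-vote profiles of Proposition~\ref{prop:irr3} (for $m=2k$, $R_1: x_1\succ\cdots\succ x_k\succ y_1\succ\cdots\succ y_k$ and $R_2: y_1\succ x_1\succ y_2\succ x_2\succ\cdots\succ y_k\succ x_k$, with the analogous modification for odd $m$ and a separate three-vote profile for $m=3$), whose clone structures are already verified there to be fat sausages, and then check single-peakedness with respect to a suitable axis, e.g.\ $x_k>\cdots>x_1>y_1>\cdots>y_k$ (prepending $z$ for odd $m$, and $1>2>3$ for $m=3$). The interleaving is what kills the nontrivial intervals; two votes that are reverses of each other can never do this.
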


Thus, by Proposition~\ref{thm:counterexample}, 
single-peaked clone structures are not closed under embeddings.
Nonetheless, we identify a large class of single-peaked clone
structures.

\begin{proposition}\label{prop:sp-tree}
  Let $\calC$ be a clone structure over a set of candidates $C$. Suppose that
  for each Q-node of the PQ-tree decomposition $T(\calC)$ of $\calC$
  it holds that all children, except possibly the leftmost child and the
  rightmost child, are labeled with singletons, i.e., elements of $C$.
  Then $\calC$ is a single-peaked clone structure.
\end{proposition}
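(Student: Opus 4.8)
The plan is to build a single-peaked profile implementing $\calC$ by induction on the structure of the PQ-tree $T(\calC)$, using the embedding operation and Proposition~\ref{pro:embedding} to glue together single-peaked implementations of the irreducible pieces. The base case is Proposition~\ref{prop:irr-sp}: a fat sausage (a P-node with singleton children) and a string of sausages (a Q-node) are both single-peaked. For the inductive step, suppose $\calC$ is reducible with decomposition $\Dec(C) = \{C_1, \dots, C_k\}$ and collapsed structure $\calC' = \calC(\calC_1 \to c^1, \dots, \calC_k \to c^k)$. The structure $\calC'$ again satisfies the hypothesis of the proposition (collapsing an irreducible subfamily to a single leaf can only delete children of Q-nodes or merge a leaf into its parent's label; in either case the ``all middle children are singletons'' property is preserved, and the new leaves $c^i$ are singletons). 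By induction $\calC'$ is single-peaked, and each $\calC_i$ is an irreducible structure satisfying the hypothesis, hence single-peaked by Proposition~\ref{prop:irr-sp}. By repeated application of Proposition~\ref{pro:embedding}, $\calC = \calC'(c^1 \to \calC_1, \dots, c^k \to \calC_k)$ is a clone structure, but this alone does not give single-peakedness, so the real work is to redo the embedding at the level of single-peaked profiles.

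The key step is therefore a single-peaked embedding lemma: if $\calC'$ is implemented by a profile single-peaked with respect to an axis $>'$, and $\calD$ is implemented by a profile single-peaked with respect to an axis $>_D$, and $d$ is a leaf of $T(\calC')$ that is \emph{not} an interior child of any Q-node (equivalently, $d$ is a child of a P-node, or an outermost child of a Q-node, or the root), then $\calC'(d \to \calD)$ admits a single-peaked implementation. To see this, substitute the block $D$ for $d$ in each vote of the $\calC'$-profile, ordering the candidates of $D$ within the block according to that voter's $\calD$-vote, exactly as in the proof of Proposition~\ref{pro:embedding} (with its tweak for avoiding parasite clones). For the axis, place $D$ into $>'$ at the position of $d$, ordered internally by $>_D$; call this $>$. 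I claim each resulting vote is compatible with $>$. A voter's vote restricted to $C' \setminus \{d\}$ is compatible with $>'$ restricted there, and its restriction to $D$ is compatible with $>_D$; the only thing to check is the zig-zag condition across the boundary of the block $D$. Here the hypothesis on $d$ is exactly what is needed: if $d$ is the top child of a P-node or the extreme child of a Q-node or the root, then in $>'$ the candidate $d$ is at one ``end'' relative to its relevant ancestor block, so the block $D$ can be inserted so that no voter's peak inside $C'\setminus D$ forces a descent that re-ascends through $D$, and no voter with peak inside $D$ is forced to zig-zag out of one side of $D$ and back. One verifies the compatibility inequality $c \succ d' \implies d' \succ e$ for triples straddling the boundary by a short case analysis on whether the voter's peak lies inside $D$ (then the vote restricted to $D$ is a $>_D$-compatible zig-zag, and outside $D$ it is monotone toward the $d$-position, which is consistent because $D$ sits at an extreme) or outside $D$ (then $D$ appears as a contiguous monotone run in the vote, matching $>_D$ or $\ola{>_D}$ by the clone property, again consistent since $D$ occupies the slot of $d$).

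The main obstacle is precisely this boundary compatibility check, and the hypothesis of the proposition is tailored to make it go through: an \emph{interior} child of a Q-node cannot in general be blown up into a non-trivial single-peaked block, which is exactly the phenomenon exhibited by $\calC' = \calC(b\to\calD')$ in Proposition~\ref{thm:counterexample}, where $b$ is the middle child of a Q-node. Since in our situation every irreducible piece $\calC_i$ is attached at a leaf $c^i$ of $T(\calC')$ that is itself either under a P-node or at the end of a Q-node (middle children of Q-nodes are singletons by hypothesis and are never the support of a proper irreducible subfamily of size $\geq 2$), the embedding lemma applies to each $c^i \to \calC_i$ in turn, and composing these embeddings yields a single-peaked implementation of $\calC$. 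A minor technical point to handle carefully is that after one embedding the axis and the ``extreme child'' status of the remaining $c^j$ must be preserved; this holds because inserting a block at one leaf's position does not change the relative position of the other leaves in $>'$, so each remaining $c^j$ stays at its extreme/interior position, and the induction closes.
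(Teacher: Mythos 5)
Your overall strategy (induct on the PQ-tree, re-embed the irreducible pieces one leaf at a time) is reasonable, and the combinatorial bookkeeping about where the pieces attach is fine, but the load-bearing step --- the ``single-peaked embedding lemma'' --- is not proved, and the verification you sketch for it is wrong. You substitute each voter's own $\calD$-vote into the block $D$ and claim compatibility with the axis obtained by inserting $>_D$ into the slot of $d$. For a voter whose peak lies outside $D$, compatibility with that axis forces the voter to rank $D$ \emph{monotonically} along $>_D$, starting from the end of $D$ adjacent to the interval of candidates already ranked; but that voter's $\calD$-vote is an arbitrary $>_D$-single-peaked order and need not be monotone. Your parenthetical justification (``matching $>_D$ or $\ola{>_D}$ by the clone property'') is circular: that monotonicity is a \emph{consequence} of single-peakedness of the combined profile (this is exactly Corollary~\ref{cor:2types}), which is what you are trying to establish, not a feature of the construction. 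The failure is not just of your particular axis. Take $\calC'$ the fat sausage over $\{a,c,d\}$ implemented, as in Proposition~\ref{prop:irr3}, by $a\succ c\succ d$, $c\succ a\succ d$, $c\succ d\succ a$ (single-peaked w.r.t.\ $a>c>d$), and $\calD$ the fat sausage over $\{x,y,z\}$ implemented by $x\succ y\succ z$, $y\succ x\succ z$, $y\succ z\succ x$ (single-peaked w.r.t.\ $x>y>z$); here $d$ is a child of a P-node, so your lemma is supposed to apply. The substituted profile is $a\succ c\succ x\succ y\succ z$, \ $c\succ a\succ y\succ x\succ z$, \ $c\succ y\succ z\succ x\succ a$. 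Tracking top-$t$ sets, the forced adjacencies pin the axis (up to reversal) to $x>a>c>y>z$, and then the third vote's top four candidates $\{c,y,z,x\}$ are not an interval; reversing the block in the last vote, as in the tweak of Proposition~\ref{pro:embedding}, fails the same way. So the constructed profile is single-peaked with respect to \emph{no} axis, even though the target clone structure is single-peaked (its tree has only P-nodes).

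A correct version of your plan would have to rank the block monotonically for every voter whose peak lies outside it, carry the internal structure of each embedded piece only through voters whose peak is the leaf being expanded (such voters need not exist in the $\calC'$-profile, so you must argue they can be added without destroying clones of $\calC'$ or single-peakedness), and then redo the parasite-clone analysis for this modified substitution --- none of which appears in your sketch, and this is where the hypothesis on Q-nodes would actually have to be used quantitatively. The paper avoids the induction altogether: it fixes the axis once and for all as the left-to-right order of the leaves of $T(\calC)$ and writes down an explicit profile (four votes for each child of every P-node and for the two extreme children of every Q-node), in which every block is ranked monotonically and the order $\succ_v$ on the complement of each subtree is specified globally; the assumption that interior children of Q-nodes are singletons is used there to keep clone sets contiguous in $\succ_v$, not to justify a boundary check during an embedding step.
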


The second part of Proposition~\ref{thm:counterexample} shows that
Proposition~\ref{prop:sp-tree} does not characterize single-peaked
clone structures; finding such a characterization is an interesting
open problem.

\section{Clones in Single-Crossing Elections}
\label{sec:sc}

Let us now consider a different domain restriction called
\emph{single-crossing}. The idea behind single-crossing elections is
similar to that behind single-peaked elections, but now it is the
voters who are ordered along some axis; say, the traditional
left-to-right spectrum of political views. Consider a voter $v$ 
on one of the extreme ends of the spectrum and
two candidates, $c$ and $d$ such that
$v$ prefers $c$ to $d$. As we move toward the other end of the voter
spectrum, for a while voters agree that $c$ is better than $d$, but
eventually $d$ \emph{crosses} $c$ and, from this point on, the voters
prefer $d$ to $c$. Single-crossing dates back at least to
Mirrlees~\cite{mir:j:single-crossing}; see
also~\cite{gan-sma:j:single-crossing,sap-toh:j:single-crossing-strategy-proofness,bar-mor:j:top-monotonicity}
for more recent work that also describes realistic settings where
single-crossing profiles arise. Formally, we use the following
definition.

\begin{definition}\label{def:sc}
  We say that a preference profile $\calR = (R_1, \ldots, R_n)$ over
  candidate set $C$ is {\em single-crossing with respect to order $\lhd$ over
  $[n]$}, if for every pair of distinct candidates $c, d \in C$ it holds
  that either 
  $\{i \mid c \succ_i d \} \lhd \{j \mid d \succ_j c \}$ or 
  $\{i \mid d \succ_i c \} \lhd \{j \mid c \succ_j d \}$.  We say that a profile is
  {\em single-crossing} if there exists an order $\lhd$ with respect to which
  it is single-crossing.
\end{definition}

Strictly speaking, the notion introduced in Definition~\ref{def:sc} is
referred to as \emph{order restriction} and not
single-crossing. However, in our setting these two notions are
equivalent and the term ``single-crossing'' much more intuitively
describes the notion.  

We observe that one can check in polynomial time whether a given profile is single-crossing;
to the best of our knowledge, this observation does not appear in the literature.
  
\begin{proposition}\label{prop:sc-easy}
The problem of checking if a given profile is single-crossing is in $\p$.
\end{proposition}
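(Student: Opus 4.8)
The plan is to give a direct polynomial-time recognition algorithm for the single-crossing property based on pairwise comparisons. First I would observe that a single-crossing order $\lhd$ on $[n]$ is essentially determined, up to reversal and permutation of ``tied'' voters, by a simple binary relation: for voters $i,j$ say $i \sim j$ if $R_i = R_j$, and for non-equal voters define a ``potential betweenness'' structure by looking at which pairs of candidates they disagree on. Concretely, for each unordered pair $\{c,d\}$ of candidates, the set of voters ranking $c$ above $d$ must form a prefix or a suffix of $\lhd$; equivalently, the ``disagreement graph'' whose vertices are the voters and whose edges connect $i$ and $j$ whenever $R_i$ and $R_j$ differ on whether $c\succ d$, must, for a valid order, have the voters on each side of the crossing point forming the two classes of a bipartition consistent across all pairs $\{c,d\}$.

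The cleanest route, I think, is to reduce to the consecutive-ones property, which is exactly what PQ-trees (already introduced in the paper) solve in polynomial time. For each ordered pair $(c,d)$ with $c\ne d$, let $S_{c,d} = \{i \mid c\succ_i d\}$. A profile is single-crossing with respect to $\lhd$ if and only if, under the ordering $\lhd$ of $[n]$, every set $S_{c,d}$ is an interval that is either a prefix or a suffix. I would first note we may restrict attention to the family $\calS$ of all such sets $S_{c,d}$ together with their complements; then single-crossingness with respect to $\lhd$ is equivalent to demanding that each member of $\calS$ is consecutive in $\lhd$ \emph{and} that $\lhd$ starts/ends appropriately. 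Since there are at most $m(m-1)$ such sets, each computable in polynomial time, we can feed this set system to the consecutive-ones / PQ-tree test of Booth and Lueker~\cite{boo-lue:j:consecutive-ones-property} to decide in polynomial time whether an ordering making all of them consecutive exists. The remaining work is to argue that among the orderings produced by the PQ-tree one can always pick one that additionally makes each $S_{c,d}$ a prefix or suffix (not just an interval); this follows because for each pair the sets $S_{c,d}$ and $S_{d,c}$ partition $[n]$, so ``$S_{c,d}$ is an interval whose complement $S_{d,c}$ is also an interval'' already forces $S_{c,d}$ to be a prefix or a suffix of any ordering in which both are consecutive.

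So the key steps, in order, are: (1) define the sets $S_{c,d}$ and observe the reformulation of single-crossingness as a consecutive-ones condition on this family, handling identical voters by noting they can be placed adjacently without loss of generality; (2) show that ``interval with interval complement'' is the same as ``prefix or suffix,'' so the plain consecutive-ones test already captures the single-crossing condition; (3) invoke the polynomial-time consecutive-ones / PQ-tree algorithm on the (polynomially many, polynomially sized) family $\{S_{c,d}\}$, concluding membership in $\p$.

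The main obstacle I anticipate is step (2): making sure that the consecutive-ones test over the family $\{S_{c,d} : c\ne d\}$ is \emph{exactly} equivalent to single-crossingness, and not merely necessary. One has to check that if an ordering $\lhd$ makes all the $S_{c,d}$ consecutive, then for each pair the two halves really do sit on opposite ends, which uses the complementarity $S_{c,d}\cup S_{d,c}=[n]$, $S_{c,d}\cap S_{d,c}=\emptyset$ — an interval whose set-theoretic complement is also an interval must be a prefix or a suffix. A minor subtlety is voters with identical preference orders: they impose no constraints and can be grouped, so one should first quotient by the equality relation (or simply note that the consecutive-ones instance automatically tolerates them). Once these points are nailed down, the polynomial-time claim is immediate from the Booth–Lueker result, and no delicate combinatorics beyond this is needed.
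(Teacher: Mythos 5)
Your proof is correct, but it takes a genuinely different route from the paper's. The paper fixes a candidate first voter (trying each of the $n$ voters in turn), and relative to that voter's ranking builds a directed graph on the voters with an edge from $j$ to $k$ whenever some pair of candidates forces $j$ to precede $k$; since at least one of the two edges is present for every pair of voters, acyclicity of this graph is equivalent to the existence of a witnessing order with the chosen voter first, so the whole test is elementary cycle detection. You instead observe that single-crossingness with respect to $\lhd$ is exactly the condition that every set $S_{c,d}=\{i \mid c\succ_i d\}$ is a prefix or suffix of $\lhd$, and reduce to the consecutive-ones property over the family $\{S_{c,d} : c\neq d\}$, invoking the Booth--Lueker PQ-tree algorithm~\cite{boo-lue:j:consecutive-ones-property}. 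Your key lemma---that an interval whose complement is also an interval must be a prefix or a suffix, so including both $S_{c,d}$ and $S_{d,c}$ in the family makes plain consecutiveness equivalent to the single-crossing condition---is stated and is correct, and identical voters are indeed harmless. The trade-off: the paper's argument is self-contained and needs nothing beyond a graph acyclicity check, while yours leans on external machinery but buys more, namely a PQ-tree that compactly encodes \emph{all} voter orderings witnessing single-crossingness, which dovetails nicely with the representation theme of the paper (and would, for instance, immediately give more than a yes/no answer). Either way the polynomial bound is immediate, so your proposal stands as a valid alternative proof.
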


As in the case of single-peakedness, we would like to know which clone
structures can be implemented by single-crossing profiles and what
is the complexity of decloning towards a single-crossing
profile. We reach opposite answers from those for the
case of single-peakedness.

\begin{theorem}\label{thm:scimp}
For every clone structure $\calC$ there exists a single-crossing profile
$\calR$ such that $\calC = \calC(\calR)$.
\end{theorem}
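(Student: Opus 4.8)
The plan is to build a single-crossing profile implementing $\calC$ by induction on the PQ-tree structure of $\calC$, mirroring the inductive scheme already used for Theorem \ref{thm:char-main}. The base case is irreducible clone structures: by Theorem \ref{thm:irreducible}, these are strings of sausages and fat sausages. A string of sausages over $c_1, \dots, c_m$ is implemented by the single linear order $c_1 \succ \cdots \succ c_m$ (a one-voter profile is trivially single-crossing), so in fact I should build a slightly richer object. A fat sausage over $c_1, \dots, c_m$ is implemented by the cyclic profile of Example \ref{exm:string}; the key point is that this cyclic profile \emph{is} single-crossing with respect to the natural order on voters $1 \lhd 2 \lhd \cdots \lhd m$, since for any pair $c_i, c_j$ with $i<j$, the voters ranking $c_i$ above $c_j$ are exactly $\{1, \dots, i\} \cup \{j+1, \dots, m\}$ read in the cyclic order, which is a contiguous block once we start the voter axis at the right place — this needs a small check, but it works because each voter's order is a rotation.

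For the inductive step, I would take a clone structure $\calC$ over $C$ with decomposition $\Dec(C) = \{C_1, \dots, C_k\}$, together with the ``quotient'' structure $\calC'$ obtained by collapsing each $\calC_i$ to a fresh point $c^i$. By induction, $\calC'$ has a single-crossing implementation $\calR' = (R'_1, \dots, R'_N)$ with voter order $\lhd'$, and each irreducible $\calC_i$ has a single-crossing implementation $\calR^{(i)} = (R^{(i)}_1, \dots, R^{(i)}_{N_i})$ with voter order $\lhd_i$. I then want to substitute the $\calR^{(i)}$ blocks into $\calR'$ in place of $c^i$. The construction: expand the voter axis of $\calR'$ by replacing the single ``coordinate'' for each candidate with enough room; concretely, build a new profile whose voters are indexed so that, reading along the new voter axis, one first runs through the internal reorderings of $C_1$ (all voters here agree with $R'$ on everything outside $C_1$, and run through $\calR^{(1)}$ on $C_1$), then through those of $C_2$, and so on, and then through the ``macroscopic'' crossings of $\calR'$. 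The standard trick (the single-crossing analogue of the profile-splicing used in Proposition \ref{pro:embedding}) is that single-crossing profiles compose along a lexicographic-style product of voter axes: first resolve the sub-block crossings, then the block-level crossings.

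The main obstacle — exactly as flagged for Proposition \ref{pro:embedding} in the single-peaked case — is avoiding ``parasite'' clones that straddle the boundary between a $C_i$ and the rest of $C$, and dually making sure that every set in $\calC(c^i \to \calC_i)$ that is supposed to be a clone set really is one. For the single-crossing setting this amounts to two verifications. First, no spurious clone sets: if $X$ is a clone set of the composed profile and $X$ intersects $C_i$ nontrivially, then the pairs $(x, y)$ with $x \in X \cap C_i$, $y \in C_i \setminus X$ must never be separated from $C \setminus C_i$ — but within the $\calR^{(i)}$-block all voters agree with $R'$ outside $C_i$, so $X \cap (C \setminus C_i)$ being empty or all of it forces $X \subseteq C_i$ or $X \supseteq C_i$, reducing to the inductive hypotheses. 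Second, completeness: each genuine clone set of $\calC$ either lies inside some $C_i$, or contains some $C_i$'s entirely and corresponds to a clone set of $\calC'$; in the composed profile such a set is still a clone set because collapsing $C_i$ to $c^i$ in the composed profile recovers exactly $\calR'$ (up to voter duplication, which does not affect which sets are clone sets). Checking the single-crossing \emph{order} is preserved under the lexicographic product of voter axes is the one genuinely new computation, and it is where I expect to spend the most effort; once that lemma is in hand, the induction closes and, combined with the base case and the PQ-tree decomposition, yields the theorem.
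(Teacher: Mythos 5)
Your overall strategy (irreducible base cases plus closure of single-crossing-implementable structures under embedding) is the same as the paper's, but both of the places where you defer to ``a small check'' are exactly where the proof breaks, and the paper needs a different device at each.

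First, the base case: the cyclic profile of Example~\ref{exm:string} is \emph{not} single-crossing, under the natural voter order or any other. For $m=3$ it is the Condorcet-cycle profile, whose majority relation is cyclic, while any single-crossing profile with an odd number of voters has a transitive (median-voter) majority relation; more directly, for a pair $i<j$ the voters preferring $j$ to $i$ are $\{i+1,\dots,j\}$ and those preferring $i$ to $j$ are $\{1,\dots,i\}\cup\{j+1,\dots,m\}$, and there is no single linear order on voters making all of these splits prefix/suffix simultaneously (each pair would force a different cut of the cycle). So ``rotation'' does not rescue it. The paper instead implements the fat sausage by a \emph{slide} (Definition~\ref{def:slide}): one candidate descends step by step through an otherwise fixed order; only pairs involving the sliding candidate ever flip, each exactly once, so the profile is single-crossing, and Proposition~\ref{prop:sc-fat} checks it yields a fat sausage.

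Second, the splicing step: your claim that parasite clones are automatically excluded because ``all voters in the $\calR^{(i)}$-block agree with $R'$ outside $C_i$'' is false for the construction you describe. Take the quotient implemented by the single order $a\succ c\succ b$ and $\calC_i$ a string of sausages on $\{d_1,d_2\}$ implemented by the single order $d_1\succ d_2$; your lexicographic composition only ever produces the order $a\succ d_1\succ d_2\succ b$ (there are no internal or macroscopic crossings to run through), and then $\{a,d_1\}$ is a clone of the composed profile but is not in $\calC(c\rightarrow\calD)$. In general, whenever the block's internal order has the same top element in every composed vote and some external candidate sits just above the block in every vote, you get a straddling clone. The paper's fix is the reversal trick: using Observation~\ref{cor:sc-reverse} one may assume $Q_n=\revnot{Q_1}$, and the composed profile is $(R_1(c\rightarrow Q_1),\dots,R_n(c\rightarrow Q_1),R_n(c\rightarrow Q_2),\dots,R_n(c\rightarrow Q_n))$; since $P_n$ and $P_{2n-1}$ embed $Q_1$ and $\revnot{Q_1}$ against the identical backdrop $R_n$, any clone meeting both $C$ and $D$ must contain all of $D$, while single-crossingness is preserved because the block's crossings are confined to the second phase. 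Your plan needs this (or an equivalent) ingredient added explicitly; without it the completeness half is fine but the ``no spurious clones'' half fails.
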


\noindent 
We remark that, unlike the construction in the proof of Proposition~\ref{pro:embedding}, 
which leads to a $3$-voter profile (Theorem~\ref{thm:3rep}),  
the proof of Theorem~\ref{thm:scimp} produces a profile with many voters. 

\begin{theorem}\label{thm:scnp}
  Given a profile $\calR$ over a candidate set $C$ and a positive
  integer $k$, it is $\np$-hard to decide if there exists a
  single-crossing profile $\calR'$ with $c(\calR') \geq k$ 
  that can be obtained from $\calR$ by decloning.
\end{theorem}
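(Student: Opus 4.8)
The plan is to prove $\np$-hardness by reduction from a suitable NP-hard problem, most naturally from a packing/independent-set-type problem, since decloning a set removes candidates and the constraint ``the result is single-crossing'' behaves like a global feasibility condition. First I would look for a way to encode, in a single base profile $\calR$ over a carefully designed candidate set, a collection of ``optional'' clone sets $D_1,\dots,D_t$ together with the property that the profile obtained by decloning a subcollection $\{D_i : i \in S\}$ is single-crossing if and only if $S$ is an independent set (or a feasible solution) in some input instance; the objective $c(\calR') \ge k$ then translates into a threshold on $|S|$. A convenient source problem is likely to be an NP-hard variant where the conflict structure between the $D_i$'s can be realized by small ``gadget'' triples of candidates whose mutual orderings across voters force a crossing pattern that is consistent (single-crossing) only when the relevant clones are collapsed.

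The key steps, in order, would be: (1) fix the reduction source and its instance format; (2) design a candidate set consisting of a fixed ``backbone'' of candidates that is single-crossing on its own, plus, for each optional object $D_i$, a small block of candidates that forms a clone set in $\calR$; (3) construct the voter order $\lhd$ and the individual preference orders so that, restricted to the backbone, the profile is single-crossing with respect to $\lhd$, and each block $D_i$, when left uncollapsed, introduces candidates $c, d$ inside the block whose ``crossing intervals'' $\{i : c \succ_i d\}$ are \emph{not} contiguous prefixes/suffixes under \emph{any} reordering compatible with the constraints imposed by the other blocks; (4) verify the forward direction --- a feasible solution yields a decloning that is single-crossing, exhibiting the witnessing order $\lhd$ explicitly; (5) verify the backward direction --- any single-crossing decloned profile with $c(\calR')\ge k$ must have kept uncollapsed exactly a feasible set, using Definition~\ref{def:sc} to extract the combinatorial constraint; (6) observe the construction has polynomial size and that checking single-crossing is in $\p$ (Proposition~\ref{prop:sc-easy}), so the problem is in $\np$ as well, giving $\np$-completeness of the decision version and hence $\np$-hardness of the stated problem.

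The main obstacle I expect is step (3): engineering the gadgets so that the ``conflict'' relation among the optional clone sets is exactly the one in the source instance, while simultaneously guaranteeing that (a) each $D_i$ really is a clone set of the base profile $\calR$ (so that decloning it is a legal operation), and (b) collapsing \emph{all} of them yields a single-crossing profile, so that the only obstructions to single-crossing come from the interaction between \emph{uncollapsed} blocks and not from the backbone. Single-crossing is a fairly rigid condition --- once one fixes which pairs of candidates must ``cross'' and in which direction, the admissible voter orders $\lhd$ form a very restricted set (often a unique order up to reversal on the relevant sub-profile) --- so the delicate part is choosing the individual rankings inside each block and the backbone ordering so that these rigidity constraints compose correctly: the presence of block $D_i$ should forbid precisely those voter orders that would be needed to accommodate a conflicting block $D_j$. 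I would likely prototype this with a three-candidate conflict gadget and chain the gadgets along the voter axis, taking care that the clone-set property is preserved because every $a \notin D_i$ is ranked on the same side of all of $D_i$ by each voter.
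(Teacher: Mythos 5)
Your high-level strategy coincides with the paper's: reduce from a covering/packing problem (the paper uses Exact Cover by 3-Sets), make each optional candidate block a clone set whose retention rigidly constrains the admissible voter orders, arrange that two ``conflicting'' blocks impose incompatible constraints, and let the threshold $c(\calR')\ge k$ count how many blocks survive; membership in $\np$ via Proposition~\ref{prop:sc-easy} is also as in the paper. The genuine gap is that your step (3), which you yourself flag as the main obstacle, is the entire technical content of the theorem, and you neither fix the source problem nor exhibit the gadget. The paper's key device is the \emph{slide} (Definition~\ref{def:slide}): a profile over a block that implements a fat sausage and is single-crossing with respect to a \emph{unique} voter order up to reversal (Proposition~\ref{prop:slide-unique}). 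The construction then nests slides: an outer slide over the index set $[s]$ of the 3-sets, where candidate $j$ is replaced by a block $C_j$ carrying an inner slide $\calQ^j$; the $6s$ orders of $\calQ^j$ are distributed over the three voter groups corresponding to the elements of $S_j$, after swapping one adjacent pair of voters (at a position determined by $j$) inside each third. Keeping $C_j$ uncollapsed then forces, by the uniqueness property, each of those three voter groups to be ordered according to $\lhd_j$ or its reverse, and two sets sharing an element $b_i$ demand incompatible orders on the group for $b_i$; the candidate-count threshold forces at least $k$ blocks to survive, which together with pairwise disjointness and $|B|=3k$ yields an exact cover, while the converse direction exhibits an explicit witnessing order. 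Without a concrete gadget enjoying such a uniqueness lemma, your claim that an uncollapsed block forbids ``precisely'' the voter orders needed by a conflicting block is an aspiration, not an argument.

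A second point your plan does not address, and which the paper's construction handles by design: you must control the \emph{entire} clone structure of $\calR$, not merely certify that each designated block $D_i$ is a clone set. Decloning may collapse \emph{any} clone set, so an adversarial solution could collapse a proper sub-block of some $D_i$, or a clone straddling your backbone, thereby defusing a conflict at a smaller cost in candidates than your accounting assumes; your backward direction (step (5)) would then break. In the paper this cannot happen because both the outer profile and every block are slides, hence fat sausages, so the only clone sets of $\calR$ besides singletons and $C$ itself are the blocks $C_1,\dots,C_s$, and the only available declonings are exactly the intended ones. Any completed version of your construction needs an analogous argument pinning down $\calC(\calR)$ before the equivalence with the source instance can be claimed.
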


\noindent However, optimal decloning is easy if the order of voters is fixed.

\begin{proposition}\label{prop:sc-fixedeasy}
  Given a profile $\calR$ over a candidate set $C$, a positive
  integer $k$, and an order $\lhd$, we can decide in polynomial time if there exists a
  profile $\calR'$ with $c(\calR') \geq k$ that is single-crossing with respect to $\lhd$
  and can be obtained from $\calR$ by decloning.
\end{proposition}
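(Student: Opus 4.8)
The plan is to reduce this to an optimization problem over partitions of the candidate set and then solve that problem greedily, exploiting the structure of $\calC(\calR)$. First I would relabel the voters so that $\lhd$ orders them $1,2,\ldots,n$, and call an unordered pair $\{c,d\}$ of candidates \emph{bad} if $\{i : c\succ_i d\}$ is neither an initial nor a final segment of $1,\ldots,n$; equivalently, the preference between $c$ and $d$ flips at least twice as $i$ runs from $1$ to $n$. The bad pairs are precisely the pairs on which $\calR$ violates single-crossingness with respect to $\lhd$, and they can be listed in polynomial time. Next I would record the effect of decloning: if $D_1,\ldots,D_t$ are pairwise disjoint clone sets of $\calR$, each of size at least $2$, and we declone each $D_j$ to a fresh candidate, then the result is well-defined (each $D_j$ remains a clone set after the others are collapsed), and —because a clone set has a well-defined relation to every candidate outside it— the comparison sequence along $\lhd$ of the two super-candidates containing distinct $c,d$ coincides with the comparison sequence of $\{c,d\}$ in $\calR$. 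A routine induction on the number of decloning steps shows that \emph{every} profile obtainable from $\calR$ by decloning arises in exactly this way, from some such family, and that it has as many candidates as there are blocks in the partition of $C$ whose non-singleton blocks are $D_1,\ldots,D_t$. Putting these together, a decloned profile is single-crossing with respect to $\lhd$ iff every bad pair lies inside a single block of the corresponding partition. So the question reduces to: among all partitions $\calP$ of $C$ such that (i) every bad pair is contained in one block of $\calP$ and (ii) every non-singleton block of $\calP$ is a clone set of $\calR$ (call such $\calP$ \emph{valid}), is $\max_{\calP}|\calP|\ge k$?

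Second, I would show the optimum is attained by a canonical partition. The partition $\{C\}$ is valid since $C\in\calC(\calR)$ (Proposition~\ref{pro:inclusive}), so valid partitions exist; and they are closed under common refinement: if $\calP_1,\calP_2$ are valid and a block $B$ of their meet has $|B|\ge2$, then $B=P_1\cap P_2$ for non-singleton blocks $P_1,P_2$, which are clone sets, so $B\ne\emptyset$ is a clone set by axiom~A2 (Proposition~\ref{pro:inclusive}(3)), while condition (i) is clearly inherited. Hence there is a unique finest valid partition $\calP^{*}$, and since coarsening a valid partition only decreases the number of blocks, $\calP^{*}$ maximizes $|\calP|$.

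Third, I would compute $\calP^{*}$ in polynomial time. For $S\subseteq C$ let $\langle S\rangle$ be the intersection of all clone sets of $\calR$ containing $S$; by axiom~A2 this is itself the smallest clone set containing $S$, and it is polynomial-time computable since $\calC(\calR)$ has polynomially many sets and can be listed in polynomial time. Starting from the all-singletons partition, I would repeatedly: (1) if some bad pair is split across two blocks, merge those blocks; and (2) if some block $B$ with $|B|\ge 2$ is not a clone set, replace $B$ together with every block that meets $\langle B\rangle$ by their union. Each step strictly decreases the number of blocks, so the procedure halts after at most $|C|$ rounds, each polynomial, and at termination (i) and (ii) hold, so the partition is valid. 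The key invariant, proved by induction, is that the current partition always refines every valid partition: in a valid $\calP$ the two ends of a split bad pair lie in one block, and a current block $B$ with $|B|\ge2$ lies in a non-singleton, hence clone-set, block of $\calP$, which therefore also contains $\langle B\rangle$ and every block meeting it. Thus the output refines $\calP^{*}$ and, being valid, equals $\calP^{*}$, so we answer ``yes'' precisely when $|\calP^{*}|\ge k$.

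The main obstacle is the reduction in the first paragraph: one must check carefully that decloning never creates a new bad comparison sequence and that the only way a bad pair can be eliminated is by collapsing both of its members into a single decloned clone set, and that the profiles reachable by (iterated) decloning are exactly those described by disjoint families of clone sets of the original $\calR$. Once that is pinned down, the partition problem and its greedy solution are comparatively routine, the one point needing care being the refinement invariant that certifies optimality of the forced merges.
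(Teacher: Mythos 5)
Your proposal is correct and follows essentially the same route as the paper: identify the pairs that violate single-crossingness with respect to $\lhd$, argue that each such pair must be collapsed together and hence the smallest clone set containing it is forced to be decloned, and then declone exactly this forced structure, which is optimal because the merges are forced. The only difference is packaging --- the paper observes that the family of minimal clone sets $C(x,y)$ over violating pairs is laminar and declones its maximal members, whereas you phrase optimality via the finest valid partition (closure under meet) and compute it by iterated forced merges; the mathematical content coincides.
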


\section{Conclusions and Future Work}\label{sec:concl}
We have characterized clone structures in elections, obtained a
convenient representation using PQ-trees, and used this representation
in an algorithm that restores an election's single-peakedness by
decloning as few candidates as possible.  On the other hand, we have shown that
recovering the single-crossing property optimally is $\np$-hard. We also
made first steps toward characterizing clone structures in
single-peaked elections and we have shown that all clone structures
can be implemented with single-crossing profiles.

Other research directions include establishing the complexity of
verifying whether a given candidate can be made an election winner
(under a particular voting rule) by decloning a given number of candidates.
We hope that our work will facilitate obtaining hardness
results for this problem, thus complementing the easiness results of
Elkind et al.~\cite{elk-fal-sli:c:cloning}.

\bibliographystyle{abbrv}
\bibliography{grystacs}

\begin{thebibliography}{10}

\bibitem{arr-sen-suz:b:handbook-of-social-choice}
K.~Arrow, A.~Sen, and K.~Suzumura, editors.
\newblock {\em Handbook of Social Choice and Welfare, Volume 1}.
\newblock Elsevier, 2002.

\bibitem{bar-mor:j:top-monotonicity}
S.~Barber{\'a} and B.~Moreno.
\newblock Top monotonicity: {A} common root for single peakedness, single
  crossing and the median voter result.
\newblock {\em Games and Economic Behavior}.
\newblock To appear.

\bibitem{bar-tri:j:stable-matching-from-psychological-model}
J.~{{Bartholdi}}, III and M.~Trick.
\newblock Stable matching with preferences derived from a psychological model.
\newblock {\em Operations Research Letters}, 5(4):165--169, 1986.

\bibitem{bla:b:polsci:committees-elections}
D.~Black.
\newblock {\em The Theory of Committees and Elections}.
\newblock Cambridge University Press, 1958.

\bibitem{boo-lue:j:consecutive-ones-property}
K.~Booth and G.~Lueker.
\newblock Testing for the consecutive ones property, interval graphs, and graph
  planarity using {PQ}-tree algorithms.
\newblock {\em JCSS}, 13(3):335--379, 1976.

\bibitem{bra-bri-hem-hem:c:sp2}
F.~Brandt, M.~Brill, E.~Hemaspaandra, and L.~Hemaspaandra.
\newblock Bypassing combinatorial protections: Polynomial-time algorithms for
  single-peaked electorates.
\newblock In {\em Proc. of AAAI-10}, pages 715--722. AAAI Press, July 2010.

\bibitem{con:j:eliciting-singlepeaked}
V.~Conitzer.
\newblock Eliciting single-peaked preferences using comparison queries.
\newblock {\em Journal of Artificial Intelligence Research}, 35:161--191, 2009.

\bibitem{ehr-har-roz:b:2-structures}
A.~Ehrenfeucht, T.~Harju, and G.~Rozenberg.
\newblock {\em The Theory of 2-Structures: {A} Framework for Decomposition and
  Transformation of Graphs}.
\newblock World Scientific, 1999.

\bibitem{elk-fal-sli:c:cloning}
E.~Elkind, P.~Faliszewski, and A.~Slinko.
\newblock Cloning in elections.
\newblock In {\em Proc. of AAAI-10}, pages 768--773. AAAI Press, July 2010.

\bibitem{eph-ros:j:multiagent-planning}
E.~Ephrati and J.~Rosenschein.
\newblock A heuristic technique for multi-agent planning.
\newblock {\em Annals of Mathematics and Artificial Intelligence},
  20(1--4):13--67, 1997.

\bibitem{esc-lan-ozt:c:single-peaked-consistency}
B.~Escoffier, J.~Lang, and M.~{\"O}zt{\"u}rk.
\newblock Single-peaked consistency and its complexity.
\newblock In {\em Proc. of ECAI-08}, pages 366--370, July 2008.

\bibitem{fal-hem-hem:j:cacm-survey}
P.~Faliszewski, E.~Hemaspaandra, and L.~Hemaspaandra.
\newblock Using complexity to protect elections.
\newblock {\em Commun. ACM}, 53(11):74--82, 2010.

\bibitem{fal-hem-hem-rot:j:single-peaked-preferences}
P.~Faliszewski, E.~Hemaspaandra, L.~Hemaspaandra, and J.~Rothe.
\newblock The shield that never was: {Societies} with single-peaked preferences
  are more open to manipulation and control.
\newblock {\em Information and Computation}, 209(2):89--107, 2011.

\bibitem{gan-sma:j:single-crossing}
J.~Gans and M.~Smart.
\newblock Majority voting with single-crossing preferences.
\newblock {\em Journal of Public Economics}, 59:219--237, 1996.

\bibitem{gib:j:polsci:manipulation}
A.~Gibbard.
\newblock Manipulation of voting schemes.
\newblock {\em Econometrica}, 41(4):587--601, 1973.

\bibitem{laf-lai-las:j:composition}
G.~Laffond, J.~Laine, and J.~Laslier.
\newblock Composition consistent tournament solutions and social choice
  functions.
\newblock {\em Social Choice and Welfare}, 13(1):75--93, 1996.

\bibitem{las:j:rank-based}
J.~Laslier.
\newblock Rank-based choice correspondencies.
\newblock {\em Economics Letters}, 52(3):279--286, 1996.

\bibitem{las:j:aggregation}
J.~Laslier.
\newblock Aggregation of preferences with a variable set of alternatives.
\newblock {\em Social Choice and Welfare}, 17(2):269--282, 2000.

\bibitem{mir:j:single-crossing}
J.~Mirrlees.
\newblock An exploration in the theory of optimal income taxation.
\newblock {\em Review of Economic Studies}, 38:175--208, 1971.

\bibitem{moe:j:decomposition}
R.~M\"{o}hring.
\newblock Algorithmic aspects of the substitution decomposition in optimization
  over relations, set systems and boolean functions.
\newblock {\em Annals of Op. Res.}, 4:195--225, 1985.

\bibitem{rei:j:ustconinl}
O.~Reingold.
\newblock Undirected connectivity in log-space.
\newblock {\em Journal of the ACM}, 55(4), 2008.

\bibitem{sap-toh:j:single-crossing-strategy-proofness}
A.~Saporiti and F.~Tohm{\'e}.
\newblock Single-crossing, strategic voting and the median choice rule.
\newblock {\em Social Choice and Welfare}, 26(2):363--383, 2006.

\bibitem{sat:j:polsci:manipulation}
M.~Satterthwaite.
\newblock Strategy-proofness and {Arrow's} conditions: Existence and
  correspondence theorems for voting procedures and social welfare
  functions\typeout{MINOR PANIC: sat: Missing issue number}.
\newblock {\em Journal of Economic Theory}, 10(2):187--217, 1975.

\bibitem{tid:j:clones}
T.~Tideman.
\newblock Independence of clones as a criterion for voting rules.
\newblock {\em Social Choice and Welfare}, 4(3):185--206, 1987.

\bibitem{wal:c:uncertainty-in-preference-elicitation-aggregation}
T.~Walsh.
\newblock Uncertainty in preference elicitation and aggregation.
\newblock In {\em Proc. of AAAI-07}, pages 3--8, July 2007.

\end{thebibliography}

\appendix

\renewcommand{\thetheorem}{\Alph{section}.\arabic{theorem}}

\section{Material Missing from  Section~\ref{sec:char}}
\label{app:char}

Before we move on to the proofs, we need several additional definitions.

\begin{definition}
Let $C$ be a set of candidates.
\begin{enumerate}
\item Given three pairwise disjoint subsets $X, Y, Z$ of $C$, and an
  order $\succ$ over $C$, we say that $X$ {\em separates} $Y$ and $Z$
  in $\succ$ if either $Y \succ X \succ Z$ or $Z \succ X \succ Y$.
\item We say that an alternative $a \in C$ {\em splits} a subset $X
  \subseteq C$ with respect to an order $\succ$ if $X$ can be
  partitioned into two nonempty sets $X_1$ and $X_2$ such that $\{a\}$
  separates $X_1$ and $X_2$ in $\succ$; note that this implies
  $a\not\in X$.
\end{enumerate}
\end{definition}

\begin{proposition}
\label{pro:reversing}
Given a profile $\calR=(R_1, \dots, R_n)$, let $\calR'=(R'_1, \dots,
R'_n)$ be a profile such that $R'_i\in\{R_i, \ola{R_i}\}$ for all
$i=1, \dots, n$.  Then ${\calC}(\calR)={\calC}(\calR')$.
\end{proposition}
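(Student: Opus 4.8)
The plan is to reduce the claim to a voter-by-voter statement: for a single total order $\succ$ over $C$ and a non-empty set $X\subseteq C$, the ``local clone condition''
\[
\forall c,c'\in X,\ \forall a\in C\setminus X:\quad (c\succ a \iff c'\succ a)
\]
holds for $\succ$ if and only if it holds for $\ola{\succ}$. Granting this, the proposition follows immediately: by definition, $X\in\calC(\calR)$ iff the local condition holds for every $R_i$; since each $R'_i$ is either $R_i$ or $\ola{R_i}$, the local condition holds for every $R'_i$ exactly when it holds for every $R_i$, so $\calC(\calR)=\calC(\calR')$.

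To establish the voter-by-voter statement, I would first observe that whenever $c\in X$ and $a\in C\setminus X$ we have $c\neq a$; hence, $\succ$ being a total (complete, antisymmetric) order, exactly one of $c\succ a$ and $a\succ c$ holds, i.e.\ $c\succ a \iff \lnot(a\succ c)$. By the definition of reversal, $c\ola{\succ} a \iff a\succ c$. Therefore the biconditional $c\succ a \iff c'\succ a$ is equivalent to the biconditional obtained by negating both sides, namely $a\succ c \iff a\succ c'$, which is in turn literally $c\ola{\succ} a \iff c'\ola{\succ} a$. Thus the local condition for $\succ$ and the local condition for $\ola{\succ}$ are the same assertion, quantified over the same $c,c',a$.

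There is essentially no obstacle here; the only point deserving a line of care is the remark that $c$ and $a$ are always distinct (because $c$ ranges over $X$ while $a$ ranges over its complement), so that totality and antisymmetry let us freely replace $c\succ a$ by $\lnot(a\succ c)$. This is also precisely why the definition of a clone set only compares members of $X$ against non-members. Assembling the equivalences above for each voter $i$ yields $\calC(\calR)=\calC(\calR')$, completing the argument.
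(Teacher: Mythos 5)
Your proof is correct. The paper states Proposition~\ref{pro:reversing} without proof, treating it as immediate, and your voter-by-voter argument (using that for $c\in X$, $a\notin X$ totality and antisymmetry give $c\succ a \iff \lnot(a\succ c)$, so the clone condition is invariant under reversing any individual order) is exactly the natural verification the authors have in mind; an equivalent way to phrase it is that $X$ is a clone set for a voter iff $X$ is ranked contiguously in that voter's order, and contiguity is clearly preserved under reversal.
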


\newtheorem*{propositionproinclusive}{Proposition~\ref{pro:inclusive}}

\begin{propositionproinclusive}
Let $\calR$ be a profile on $[m]$. Then
(1) $\{i\} \in {\calC}(\calR)$ for any $i\in [m]$;
(2) $\emptyset \notin \calC(\calR)$ and $[m] \in {\calC}(\calR)$;
(3) if $C_1$ and $C_2$ are in ${\calC}(\calR)$ and $C_1\cap
  C_2\ne \emptyset$, then $C_1\cup C_2$ and $C_1\cap C_2$ are also in
  ${\calC}(\calR)$;
(4) if $C_1$ and $C_2$ are in $\calC(\calR)$ and $C_1 \bowtie
  C_2$, then $C_1 \setminus C_2$ and $C_2 \setminus C_1$ are also in
  $\calC(\calR)$.
\end{propositionproinclusive}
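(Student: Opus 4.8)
The plan is to prove each of the four claims directly from the definition of a clone set, using the characterization that $X \subseteq C$ is a clone set for $\calR$ if and only if for every voter $i$ and every $a \in C \setminus X$, either $x \succ_i a$ for all $x \in X$ or $a \succ_i x$ for all $x \in X$; equivalently, the elements of $X$ occupy a contiguous block in each $\succ_i$, and $X$ relates to each outside element $a$ uniformly across all members of $X$ (though not necessarily uniformly across voters). I will use this ``block'' reformulation throughout.

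For part (1), a singleton $\{i\}$ trivially satisfies the condition since there is only one element to compare. For part (2), $\emptyset$ is excluded because clone sets are required to be non-empty by definition, and $[m]$ is a clone set vacuously since $C \setminus [m] = \emptyset$, so there are no constraints to check. Part (3) is the heart of the argument: suppose $C_1, C_2 \in \calC(\calR)$ with $C_1 \cap C_2 \neq \emptyset$, and fix a voter $i$ and an element $a \notin C_1 \cup C_2$. I would argue that $a$ relates uniformly to $C_1 \cup C_2$: since $a \notin C_1$, either $C_1 \succ_i a$ or $a \succ_i C_1$; similarly for $C_2$; and since $C_1$ and $C_2$ share a common element $c$, the position of $c$ relative to $a$ forces the two cases to agree, so $a$ sits entirely on one side of $C_1 \cup C_2$. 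For $C_1 \cap C_2$, pick $a \notin C_1 \cap C_2$; if $a \notin C_1$ then $a$ relates uniformly to $C_1 \supseteq C_1 \cap C_2$ hence to the intersection, and symmetrically if $a \notin C_2$; since $a$ must be outside at least one of them, we are done. (One should also note $C_1 \cap C_2 \neq \emptyset$ so it is a legitimate non-empty clone set.)

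For part (4), suppose $C_1 \bowtie C_2$, so $C_1 \cap C_2$, $C_1 \setminus C_2$ and $C_2 \setminus C_1$ are all non-empty. I want to show $C_1 \setminus C_2 \in \calC(\calR)$ (the other case is symmetric). Fix a voter $i$; by part (3)'s reasoning or directly, $C_1$, $C_2$, and $C_1 \cap C_2$ are all blocks in $\succ_i$. Since $C_1 \cap C_2$ is a sub-block of the block $C_1$, it must be either a prefix or a suffix of $C_1$ within $\succ_i$ (a contiguous sub-block of a contiguous block that is also forced to be contiguous with respect to $C_1$'s complement inside — here I would spell out that $C_1 \cap C_2$ lies at one end of $C_1$ because $C_2$ is itself a block containing $C_1 \cap C_2$ and extends outside $C_1$). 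Hence $C_1 \setminus C_2 = C_1 \setminus (C_1 \cap C_2)$ is the complementary prefix/suffix of $C_1$, which is again a block in $\succ_i$. It remains to check the uniformity condition for elements $a \notin C_1 \setminus C_2$: if $a \notin C_1$, then $a$ relates uniformly to $C_1$ hence to $C_1 \setminus C_2$; if $a \in C_1 \cap C_2$, then $a$ lies at one end of the $C_1$-block and all of $C_1 \setminus C_2$ lies strictly on the other side of $a$ within that block, giving uniformity. This covers all $a \notin C_1 \setminus C_2$.

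The main obstacle is the bookkeeping in parts (3) and (4) about \emph{which end} of a block a sub-block occupies, and handling the interaction between the per-voter block structure and the cross-voter uniformity requirement: a set can be a block in every vote yet fail to be a clone set if different voters place an outside element on different sides without that element being ``aligned'' with the block. I expect the cleanest route is to first record a small lemma — that $X$ is a clone set iff in every vote $X$ is a contiguous block \emph{and} for each $a \notin X$, $a$ is on the same side of $X$ as dictated by whether $a$ beats some (hence every) element of $X$ — and then the rest is careful case analysis on the relative positions of $C_1 \cap C_2$ inside $C_1$ and $C_2$.
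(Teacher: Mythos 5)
Your argument is correct and follows essentially the same route as the paper's: parts (1)--(2) are immediate, the union and intersection claims are verified voter by voter exactly as in the paper's splitting argument, and for part (4) your observation that $C_1\cap C_2$ occupies one end of the $C_1$-block in each vote is just the paper's statement that $C_1\cap C_2$ separates $C_1\setminus C_2$ from $C_2\setminus C_1$ in each $R_i$. One remark: the ``obstacle'' you anticipate in your final paragraph is illusory --- the clone-set definition quantifies over voters one at a time, so being a contiguous block in every vote is already equivalent to being a clone set and no cross-voter alignment condition (or auxiliary lemma) is needed; fortunately your case analysis verifies the per-voter condition directly, so nothing is actually missing.
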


\begin{proof}%
  Properties~(1) and~(2) are immediate.  Let us prove~(3). Let $C_1$
  and $C_2$ be two sets in ${\calC}(\calR)$ with $I=C_1\cap C_2\ne
  \emptyset$, and let $R_i$ be an arbitrary preference order from
  $\calR$.  Consider some $a \in [m]$. Since $C_1 \in \calC(\calR)$,
  if $a \in [m] \setminus C_1$, then $a$ does not split $C_1$ and, as
  a result, $a$ does not split $I$.  Similarly, no alternative in
  $[m]\setminus C_2$ can split $I$. Thus, no element outside of $I$
  can split $I$, and hence members of $I$ are ranked contiguously in
  $R_i$.  Since this holds for any $R_i$ in $\calR$, we have
  $I\in\calC(\calR)$.

  Now, suppose that there is an alternative $a \in [m] \setminus (C_1
  \cup C_2)$ that splits $C_1 \cup C_2$ in some order $R_i$.  
  We know that $a$ splits neither $C_1$ nor $C_2$, hence
  either $C_1 \succ_i a \succ_i C_2$ or $C_2 \succ_i a \succ_i C_1$,
  which is impossible since the intersection of $C_1$ and $C_2$ is
  nonempty.  Thus, $C_1 \union C_2 \in \calC(\calR)$. This proves
  that~(3) holds.

  Let us now consider property~(4). Suppose $C_1, C_2 \in
  \calC(\calR)$ and $C_1\bowtie C_2$.  Consider the set $C_1 \setminus
  C_2$; for $C_2\setminus C_1$ the argument is similar.  First, no
  element outside of $C_1$ can split $C_1 \setminus C_2$, because
  otherwise it will split $C_1$ too.  Further, in each $R_i$, the
  intersection $C_1 \cap C_2$ separates $C_1 \setminus C_2$ and $C_2
  \setminus C_1$.  Hence elements of $C_1 \cap C_2$ cannot split $C_1
  \setminus C_2$ either, and the property follows.
\end{proof}

\newtheorem*{proexclusive1}{Proposition~\ref{pro:exclusive1}}

\begin{proexclusive1}
For any profile $\calR$ on $[m]$, each 
$X \in \calC(\calR)$ 
has at most two proper minimal supersets in $\calC(\calR)$.
\end{proexclusive1}
\begin{proof}%
  For the sake of contradiction, assume that there are three distinct
  sets $Y, Z, W$ in $\calC(\calR)$ such that each of them is a proper
  minimal superset of $X$. It is easy to see that $Y \cap Z = X$: 
  by Proposition~\ref{pro:inclusive}, $Y\cap Z\in\calC(\calR)$, so if
  $(Y \cap Z)\setminus X\neq\emptyset$, neither $Y$ nor $Z$ would be a
  proper minimal superset of $X$.  Similarly, $Y \cap W = X$ and $Z
  \cap W = X$.  Pick two alternatives $y,z$ so that $y \in Y\setminus
  X$ and $z \in Z\setminus X$.  Let $R_i$ be a preference order from
  $\calR$.  The set $X$ separates $Y \setminus X$ and $Z \setminus X$,
  and so either $y \succ_i X \succ_i z$ or $z \succ_i X \succ_i y$; by
  Proposition~\ref{pro:reversing} we may assume the former.  Now, pick
  $w \in W\setminus X$. A similar argument shows that we have $y
  \succ_i X \succ_i w$ (as $w \succ_i X \succ_i y$ leads to a
  contradiction). But now we must have $z \succ_i X \succ_i w$ or $w
  \succ_i X \succ_i z$, none of which is possible.
\end{proof}

\newtheorem*{proexclusive2}{Proposition~\ref{pro:exclusive2}}
\begin{proexclusive2}
  If $\calC$ is a
  clone structure, %
  it does not contain a bicycle chain.
\end{proexclusive2}
\begin{proof}%
  Suppose that a clone structure $\calC$
  contains a bicycle chain $\{A_0, \dots, A_{k-1}\}$, and let
  $\calR=(R_1, \dots, R_n)$ be a preference profile such that
  $\calC=\calC(\calR)$.

  As argued in the proof of Proposition~\ref{pro:inclusive}, the set
  $A_0\cap A_1$ separates $A_0\setminus A_1$ and $A_1\setminus A_0$ in
  $R_1$.  Thus, by Proposition~\ref{pro:reversing}, we can assume that
  we have
  $
  A_0\setminus A_1\succ_1 A_0\cap A_1\succ_1 A_1\setminus A_0.
  $ 
  Further, by the definition of the bicycle chain we have
  $A_1\setminus A_0= A_1\cap A_2$, $A_1\setminus A_2= A_0\cap A_1$.
  Again, we have
  $
  A_1\cap A_0 = A_1\setminus A_2\succ_1 A_1\cap A_2\succ_1 A_2\setminus A_1.
  $ 
  Now, if $k=3$, we have a contradiction already: since $A_0\setminus
  A_1\neq\emptyset$ and $A_0\subseteq A_1\cup A_2$, it has to be the
  case that $A_0$ intersects $A_2$, yet all elements of $A_2$ are
  ranked strictly below $A_0$.  If $k>3$, continuing inductively, we
  obtain that for each $i=1, \dots, k-1$ the set $A_i$ is ranked below
  $A_{i-1}\setminus A_i$ in $R_1$. Hence, all elements of $A_{k-1}$
  are ranked below $A_0$ in $R_1$.  However, we have $A_0\cap
  A_{k-1}\neq\emptyset$, a contradiction.
\end{proof}

\newtheorem*{propcompose}{Proposition~\ref{prop:compose}}
\begin{propcompose}
  Let $\calE$ and $\calF$ be families of subsets on disjoint sets $E$
  and $F$, respectively, that satisfy A1--A5. Then for any $e\in E$
  the set family $\calE(e \rightarrow \calF)$ also satisfies A1--A5.
\end{propcompose}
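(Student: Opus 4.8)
The plan is to verify each of the five axioms for $\calE(e\to\calF)$ directly, exploiting the explicit description of its members. Recall that every set $C\in\calE(e\to\calF)$ falls into exactly one of three categories: (i) $C\subseteq F$, in which case $C\in\calF$; (ii) $C\supseteq F$, in which case $C=(X\setminus\{e\})\cup F$ for some $X\in\calE$ with $e\in X$; (iii) $C\cap F=\emptyset$, in which case $C\in\calE$ and $e\notin C$. I will refer to these as $F$-internal, $F$-straddling, and $F$-external sets. A convenient bookkeeping device is the ``projection'' map sending an $F$-straddling or $F$-external set back to the corresponding set of $\calE$ (replacing the block $F$, if present, by $e$); this map respects $\cup$, $\cap$, $\setminus$, and the $\bowtie$ relation whenever both operands contain $F$ or both avoid it, which is what makes the axioms transfer.

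\textbf{Axioms A1--A3.} For A1, singletons of $F$ and of $E\setminus\{e\}$ are present because $\calF$ and $\calE$ satisfy A1; $\emptyset$ is excluded since neither $\calE$ nor $\calF$ contains it and the construction never creates it; and the top set $(E\setminus\{e\})\cup F$ is present because $E\in\calE$ (as $E$ is its own support, $E$ is in $\calE$, hence $e\in E$ gives the straddling top set). For A2 and A3, take $C_1,C_2\in\calE(e\to\calF)$ and do a short case analysis on which of the three categories each belongs to. If both are $F$-internal, closure follows from A2/A3 for $\calF$ (noting $F$-internal sets are closed under the relevant operations and any result is again a subset of $F$). If both avoid $F$ or both contain $F$, apply the projection to land in $\calE$, use A2/A3 for $\calE$, and pull back. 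The only genuinely mixed case is one set $F$-internal and the other $F$-straddling: but then $C_1\subseteq F\subseteq C_2$, so $C_1\cap C_2=C_1$ and $C_1\cup C_2=C_2$ are already present, and $C_1\not\bowtie C_2$, so A3 is vacuous; similarly an $F$-internal set never intersects an $F$-external set, so A2 is vacuous and A3 vacuous there too.

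\textbf{Axioms A4--A5.} For A4, fix $C\in\calE(e\to\calF)$ and examine its proper minimal supersets. If $C$ is $F$-internal and $C\subsetneq F$, its proper minimal supersets are either $F$-internal (corresponding bijectively to proper minimal supersets of $C$ in $\calF$, of which there are at most two by A4 for $\calF$) or equal to $F$ itself --- and if $F\in\calF$ is already a superset of $C$ then no straddling set can be a \emph{minimal} superset of $C$, so the count is still at most two; a careful check shows that in every subcase at most two arise. If $C=F$ or $C$ is $F$-straddling or $F$-external, its proper minimal supersets all contain $F$ (resp.\ avoid $F$), so projecting to $\calE$ gives a bijection with the proper minimal supersets of the projection of $C$, bounded by two via A4 for $\calE$. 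For A5, suppose $\{A_0,\dots,A_{k-1}\}$ is a bicycle chain in $\calE(e\to\calF)$. Conditions (1)--(3) of a bicycle chain force consecutive sets to intersect nontrivially, which rules out $F$-internal sets sitting next to $F$-external ones; tracking the $\bowtie$ and containment conditions shows the chain must lie entirely among sets that all contain $F$ or all avoid $F$ --- or else be entirely inside $\calF$ --- and in each case the projection (or inclusion) yields a bicycle chain in $\calE$ or in $\calF$, contradicting A5 for the respective family. The one point requiring attention is that an $F$-straddling set $A_i$ and an $F$-internal set $A_{i+1}$ satisfy $A_{i+1}\subseteq A_i$, which violates the $\bowtie$ requirement $A_i\bowtie A_{i+1}$; this is exactly what confines the chain to a single ``zone.''

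\textbf{Main obstacle.} The routine cases (A1--A3) are genuinely routine; the real care is needed in A4, where one must rule out the possibility that the set $F$ (or a straddling set just above it) sneaks in as a third minimal superset of some $C\subsetneq F$, and in A5, where one must be sure the mixed $F$-internal/$F$-straddling adjacency cannot occur in a bicycle chain. Both reduce to the observation that an $F$-internal set and an $F$-straddling set are nested, never crossing --- so they can never be two of the ``at most two'' incomparable minimal supersets, and never two $\bowtie$-adjacent links of a chain. Once this nesting fact is isolated, the projection argument closes everything out.
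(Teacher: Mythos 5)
Your three-way classification and the projection map are exactly the right tools (this is the paper's decomposition as well), but your case analysis repeatedly dismisses the one mixed case that genuinely occurs: an $F$-straddling set and an $F$-external set can intersect non-trivially. Concretely, let $\calE$ be the string of sausages associated with $e\succ a\succ b$; then $\calE(e\rightarrow\calF)$ contains the straddling set $F\cup\{a\}$ and the external set $\{a,b\}$, and these two sets cross. This contradicts three of your claims: (i) in A2/A3 you assert that ``the only genuinely mixed case is one set $F$-internal and the other $F$-straddling,'' so closure under $\cup,\cap,\setminus$ for a straddling set crossing an external set is never argued; (ii) in A4 you assert that the proper minimal supersets of an $F$-external set all avoid $F$ --- false, since $F\cup\{a\}$ is a proper minimal superset of the external set $\{a\}$ in the example above; (iii) in A5 you claim a hypothetical bicycle chain must consist entirely of straddling sets or entirely of external sets (or lie inside $\calF$), but a straddling set and an external set can be $\bowtie$-adjacent, so nothing in your argument rules out a chain that mixes the two kinds of links.

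The repair is the move the paper makes, and your projection already supports it: in all of these situations map straddling sets via $X\mapsto (X\setminus F)\cup\{e\}$ while leaving external sets unchanged. Because no member of $\calE(e\rightarrow\calF)$ crosses the boundary of $F$, this joint map sends both kinds of sets into $\calE$ and preserves intersections, unions, differences, the relation $\bowtie$, and (after a short pull-back check) the proper-minimal-superset relation; then A2/A3 for a straddling/external pair, A4 for external sets, and A5 for mixed chains all follow from the corresponding axioms for $\calE$. Your internal-versus-straddling nesting observation is correct and does confine $F$-internal sets (that part matches the paper), but it does not separate straddling from external sets, and that is the step your write-up is missing.
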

\begin{proof}%
We have $\{e'\}\in \calE$ for all $e'\in E$, $\{f\}\in \calF$ for
  all $f\in F$, so $\{g\}\in \calE(e \rightarrow \calF)$ for all $g\in
  (E\setminus\{e\})\cup F$. %
  Clearly, $\emptyset\not\in \calE(e \rightarrow \calF)$.  Further,
  $E\in\calE$ and $e\in E$, so $(E\setminus\{e\})\cup F\in \calE(e
  \rightarrow \calF)$. Thus, A1 is satisfied.

  Throughout the rest of the proof, we will use the observation that
  no set $D\in \calE(e \rightarrow \calF)$ can intersect $F$
  non-trivially, i.e., we have that for each $D\in \calE(e
  \rightarrow \calF)$ it holds that either $D \cap F = \emptyset$
  or $D \setminus F = \emptyset$ or $F \setminus D = \emptyset$.

  We will now show that $\calE(e \rightarrow \calF)$ satisfies A3 and
  A4.  Consider two sets $C_1, C_2$ in $\calE(e \rightarrow \calF)$
  such that $C_1\cap C_2\neq\emptyset$. If $C_1\subseteq C_2$ or
  $C_2\subseteq C_1$, then A2 and A3 trivially hold, so we can assume
  that $C_1\bowtie C_2$.
 
  Suppose first that $C_1\subseteq F$. Then $C_2\cap F\neq\emptyset$
  and it cannot be the case that $F\subseteq C_2$, since we assume
  $C_1\setminus C_2\neq\emptyset$.  Hence, $C_2\subseteq F$, so $C_1,
  C_2\in\calF$, and the sets $C_1\cap C_2, C_1\cup C_2, C_1\setminus
  C_2, C_2\setminus C_1$ belong to $\calF$ and hence to $\calE(e
  \rightarrow \calF)$.

  Next, suppose that $F\subseteq C_1$. Set $C'_1=(C_1\setminus
  F)\cup\{e\}$.  Since $C_2\not\subseteq C_1$, we have
  $C_2\not\subseteq F$, and hence either $F\subseteq C_2$ or $F\cap
  C_2=\emptyset$.  In the former case, set $C'_2=(C_2\setminus
  F)\cup\{e\}$; in the latter case, set $C'_2=C_2$. In both cases, we
  have $C'_1, C'_2\in \calE$, and $C'_1\bowtie C'_2$.  Therefore, the
  sets $C'_1\cap C'_2, C'_1\cup C'_2, C'_1\setminus C'_2,
  C'_2\setminus C'_1$ belong to $\calE$, and hence the sets $C_1\cap
  C_2, C_1\cup C_2, C_1\setminus C_2, C_2\setminus C_1$ belong to
  $\calE(e \rightarrow \calF)$.

  If $F\subseteq C_2$, the argument is similar. Thus, it remains to
  consider the case $C_1 \cap F=\emptyset$, $C_2 \cap F=\emptyset$.
  Then $C_1, C_2\in\calE$. Thus, the sets $C_1\cap C_2, C_1\cup C_2,
  C_1\setminus C_2, C_2\setminus C_1$ belong to $\calE$ and do not
  contain $e$, and hence they belong to $\calE(e \rightarrow \calF)$
  as well.  Thus, axioms A2 and A3 are satisfied.

  To show that A4 holds, assume for the sake of contradiction that
  some set $C\in \calE(e \rightarrow \calF)$ has three proper minimal
  supersets $X$, $Y$, and $Z$ in $\calE(e \rightarrow \calF)$.  If we
  have $C\subseteq F$, then the sets $X$, $Y$ and $Z$ cannot strictly
  contain $F$ (or they would not be proper minimal supersets), but
  have to intersect $F$, so it has to be the case that $X, Y,
  Z\subseteq F$. Thus, $C$ has three proper minimal supersets in
  $\calF$, a contradiction.  Next, suppose that $F\subseteq C$. Then
  all three sets $X$, $Y$ and $Z$ are supersets of $F$, too. Consider
  the sets $C'=(C\setminus F)\cup\{e\}$, $X'=(X\setminus F)\cup\{e\}$,
  $Y'=(Y\setminus F)\cup\{e\}$, $Z'=(Z\setminus F)\cup\{e\}$.  All
  these sets are in $\calE$. Moreover, $X'$, $Y'$ and $Z'$ are
  distinct, and each of them is a superset of $C'$. To see that all of
  them are proper supersets of $C'$, observe that if $C'\subset
  T'\subset X'$, then $C\subset (T'\setminus\{e\})\cup F\subset X$, a
  contradiction with $X$ being a minimal proper superset of $C$.
  Thus, $C'$ has three minimal proper supersets in $\calE$, a
  contradiction.  Finally, if $C\cap F=\emptyset$, we have
  $C\in\calE$.  For $T=X, Y, Z$, let $T'=T$ if $F\cap T=\emptyset$ and
  $T'=(T\setminus F)\cup \{e\}$ otherwise. Clearly, the sets $X'$,
  $Y'$ and $Z'$ are in $\calE$.  By the same argument as above, we can
  show that $C$ has three minimal proper supersets in $\calE$, a
  contradiction. Thus, $\calE(e \rightarrow \calF)$ satisfies A4.

  Finally, let $\calE(e \rightarrow \calF)$ contain a
  bicycle chain $\{A_0, \dots, A_{k-1}\}$; in what follows, all
  indices are computed modulo $k$.  Suppose first that we have
  $A_i\subseteq F$ for some $i=0, \dots, k-1$. Then $A_{i-1}\cap
  F\neq\emptyset$, $A_{i+1}\cap F\neq\emptyset$. Since both $A_{i-1}$
  and $A_{i+1}$ intersect $A_i$ non-trivially, neither of them can
  contain $F$, and therefore both of them are subsets of $F$.
  Applying this argument inductively, we conclude that all sets $A_i$,
  $i=0, \dots, k-1$, are subsets of $F$, i.e., $\calF$ contains a
  bicycle chain, a contradiction.
  Thus, we can assume that for each $i=0, \dots, k-1$ either
  $F\subseteq A_i$ or $F\cap A_i=\emptyset$.  For each $i=0, \dots,
  k-1$, set $A'_i=(A_i\setminus F)\cup\{e\}$ if $F\subseteq A_i$ and
  set $A'_i=A_i$ otherwise. It is straightforward to check that the
  set family $\{A'_0, \dots, A'_{k-1}\}$ is a bicycle chain in
  $\calE$, a contradiction. Thus, $\calE(e \rightarrow \calF)$
  satisfies A5.
\end{proof}

\begin{proposition}\label{pro:two}
Let $\calF$ be an irreducible family of subsets of $[m]$ that
satisfies A1--A5, and let $D$ be a minimal proper subset of
$\calF$. Then $|D|=2$.
\end{proposition}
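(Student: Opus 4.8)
The plan is to argue by contradiction: suppose $|D|\ge 3$. Recall that here $D\in\calF$, $1<|D|<m$, and $D$ is minimal with respect to inclusion among all such sets (so $|D|\ge 2$ is automatic). The key consequence of minimality to record first is that no member of $\calF$ other than $D$ itself and the singletons $\{d\}$, $d\in D$, is contained in $D$: any $X\in\calF$ with $X\subseteq D$, $X\neq D$, and $|X|\ge 2$ would satisfy $1<|X|<|D|\le m$, hence would be a proper subset of $[m]$ lying strictly inside $D$, contradicting minimality. With this observation in hand I would split into two cases, according to whether some member of $\calF$ intersects $D$ non-trivially.

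In the first case, no $Y\in\calF$ satisfies $Y\bowtie D$. Here I would show that $\calE:=\{X\in\calF\mid X\subseteq D\}$ is a proper subfamily of $\calF$ with support $D$, contradicting irreducibility. Clause (i) of Definition~\ref{def:subfamily} holds by the very definition of $\calE$ (with $E=D\in\calF$). For clause (ii), take $X\in\calF\setminus\calE$; then $X\not\subseteq D$, and since $X\not\bowtie D$ we must have $D\subseteq X$ or $X\cap D=\emptyset$, exactly as required. Finally $\calE$ is \emph{proper} because $1<|D|<m$ makes $D$ a proper subset of $[m]$. This is the step that needs the most care, since one must be precise that \emph{both} clauses of the subfamily definition hold and that the subfamily is proper before invoking irreducibility; the rest of the argument is essentially mechanical.

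In the second case, some $Y\in\calF$ satisfies $Y\bowtie D$. Then axiom A2 gives $D\cap Y\in\calF$ and axiom A3 gives $D\setminus Y\in\calF$; both are nonempty, and both are distinct from $D$, precisely because $D\bowtie Y$, so each is a nonempty proper subset of $D$ lying in $\calF$. Since $|D\cap Y|+|D\setminus Y|=|D|\ge 3$, at least one of the two, say $Z$, has $|Z|\ge 2$, and then $1<|Z|<|D|\le m$ exhibits $Z$ as a proper subset of $[m]$ strictly inside $D$, again contradicting the minimality of $D$. Both cases yield a contradiction, so $|D|=2$. Note that the argument uses only axioms A2 and A3 together with irreducibility; A4 and A5 are not needed.
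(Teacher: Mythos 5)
Your proof is correct and follows essentially the same route as the paper's: irreducibility forces the existence of some $E\in\calF$ with $E\bowtie D$ (your Case 1 just spells out the subfamily verification the paper compresses into one sentence), and then A2 and A3 produce a strict subset of $D$ of size at least two, contradicting minimality.
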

\begin{proof}%
  Suppose for the sake of contradiction that $|D| \geq 3$. The set
  family ${\calD}=\{F\in\calF\mid F\subseteq D\}$ is not a subfamily
  of $\calF$, which means that $\calF$ contains a proper subset $E$
  such that $D\bowtie E$.  However, by A2 and A3, both $D \cap E$ and
  $D \setminus E$ must belong to $\calF$, both are strict subsets of
  $D$, and at least one of them has at least two elements.  Thus, $D$
  is not a minimal proper subset, a contradiction.
\end{proof}

\begin{proposition}\label{pro:pairs}
Let $\calF$ be a family of subsets of $[m]$ that satisfies A1--A5.
Then each candidate $i \in [m]$ belongs to at most two minimal
proper subsets in $\calF$.
\end{proposition}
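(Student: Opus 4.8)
The plan is to derive a contradiction from axiom A4. Suppose, for contradiction, that some candidate $i\in[m]$ lies in three pairwise distinct minimal proper subsets $D_1,D_2,D_3\in\calF$.

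The first step is to compute the pairwise intersections. Fix $a\ne b$ in $\{1,2,3\}$. Since $i\in D_a\cap D_b\ne\emptyset$, axiom A2 gives $D_a\cap D_b\in\calF$, and this set is contained in both $D_a$ and $D_b$. It cannot equal $D_a$ (otherwise $D_a\subseteq D_b$, and since $D_a\ne D_b$ this makes $D_b$ fail to be a \emph{minimal} proper subset), and symmetrically it cannot equal $D_b$; and it cannot be a set of size at least $2$ strictly contained in $D_a$ (otherwise $D_a$ is not minimal). Hence $D_a\cap D_b=\{i\}$ for every such pair.

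The second step is to exhibit three distinct ``covers'' of $\{i\}$ in $\calF$. For each $j\in\{1,2,3\}$ the family $\{Z\in\calF : \{i\}\subsetneq Z\subseteq D_j\}$ is nonempty (it contains $D_j$); let $Z_j$ be an inclusion-minimal member. Any $Y\in\calF$ with $\{i\}\subsetneq Y\subsetneq Z_j$ would also satisfy $Y\subseteq D_j$, contradicting the minimality of $Z_j$; hence $Z_j$ is a proper minimal superset of $\{i\}$ in the sense of the definition preceding Proposition~\ref{pro:exclusive1}. Moreover $Z_1,Z_2,Z_3$ are distinct, since $Z_a=Z_b$ for $a\ne b$ would force $Z_a\subseteq D_a\cap D_b=\{i\}$, contradicting $\{i\}\subsetneq Z_a$.

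Since $\{i\}\in\calF$ by A1, we have produced three distinct proper minimal supersets of $\{i\}$, contradicting A4; this finishes the argument, and only A1, A2 and A4 are actually needed. The only step that needs care — and the one I would treat as the crux — is the intersection computation of the first step, where one must invoke the minimality of the $D_j$'s correctly in each subcase (the intersection equalling $D_a$ or $D_b$ versus being a proper intermediate subset of size at least $2$).
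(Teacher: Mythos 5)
Your proof is correct and takes essentially the same route as the paper: both derive a contradiction by producing three distinct proper minimal supersets of $\{i\}$ (which is in $\calF$ by A1) and invoking A4. The paper is shorter because it observes that each $D_j$ is \emph{itself} already a proper minimal superset of $\{i\}$ --- any $Y\in\calF$ with $\{i\}\subsetneq Y\subsetneq D_j$ would be a proper subset of $\calF$ strictly contained in $D_j$, contradicting the minimality of $D_j$ --- so your intersection computation $D_a\cap D_b=\{i\}$ and the passage to the auxiliary sets $Z_j$ are valid but unnecessary (in fact $Z_j=D_j$ for each $j$).
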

\begin{proof}%
  Suppose for the sake of contradiction, that $i$ belongs to three
  minimal proper subsets in $\calF$.  Since these subsets are minimal
  proper subsets, they are also minimal proper supersets of
  $\{i\}$. However, by A4, no subset of $\calF$ has more than two
  minimal proper supersets, a contradiction.
\end{proof}

\newtheorem*{thmirreducible}{Theorem~\ref{thm:irreducible}}

\begin{thmirreducible}
Any irreducible family of subsets satisfying A1--A5 is either a string
of sausages or a fat sausage.
\end{thmirreducible}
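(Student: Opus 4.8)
The plan is to read off the minimal proper subsets of $\calF$ as the edges of a graph, argue that graph is a single path, and then reconstruct $\calF$ from it. By Proposition~\ref{pro:two} every minimal proper subset of $\calF$ has exactly two elements, and by Proposition~\ref{pro:pairs} each element of $[m]$ lies in at most two of them; so letting $G$ be the graph on vertex set $[m]$ whose edges are the two-element members of $\calF$, the graph $G$ has maximum degree at most $2$. Moreover $G$ is acyclic: a cycle $w_0w_1\cdots w_{\ell-1}w_0$ of $G$ with $\ell\ge 3$ would make the sets $A_i:=\{w_i,w_{i+1}\}$ (indices mod $\ell$) a bicycle chain, since $A_{i-1}\bowtie A_i$ (they share exactly $w_i$), $A_{i-1}\cap A_i\cap A_{i+1}=\{w_i\}\cap\{w_{i+1},w_{i+2}\}=\emptyset$, and $A_i\subseteq A_{i-1}\cup A_{i+1}$, contradicting A5. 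Hence $G$ is a disjoint union of simple paths. If $G$ has no edge, then $\calF$ has no proper subset at all (any proper subset contains a minimal one, which would be an edge), so $\calF=\{[m]\}\cup\{\{i\}\mid i\in[m]\}$ is a fat sausage and we are done.

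So assume $G$ has an edge, hence a nontrivial path component $P=v_1-v_2-\cdots-v_k$ with $k\ge 2$. The key reconstructive claim is that the members of $\calF$ contained in $V(P)$ are exactly the intervals $\{v_i,\dots,v_j\}$ of $P$. One direction is easy: each such interval is the union of the consecutive edges $\{v_i,v_{i+1}\},\dots,\{v_{j-1},v_j\}$, and repeated use of A2 keeps these unions in $\calF$; in particular $S:=V(P)\in\calF$. For the other direction I would use a peeling argument: if some $X\in\calF$ with $X\subseteq V(P)$ is not an interval, pick such an $X$ of minimum size, let $v_p$ be the right end of its leftmost maximal run and $v_q$ the left end of the next run (so $q\ge p+2$), and note that $J:=\{v_{p+1},\dots,v_q\}\in\calF$ satisfies $J\bowtie X$ with $J\cap X=\{v_q\}$; then A3 gives $X\setminus J=X\setminus\{v_q\}\in\calF$, which is a strictly smaller non-interval subset of $V(P)$ --- a contradiction. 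The degenerate case (where deleting $v_q$ erases a singleton run) is handled by a short case analysis: delete $v_p$ instead, or observe that if $|X|=2$ then $X$ is a two-element member of $\calF$, hence an edge of $G$, forcing $v_p,v_q$ consecutive and contradicting $q\ge p+2$. Once we know $S=V(P)=[m]$, this claim together with the easy inclusion shows $\calF=\{\{v_i,\dots,v_j\}\mid 1\le i\le j\le m\}$, a string of sausages.

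It therefore remains to show $S=[m]$, and this is the main obstacle. Suppose $S\ne[m]$; then $S$ is a proper subset, so by irreducibility there is $W\in\calF$ with $W\bowtie S$. Since $S\cap W$ and $S\setminus W$ are complementary nonempty members of $\calF$ contained in $S$, they are complementary intervals of $P$, so after relabeling (reversing $P$) we may take $W\cap S=\{v_1,\dots,v_j\}$ a prefix with $1\le j<k$. Peeling the edges $\{v_j,v_{j+1}\},\{v_{j-1},v_j\},\dots$ off $W$ one step at a time with A3 (each such edge intersects the current set exactly in its top $v$-element), we may assume $W=\{v_1\}\cup T$ where $\emptyset\ne T:=W\setminus S$ is disjoint from $S$ and lies in $\calF$; note $|T|\ge 2$, as otherwise $\{v_1\}\cup T$ would be a two-element member of $\calF$, i.e.\ an edge of $G$ at the degree-$1$ endpoint $v_1$ different from $\{v_1,v_2\}$. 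Now I would derive a contradiction by exploiting that $v_1$ has degree $1$ in $G$: tracking the proper minimal supersets (at most two, by A4) of the small sets around $v_1$, and observing that the sets $\{v_1,\dots,v_i\}\cup T$ produced by A2 would either create a two-element member of $\calF$ incident to $v_1$ other than $\{v_1,v_2\}$, or a bicycle chain, or exhibit a proper subset of $\calF$ overlapping nothing (a proper subfamily, contradicting irreducibility). An alternative and perhaps cleaner route is induction on $m$: delete a degree-$1$ vertex $b$ of $G$, verify that $\calF':=\{X\setminus\{b\}\mid X\in\calF,\ X\ne\{b\}\}$ again satisfies A1--A5 and is irreducible, apply the inductive hypothesis to conclude $\calF'$ is a string of sausages or a fat sausage on $[m]\setminus\{b\}$, and then check that the unique $G$-neighbor $a$ of $b$ must sit at an endpoint of the realizing order, so $\calF$ is the string of sausages obtained by appending $b$ at $a$'s end. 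Either way, the genuinely delicate point --- and the step I expect to cost the most work --- is ruling out that $\calF$ contains a set straddling two distinct path components of $G$, i.e.\ showing that irreducibility forces all of $[m]$ onto a single axis.
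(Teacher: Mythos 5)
Your first half coincides with the paper's proof: the graph $G$ on $[m]$ whose edges are the two-element members of $\calF$ (equivalently, by Proposition~\ref{pro:two}, the minimal proper subsets), maximum degree $2$ via Proposition~\ref{pro:pairs}/A4, acyclicity via A5, hence a disjoint union of paths, with the edgeless case giving a fat sausage; your explicit ``members of $\calF$ inside a path component are exactly its intervals'' peeling claim is correct and is in fact a detail the paper leaves implicit. The gap is exactly the step you yourself flag as the delicate one: showing that a component $S=V(P)$ must equal $[m]$. Your reduction there is fine as far as it goes ($W\bowtie S$ exists by irreducibility, $W\cap S$ is WLOG a prefix, and peeling edges with A3 yields $T=W\setminus S\in\calF$ and $\{v_1\}\cup T\in\calF$ with $|T|\ge 2$), but no contradiction is actually derived from this configuration, and none of the three suggested routes (``a second edge at $v_1$, or a bicycle chain, or a proper subfamily'') is substantiated. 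Indeed the configuration is not locally contradictory: the clone structure of the two votes $t_1\succ t_2\succ v_1\succ v_2$ and $t_2\succ t_1\succ v_1\succ v_2$ satisfies A1--A5, has $G$ with two components $\{t_1,t_2\}$ and $\{v_1,v_2\}$, and contains $T=\{t_1,t_2\}$ and $\{v_1\}\cup T$ with $v_1$ a degree-one endpoint. So A4/A5 bookkeeping around $v_1$ cannot finish the argument; irreducibility must be invoked again on $T$ (and then on whatever crosses it, and so on), and without a termination device this becomes an infinite regress.

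This is precisely where the paper's proof does its real work, via an extremal argument you do not have: among all $E\in\calF$ with $E\bowtie F$ (where $F=V(P)$) it fixes one minimizing $|E\setminus F|$. If $|E\setminus F|=1$, the path structure forces $E\cap F$ to be a contiguous block, and peeling intervals with A3 shows this block must be all of $F$, contradicting $E\bowtie F$. If $|E\setminus F|>1$, then $E\setminus F\in\calF$ is a proper subset, so irreducibility supplies $H\bowtie(E\setminus F)$, and a case analysis ($F\subseteq H$, $F\cap H\neq\emptyset$ but $F\not\subseteq H$, $H\cap F=\emptyset$) in each case produces a set crossing $F$ with strictly smaller difference, contradicting minimality. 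Your alternative induction route (delete a degree-one vertex $b$) also leaves its crucial step unproved: you would need to verify not only that $\calF'=\{X\setminus\{b\}\mid X\in\calF,\ X\neq\{b\}\}$ satisfies A1--A5 but that it is again \emph{irreducible}, i.e., that a proper subfamily of $\calF'$ lifts back to one of $\calF$; this is not obvious and is not argued. So the proposal is a correct skeleton sharing the paper's approach, but the heart of the theorem --- ruling out sets straddling distinct components --- is missing.
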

\begin{proof}%
  Let $\calF$ be an irreducible family of subsets over $[m]$ that
  satisfies A1--A5.  If $\calF$ does not contain any proper subsets,
  then it is a fat sausage.  Thus, for the remainder of the proof let
  us assume that $\calF$ does contain at least one proper subset.

  Let us consider a graph $G$ whose vertices are elements of $[m]$ and
  there is an edge between $i$ and $j$ if and only if $\{i,j\}$ is a
  minimal proper subset of ${\calF}$.  By
  Proposition~\ref{pro:pairs}, the degree of each vertex in $G$ is at
  most $2$. Further, $G$ cannot contain cycles, since each cycle in
  $G$ would correspond to a bicycle chain in $\calF$ formed by the
  two-element subsets $\{i,j\}$. Thus, $G$ is a collection of paths.
  We will now prove that $G$ has at most one connected component, and
  hence $\calF$ is a string of sausages.

  Let $G'$ be a maximal connected component in $G$, and let $F$ be the
  set of vertices of $G'$.  Suppose that $F\neq [m]$.  Note that by
  A3, $F$ is a subset in $\calF$.  Since $\calF$ is not a fat sausage,
  by Proposition~\ref{pro:two} we have $|F|\ge 2$.
  Let us rename the alternatives so that $F = \{f_1, \ldots, f_k\}$
  and each $\{f_i,f_{i+1}\}$, $1 \leq i < k$, is an edge of $G'$.

  If $F\ne [m]$, there exists a proper subset $E\in \calF$ such that
  $E\bowtie F$.  Let us pick such a set $E$ for which $|E \setminus
  F|$ is smallest.  By A3, the set $E \setminus F$ belongs to $\calF$.
  We consider two cases.

\medskip
\noindent{\bf $\mathbf{|\mathbf{E} \setminus F| = 1}.$\quad} 
    Observe that in this
    case $|E\cap F|\ge 2$: otherwise, $E$ would be an edge of $G'$.
    Let $e$ be a member of $E\setminus F$.  Suppose first that $E\cap
    F$ is not a contiguous subset of $F$, that is, there are some
    $i,j,\ell \in [k]$ such that $i < \ell < j$, and (i) $f_i \in E$
    and $f_s\not\in E$ for $s<i$, (ii) $f_j \in E$ and $f_t\not\in E$
    for $t>j$, and (iii) $f_\ell \notin E$.  Then either $E\cap
    F=\{f_i, f_j\}$, or $E \cap F$ intersects $\{f_{i+1}, \ldots,
    f_{j-1}\}$ and we have $\{f_i,f_j\} = (E \cap F)\setminus
    \{f_{i+1}, \ldots, f_{j-1}\}$.  In both cases, we can use axiom A3
    to conclude that $\{f_i,f_j\}$ belongs to $\calF$, and hence $G'$
    contains a cycle, a contradiction.  Thus, we have $E \intsct
    F=\{f_i, \dots, f_j\}$ for some $1\le i<j\le k$.
 
    Suppose that $j\neq k$. Then, since $i<j$, by A3 the set $E
    \setminus \{f_1, \ldots, f_{j-1}\} = \{e,f_j\}$ is in $\calF$.
    However, this means that $e \in F$, which is a contradiction.
    Thus, $j=k$. Similarly, we can argue that $i=1$.  Hence, we have
    $F\subseteq E$, a contradiction.

\medskip
\noindent{\bf $\mathbf{|\mathbf{E} \setminus F| > 1}.$\quad} 
    By A3, $E \setminus
    F$ is a proper subset in $\calF$. Thus, since $\calF$ is
    irreducible, there is a proper subset $H$ in $\calF$ such that
    $H\bowtie (E \setminus F)$.

    Suppose first that $F\subseteq H$, and consider the set $H' =
    H\cap E$.  Since $E$ intersects $F$, we have $H'\cap
    F\neq\emptyset$.  Further, $H'\cap (E\setminus F) = H\cap
    (E\setminus F)\neq\emptyset$, so $H'\setminus
    F\neq\emptyset$. Finally, $F\setminus E\neq\emptyset$ and
    $H'\subseteq E$, so $F\setminus H'\neq\emptyset$. Thus, $F\bowtie
    H'$.  However, $H'\setminus F= H\cap (E\setminus F)$ is a strict
    subset of $E\setminus F$, so $|H'\setminus F|<|E\setminus F|$, a
    contradiction with our choice of $E$.
 
    Thus, we have $F\not\subseteq H$. If, nevertheless, $F\cap
    H\neq\emptyset$, we set $H''=H\cap (E\cup F)$.  Clearly, we have
    $F\cap H''\neq\emptyset$.  Since $H''$ is a subset of $H$, we also
    have $F\setminus H''\neq\emptyset$.  Finally, since $H\cap
    (E\setminus F)\neq\emptyset$.  we have $H''\setminus
    F\neq\emptyset$. Thus, $H''\bowtie F$, yet $H''\setminus F=H\cap
    (E\setminus F)$ is a strict subset of $E\setminus F$, so
    $|H''\setminus F|<|E\setminus F|$, a contradiction with our choice
    of $E$.

    Hence, $H \intsct F = \emptyset$. However, this means that $E
    \setminus H$ still intersects $F$ nontrivially, and $|(E \setminus
    H) \setminus F| < |E \setminus F|$, a contradiction again.

  We have shown that assuming that $F \neq [m]$ leads to a
  contradiction.  Hence, $F = [m]$, which means that $\calF$ is a
  string of sausages.
\end{proof}

\newtheorem*{proembedding}{Proposition~\ref{pro:embedding}}

\begin{proembedding}
Let $\calC$ and $\calD$ be two clone structures over sets $C$ and $D$,
respectively, where $|C|=m$, $|D|=k$, and $C \cap D= \emptyset$.  Then
for each $c\in C$, the family of subsets $\calC(c \rightarrow \calD)$
is a clone structure.
\end{proembedding}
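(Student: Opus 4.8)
The plan is to realize $\calC(c\to\calD)$ by an explicit profile built by substituting an implementation of $\calD$ into an implementation of $\calC$, using a reversal trick to suppress the ``parasite'' clones alluded to just before the statement. I would fix profiles $\calR=(R_1,\dots,R_n)$ over $C$ and $\calQ=(Q_1,\dots,Q_n)$ over $D$ with $\calC(\calR)=\calC$ and $\calC(\calQ)=\calD$; since duplicating votes does not change a profile's clone structure, we may assume both profiles have exactly $n$ voters. Now define a profile $\calR'=(R'_1,\dots,R'_{2n})$ over $C'=(C\setminus\{c\})\cup D$ by: for $1\le i\le n$, obtaining $R'_i$ from $R_i$ by replacing $c$ with the block of candidates of $D$ ordered according to $Q_i$; and for $n<i\le 2n$, obtaining $R'_i$ from $R_{i-n}$ by replacing $c$ with the block of candidates of $D$ ordered according to the reversal $\ola{Q_{i-n}}$. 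The goal is then to prove $\calC(\calR')=\calC(c\to\calD)$.

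First I would check the three ``expected'' correspondences. Restricting every vote of $\calR'$ to $D$ yields the profile $(Q_1,\dots,Q_n,\ola{Q_1},\dots,\ola{Q_n})$, whose clone structure equals $\calD$ (by Proposition~\ref{pro:reversing}, since both reversing and duplicating votes preserve clone structure); moreover any $e\in C'\setminus D$ is ranked either entirely above or entirely below the solid $D$-block in each $R'_i$, so it cannot split a subset of $D$. Hence a nonempty $X\subseteq D$ is a clone of $\calR'$ exactly when $X\in\calD$. Second, replacing $c$ by a solid block does not affect which subsets of $C\setminus\{c\}$ are contiguous and mutually separated, and contracting that block back to $c$ restores $R_i$; hence for $X\subseteq C\setminus\{c\}$, $X$ is a clone of $\calR'$ iff $X$ is a clone of $\calR$, and $X\cup D$ is a clone of $\calR'$ iff $X\cup\{c\}$ is a clone of $\calR$. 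Together these facts say that the clones of $\calR'$ lying inside $D$, disjoint from $D$, or containing $D$ are exactly the members of $\calD$, the members of $\calC$ avoiding $c$, and the sets obtained from members of $\calC$ that contain $c$ by substituting $D$ for $c$ -- that is, precisely $\calC(c\to\calD)$.

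The crux will be ruling out all other clones, i.e., ``parasites'' $X$ with $\emptyset\ne X\cap D\subsetneq D$ and $X\setminus D\ne\emptyset$; write such an $X$ as $A\cup D'$ with $\emptyset\ne A\subseteq C\setminus\{c\}$ and $\emptyset\ne D'\subsetneq D$. Assume $A\cup D'$ is a clone of $\calR'$. Fixing any $d\in D\setminus D'$ (which lies in the $D$-block of every $R'_i$ but not in $X$) and reasoning as in the proof of Proposition~\ref{pro:inclusive} about which elements may split $X$, one deduces that in every $R'_i$ the set $A\cup\{c\}$ is contiguous in $R_i$ with $c$ at one end, and $D'$ is a prefix or a suffix of the $D$-block, lying on the side adjacent to $A$. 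Concretely, for $1\le i\le n$ set $\varepsilon_i=+$ if $A\succ_i c$ in $R_i$ and $\varepsilon_i=-$ otherwise; then $D'$ must be a prefix of $Q_i$ when $\varepsilon_i=+$ and a suffix of $Q_i$ when $\varepsilon_i=-$. Applying the same analysis to $R'_{n+i}$ -- whose underlying $\calR$-vote is again $R_i$, so $\varepsilon_i$ is unchanged, but whose $D$-block is now ordered by $\ola{Q_i}$ -- forces $D'$ to be a suffix of $Q_i$ when $\varepsilon_i=+$ and a prefix of $Q_i$ when $\varepsilon_i=-$. Either way $D'$ is simultaneously a prefix and a suffix of $Q_i$, which is impossible since $0<|D'|<|D|$. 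Hence no parasite exists, so $\calC(\calR')=\calC(c\to\calD)$, and the latter is a clone structure.

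The hard part will be the parasite analysis in the last paragraph: establishing rigorously the structural constraints a crossing clone must obey (contiguity of $A\cup\{c\}$ at $c$, and $D'$ being an end-segment of the $D$-block on the $A$-side, in every vote) and keeping the ``which side'' bookkeeping straight, so that comparing the paired votes $R'_i$ and $R'_{n+i}$ yields the contradiction. The setup and the three expected correspondences should be routine once the construction -- in particular the reversal of $\calQ$ across the second block of $n$ votes, which is the essential ``tweak'' -- is in place.
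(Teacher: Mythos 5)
Your proposal is correct: the three ``expected'' correspondences are routine as you say, and the parasite analysis goes through --- since the relative position of $A$ and the $D$-block is dictated by $R_i$ alone, the paired votes $R'_i$ and $R'_{n+i}$ force $D'$ to be simultaneously a prefix and a suffix of $Q_i$, impossible for $0<|D'|<|D|$. However, your route differs from the paper's in a way the authors explicitly flag. You always double the profile, embedding $Q_i$ into the first copy of $R_i$ and $\ola{Q_i}$ into the second, which makes the parasite elimination clean and uniform (it even works for $n=1$). The paper instead keeps exactly $n$ voters (with $n\ge 2$ after duplication): it embeds $Q_i$ into $R_i$ for all $i$, and only if a parasite appears does it reverse the block in the single vote $R_n$; its parasite argument is correspondingly more delicate, using an existing parasite of the unflipped profile together with Proposition~\ref{pro:reversing} to normalize orientations, and then deriving the contradiction from just $R'_1$ and $R'_n$. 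The payoff of the paper's extra care is quantitative: because the embedding does not increase the number of voters, it feeds directly into Theorem~\ref{thm:3rep} (via the inductive argument of Theorem B.2) showing every clone structure is $3$-implementable. Your construction proves Proposition~\ref{pro:embedding} itself perfectly well, but iterating it doubles the number of voters at each embedding step (much like the single-crossing construction in Theorem~\ref{thm:scimp}), so it would not yield the constant-voter implementability result; indeed, the remark following the paper's proof --- that the argument ``could be simplified if we were willing to use more voters'' --- describes essentially the simplification you chose.
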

\begin{proof}%
  Fix a candidate $c\in C$, and let $\calR = (R_1, \ldots, R_n)$ and $\calQ =
  (Q_1, \ldots, Q_{n'})$ be two profiles of voters such that $\calC =
  \calC(\calR)$ and $\calD = \calC(\calQ)$. Since duplicating linear
  orders in $\calR$ and $\calQ$ does not change $\calC$ and $\calD$,
  we can assume without loss of generality that $n=n'\ge 2$.  Our goal
  is to construct a profile $\calR'$ such that $\calC(c \rightarrow
  \calD) = \calC(\calR')$.  This profile will have $n$ voters and
  $m+k-1$ alternatives, that is, $\calR' = (R'_1,\ldots, R'_n)$.  We
  will construct $\calR'$ in two steps. First, for each $i=1, \dots,
  n$, we set $R^0_i$ to be identical to $R_i$ except that the
  occurrence of $c$ is replaced by $Q_i$; denote the resulting profile
  by $\calR^0 = (R^0_1,\ldots, R^0_n)$ and let $\calC^0 =
  \calC(\calR^0)$.  It is easy to see that all elements of $\calC(c
  \rightarrow \calD)$ are clones in $\calR^0$, so $\calC(c \rightarrow
  \calD) \subseteq \calC^0$. If also $\calC^0\subseteq \calC(c
  \rightarrow \calD)$, we are done, since in this case we can set
  $\calR'=\calR^0$.

  Otherwise, we flip $Q_n$.  That is, assuming without loss of
  generality that $Q_n$ ranks the elements of $D$ as
  $$
  Q_n: d_1\succ d_2\succ\ldots \succ d_k
  $$
  and $R_n$ is given by $C_1\succ c \succ C_2$, we define
  $$
  R'_n: C_1\succ d_k\succ \ldots \succ d_1\succ C_2,  
  $$
  where we assume that $R'_n$ orders the elements of $C_1$ and $C_2$
  in the same way as $R_n$ does; we also set $R'_i=R^0_i$ for $i=1,
  \dots, n-1$.  Consider the resulting profile $\calR'$, and let
  $\calC'=\calC(\calR')$.  We claim that $\calC' = \calC(c \rightarrow
  \calD)$.  As above, it is easy to see that $\calC(c \rightarrow
  \calD) \subseteq \calC'$.  It remains to show that $\calC'\subseteq
  \calC(c \rightarrow \calD)$.

  Let $X$ be a ``parasite'' clone in $\calC^0\setminus \calC(c
  \rightarrow \calD)$. Clearly, it cannot be the case that $X\subseteq
  C$ or $X\subseteq D$. Further, if $D\subseteq X$, then $(X\setminus
  D)\cup\{c\}$ is a clone in $\calC$, and hence $X\in \calC(c
  \rightarrow \calD)$. Thus, the sets $C_X=X\cap C$ and $D_X=X\cap D$
  are both non-empty, and $D_X\neq D$.  By
  Proposition~\ref{pro:reversing}, we may assume that each order in
  $\calR^0$ is of the form
  $\ldots\succ C_X\succ D_X\succ D\setminus D_X\succ\ldots$.

  Now, suppose for the sake of contradiction that $Y$ is a clone in
  $\calC'\setminus \calC(c \rightarrow \calD)$.  By the same argument
  as in the previous paragraph, we conclude that $Y\cap
  D\neq\emptyset$, $Y\cap C\neq\emptyset$, and $D\not\subseteq
  Y$. Thus, we have two possibilities:
  \begin{itemize}%
  \item $d_1\in Y$, $d_k\not\in Y$. Then, since $Y$ is contiguous in
    $R'_1$ and $Y\cap C\neq\emptyset$, we have $Y\cap
    C_X\neq\emptyset$.  However, in $R'_n$ the element $d_k$ splits
    $Y$ and $C_X$, a contradiction.
  \item $d_1\not\in Y$, $d_k\in Y$. Then, since $Y$ is contiguous in
    $R'_n$ and $Y\cap C\neq\emptyset$, we have $Y\cap
    C_X\neq\emptyset$. However, in $R'_1$ the element $d_1$ splits $Y$
    and $C_X$, a contradiction.
  \end{itemize}
  Hence, we have $Y \in \calC(c \rightarrow \calD)$. The proof
  is complete.
\end{proof}

The above proof could be simplified if we were willing to use more
voters in the profile for $\calC(c \rightarrow \calD)$. However, the
current version of the proof is very useful when we consider the
number of voters needed to implement a particular clone structure.

\newtheorem*{thmcharmain}{Theorem~\ref{thm:char-main}}
\begin{thmcharmain}
  A family $\calF$ of subsets of $[m]$ is a clone structure if and only
  if it satisfies conditions A1--A5.
\end{thmcharmain}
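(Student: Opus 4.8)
The plan is to prove the two directions separately. The forward direction (a clone structure satisfies A1--A5) is already essentially done: Propositions~\ref{pro:inclusive}, \ref{pro:exclusive1}, and~\ref{pro:exclusive2} establish A1--A3, A4, and A5 respectively, so one only needs to assemble these observations. The substantive content is the converse: if $\calF \subseteq 2^{[m]}$ satisfies A1--A5, then $\calF$ is a clone structure, i.e., $\calF = \calC(\calR)$ for some profile $\calR$ on $[m]$.

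For the converse I would argue by induction on $m$ (equivalently, on $|\calF|$). The base case $m=1$ is trivial since then $\calF=\{\{1\}\}=\calC(\calR)$ for the one-candidate profile. For the inductive step, consider a family $\calF$ on $[m]$ satisfying A1--A5 with $m \ge 2$. The first case is that $\calF$ is \emph{irreducible}: then Theorem~\ref{thm:irreducible} tells us $\calF$ is a string of sausages or a fat sausage, and Example~\ref{exm:string} exhibits explicit profiles (a single linear order, respectively the cyclic profile) implementing each of these, so $\calF$ is a clone structure. Otherwise $\calF$ is reducible, so it has a proper subfamily $\calE$ with support $E \subsetneq [m]$, $|E| \ge 2$. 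I would then collapse $\calE$: pick $b \notin [m]$ and form $\calB = \calF(\calE \rightarrow b)$, which (as noted in the text, by an argument mirroring Proposition~\ref{prop:compose}) again satisfies A1--A5, now on a ground set $([m]\setminus E)\cup\{b\}$ of size $m - |E| + 1 < m$. Also $\calE$ itself, being a subfamily of $\calF$, satisfies A1--A5 on the set $E$, and $|E| < m$ as well (since $E \ne [m]$ forces $|E| \le m-1$, using $|E|\ge 2$ and $m\ge 3$ in the reducible case). By the induction hypothesis applied to the smaller ground sets, both $\calB$ and $\calE$ are clone structures. Now $\calF = \calB(b \rightarrow \calE)$, i.e., $\calF$ is obtained from the clone structure $\calB$ by embedding the clone structure $\calE$ in place of the element $b$; Proposition~\ref{pro:embedding} then gives that $\calF$ is a clone structure, completing the induction.

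The one point that needs care — and which I expect to be the main (minor) obstacle — is making sure the induction is genuinely on a strictly smaller parameter in \emph{both} recursive calls and that the reducible case really produces a \emph{proper} subfamily. The definition of ``proper subset'' in the paper requires $1 < |X| < |Y|$, and a reducible family is one that \emph{does} have a proper subfamily; I would need to check that ``not irreducible'' plus A1--A5 indeed yields a proper subfamily with support $E$ satisfying $2 \le |E| \le m-1$, so that collapsing strictly decreases the ground-set size and $\calE$ lives on a set of size at most $m-1$. This is where one invokes that the minimal non-singleton sets of $\calF$ are either size-$2$ (extend to a maximal string of sausages) or size $\ge 3$ (already a fat-sausage subfamily), as discussed around the definition of $\Dec(C)$; picking $E$ to be the support of any such minimal irreducible subfamily works, and $E = [m]$ would contradict reducibility combined with irreducibility being the other case. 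Everything else is bookkeeping: verifying $\calE$ and $\calB$ inherit A1--A5 (stated in the text), and checking $\calF = \calB(b\rightarrow\calE)$, which is immediate from the definitions of collapsing and embedding as mutually inverse operations.
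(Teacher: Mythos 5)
Your proposal is correct and follows essentially the same route as the paper's own proof: induction on $m$, with the irreducible case handled by Theorem~\ref{thm:irreducible} and the reducible case by collapsing a proper subfamily and re-embedding via Proposition~\ref{pro:embedding}. The technical worry you flag at the end is resolved immediately by the definitions: ``reducible'' means $\calF$ has a proper subfamily, whose support $E$ is by Definition~\ref{def:subfamily} a proper subset of $[m]$, i.e.\ $1 < |E| < m$, so both recursive calls are on strictly smaller ground sets without any appeal to the $\Dec(C)$ machinery.
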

\begin{proof}%
  We have already argued that any clone structure satisfies A1--A5; it
  remains to prove that the converse is also true.

  Our proof is by induction on $m$. Clearly the theorem holds for $m =
  1$ and for $m = 2$.  For the inductive step, assume it holds for
  each $m' < m$. Let $\calF$ be a family of subsets of $[m]$ that
  satisfies A1--A5.  If $\calF$ is irreducible then, by
  Theorem~\ref{thm:irreducible}, it is either a string of sausages or
  a fat sausage and thus a clone structure. Otherwise, $\calF$ contains
  a proper subfamily $\calD$.  Let $\calF'= \calF(\calD \rightarrow
  e)$ for some $e\notin [m]$.  We have argued that $\calD$ and
  $\calF'$ satisfy axioms A1--A5.  Hence, by our inductive hypothesis
  both $\calF'$ and $\calD$ are clone structures and so, by
  Proposition~\ref{pro:embedding}, $\calF = \calF'(e \rightarrow
  \calD)$ is a clone structure as well. This completes the proof.
\end{proof}

\subsection{Identifying Clone Structures in P}\label{app:poly}

\begin{theorem}\label{thm:poly}
  There exists a polynomial-time algorithm that, given a family of subsets
  $\calF$ over a finite set $F$, checks if $\calF$ is a clone structure
  over $F$.
\end{theorem}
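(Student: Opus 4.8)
The plan is to give a polynomial-time procedure that directly checks axioms A1--A5, since by Theorem~\ref{thm:char-main} a family $\calF$ over a finite set $F$ is a clone structure if and only if it satisfies these five conditions. The input is $\calF$ represented explicitly as a list of subsets of $F$; write $N=|\calF|$ and $m=|F|$, so the input size is $\Theta(Nm)$ in the worst case, and we aim for running time polynomial in $N$ and $m$.

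First I would observe that A1--A4 are straightforward to verify. For A1 we check that every singleton $\{f\}$, $f\in F$, occurs in $\calF$, that $\emptyset\notin\calF$, and that $F\in\calF$; this is $O(Nm)$ after, say, sorting or hashing the members of $\calF$. For A2 and A3 we iterate over all ordered pairs $(C_1,C_2)$ of members of $\calF$ (there are $O(N^2)$ of them), compute $C_1\cap C_2$, $C_1\cup C_2$, $C_1\setminus C_2$, $C_2\setminus C_1$ in time $O(m)$ each, decide from the sizes of these sets whether $C_1\cap C_2\neq\emptyset$ (resp.\ $C_1\bowtie C_2$) holds, and if so verify that the required resulting sets are present in $\calF$; membership tests cost $O(m)$ with hashing, or $O(m\log N)$ with sorting. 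Thus A2 and A3 cost $O(N^2 m\log N)$. For A4, for each $X\in\calF$ we compute the list of all $Z\in\calF$ with $X\subsetneq Z$, then discard those $Z$ that are not minimal, i.e.\ those for which some $Y\in\calF$ satisfies $X\subsetneq Y\subsetneq Z$; this is an $O(N^3 m)$ computation overall, and we check that the surviving list has length at most $2$ for each $X$.

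The main obstacle is A5: a priori there are exponentially many candidate bicycle chains $\{A_0,\dots,A_{k-1}\}$, so we cannot enumerate them naively. The key step is to reduce the search to a graph-reachability problem. Build a directed graph $H$ whose vertices are the \emph{ordered} pairs $(A,B)$ of distinct members of $\calF$ with $A\bowtie B$; put an edge from $(A,B)$ to $(B,C)$ whenever $C\neq A$, $A\cap B\cap C=\emptyset$, and $B\subseteq A\cup C$. A bicycle chain $(A_0,\dots,A_{k-1})$ with $k\ge 3$ corresponds exactly to a closed walk $(A_0,A_1)\to(A_1,A_2)\to\cdots\to(A_{k-1},A_0)\to(A_0,A_1)$ in $H$ that returns to its starting vertex, subject to the wrap-around conditions at the seam (which are already encoded by the same local edge conditions, indices taken mod $k$). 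Hence $\calF$ contains a bicycle chain if and only if some vertex of $H$ lies on a directed cycle of $H$, equivalently if $H$ has a vertex $(A,B)$ from which $(A,B)$ is reachable by a nonempty walk. This can be tested in time polynomial in $|\calF|$: $H$ has $O(N^2)$ vertices and $O(N^3)$ edges, each edge checkable in $O(m)$ time, and detecting whether a directed graph contains a cycle is linear in its size (e.g.\ a DFS, or condensing into strongly connected components and checking for a nontrivial one). One must double-check that allowing repeated members $A_i$ in the walk is harmless: any closed walk through a vertex $(A,B)$ can be shortened to a simple directed cycle, and reading off the sets along a simple directed cycle of $H$ of length $\ell\ge 1$ yields a genuine bicycle chain of length $k=\ell$ (with $k\ge 3$ forced, since the conditions $A\bowtie B$, $A\cap B\cap C=\emptyset$ with $B\subseteq A\cup C$ cannot be met by $k\le 2$ distinct sets) -- this verification of the correspondence in both directions is the part that needs the most care.

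Putting the pieces together, the algorithm runs all five checks and accepts if and only if all of A1--A5 hold; correctness is immediate from Theorem~\ref{thm:char-main}, and the total running time is bounded by a fixed polynomial in $|\calF|$ and $|F|$. I would write the argument so that the bicycle-chain-to-cycle correspondence is stated as a small lemma, proved by the shortening argument above, and then invoke a standard cycle-detection routine as a black box.
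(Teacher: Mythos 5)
Your algorithm is the same as the paper's: verify A1--A4 by direct enumeration, and reduce A5 to directed-cycle detection in the auxiliary digraph whose vertices are ordered pairs $(X,Y)$ with $X\bowtie Y$ and whose edges encode the local bicycle-chain conditions (the paper additionally brute-forces chains of size~$3$ over all triples and uses the graph only for size $\ge 4$, but that difference is cosmetic). The genuine problem is in the step you yourself flag as the delicate one. Shortening a closed walk to a simple directed cycle only guarantees that the \emph{vertices} (ordered pairs) are distinct; it does not guarantee that the underlying \emph{sets} read off the cycle are distinct, and your lemma ``a simple directed cycle of length $\ell$ yields a genuine bicycle chain of length $k=\ell$'' is false as stated. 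Concretely, take $X_0=\{a,b\}$, $X_1=\{b,p\}$, $X_2=\{p,a\}$, $X_3=\{a,b\}$, $X_4=\{b,q\}$, $X_5=\{q,a\}$: all six pairs $(X_i,X_{i+1})$ (indices mod $6$) are distinct vertices, every edge condition ($X_{i-1}\cap X_i\cap X_{i+1}=\emptyset$, $X_i\subseteq X_{i-1}\cup X_{i+1}$, adjacent sets intersecting non-trivially) is satisfied, so this is a simple directed cycle of length $6$, yet the set $\{a,b\}$ repeats, so the six terms do not form a bicycle chain of size $6$. (In this example the family happens to contain a size-$3$ chain, so rejection is still correct, but your argument does not show that this always happens; note also that the obvious fix of splicing the cycle at a repeated set can fail, because the two partitions of the repeated set induced by its two pairs of neighbours need not coincide.)

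To close the gap you need one of two additional arguments. Either prove that some cycle (e.g., one of minimum length) must have pairwise distinct underlying sets and hence yields a genuine bicycle chain --- this requires a real argument, not just walk-shortening. Or, more robustly, bypass A5: observe that the proof of Proposition~\ref{pro:exclusive2} never uses distinctness of the $A_i$, only the cyclic sequence of local conditions, so \emph{any} directed cycle in the graph (repeats or not) certifies that $\calF$ cannot be of the form $\calC(\calR)$; combined with the easy direction (a bicycle chain with its $k$ distinct sets gives a simple cycle) and Theorem~\ref{thm:char-main}, this shows that ``A1--A4 hold and the graph is acyclic'' is equivalent to $\calF$ being a clone structure, which is all the algorithm needs. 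To be fair, the paper's own proof asserts the cycle-to-chain direction just as tersely, so your proposal is at the same level of the published argument on this point; but since you explicitly rest the step on the shortening argument, the claim as written is incorrect and needs one of the repairs above.
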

\begin{proof}
  It is easy to see that we can check in polynomial time whether
  $\calF$ satisfies A1--A4.  Now, suppose that $\calF$ has passed this
  check, and it remains to verify that it satisfies A5.  We can
  directly check if $\calF$ contains a bicycle chain of size $3$, by
  considering all possible triples of the subsets in $\calF$. To check
  for bicycle chains of size $4$ or more, we will construct a directed
  graph $G$ as follows.

  The vertices of $G$ are ordered pairs $(X, Y)$, where $X$ and $Y$
  are two subsets in $\calF$ such that $X\bowtie Y$.  There is a
  directed edge from $(X, Y)$ to $(Y', Z)$ if $Y=Y'$, $X\cap Y\cap
  Z=\emptyset$.  and $Y\subseteq X\cup Z$.  Intuitively, $G$ has an
  edge from $(X, Y)$ to $(Y, Z)$ if $X$, $Y$ and $Z$ can be three
  consecutive sets in a bicycle chain.  It is not hard to verify that
  $G$ contains a directed cycle if and only if $\calF$ contains a
  bicycle chain of size $4$ or more.  Indeed, let $\{A_0, \dots,
  A_{k-1}\}$ be a bicycle chain of size $k\ge 4$ in $\calF$.  Then any
  pair $(A_i, A_{i+1})$ is a vertex of $G$. Moreover, there is an edge
  in $G$ between $(A_{i-1}, A_i)$ and $(A_i, A_{i+1})$, so $(A_0,
  A_1), (A_1, A_2), \dots, (A_{k-1}, A_0)$ is a directed cycle in $G$
  (as always in our discussion of bicycle chains, the indices are
  computed modulo $k$).  Conversely, if $G$ contains a directed cycle
  of the form $(X_0, X_1), (X_1, X_2), \dots, (X_{k-1}, X_0)$, then
  the sets $X_0, \dots, X_{k-1}$ form a bicycle chain.
\end{proof}

We can strengthen the proof of Theorem~\ref{thm:poly}
to obtain a logarithmic space algorithm.
Verifying axioms A1--A4 in logarithmic space is straightforward.
For axiom A5 the verification problem can be reduced
to connectivity testing for undirected
graphs; it then remains to
apply the breakthrough result of Reingold~\cite{rei:j:ustconinl}.

\section{Material Missing from Section~\ref{sec:representation}}
\label{app:representation}

We will now prove Theorem~\ref{thm:3rep}. However, to do so, we need 
the following proposition.

\newtheorem*{propirr3}{Proposition B.1}
\begin{proposition}\label{prop:irr3}
  Let $\calC$ be an irreducible clone structure over $[m]$.  If $\calC$
  is a string of sausages, it is $1$-implementable.  If $\calC$ is a
  fat sausage and $m>3$, then $\calC$ is $2$-implementable, but not
  $1$-implementable.  If $\calC$ is a fat sausage and $m=3$, then
  $\calC$ is $3$-implementable, but not $2$-implementable.
\end{proposition}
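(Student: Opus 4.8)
The plan is as follows. The claim about strings of sausages is immediate: by Example~\ref{exm:string}, the single preference order $1 \succ 2 \succ \cdots \succ m$ has exactly the string of sausages over $[m]$ as its clone structure, so every string of sausages is $1$-implementable. For the remainder, fix a fat sausage $\calC$ over $[m]$; again by Example~\ref{exm:string} we have $\calC = \{[m]\} \cup \{\{i\} \mid i \in [m]\}$, so $\calC$ contains no set whose size $s$ satisfies $1 < s < m$.

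I would first handle the non-implementability assertions, which are short. If a profile consists of a single order $R$, then (as in Example~\ref{exm:string}, up to relabeling the candidates) $\calC(R)$ is the family of all intervals of $R$, which for $m \ge 2$ contains the $2$-element set formed by the two top candidates of $R$; since a fat sausage over $[m]$ with $m \ge 3$ has no $2$-element set, no such fat sausage is $1$-implementable. When $m = 3$ I need the stronger statement that $\calC$ is not $2$-implementable. Suppose $\calR = (R_1, R_2)$ is a profile over a three-element set $\{a, b, c\}$, and for $i \in \{1,2\}$ let $m_i$ be the candidate ranked second in $R_i$. Writing $\{x, y\}$ for a two-element set and $z$ for the remaining candidate, one checks directly from the definition of a clone set that $\{x, y\}$ is a clone set of $\calR$ if and only if $z$ is ranked between $x$ and $y$ in neither $R_1$ nor $R_2$, i.e.\ if and only if $z \notin \{m_1, m_2\}$. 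If $\calC(\calR)$ were the fat sausage, then no two-element set could be a clone, forcing $\{a, b, c\} \subseteq \{m_1, m_2\}$, which is impossible since the right-hand side has at most two elements. Hence a fat sausage over $[3]$ is not $2$-implementable (and a fortiori not $1$-implementable).

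For the positive direction, the case $m = 3$ is also immediate: by Example~\ref{exm:string}, the cyclic profile $R_1 : 1 \succ 2 \succ 3$, $R_2 : 2 \succ 3 \succ 1$, $R_3 : 3 \succ 1 \succ 2$ has clone structure $\{[3]\} \cup \{\{1\},\{2\},\{3\}\}$, so the fat sausage over $[3]$ is $3$-implementable. For $m > 3$, I would take $R_1$ to be $1 \succ 2 \succ \cdots \succ m$ and $R_2$ to be an order in which no set of candidates occupying consecutive positions and consisting of consecutive integers has size strictly between $1$ and $m$ (equivalently, $R_2$, viewed as a permutation, is \emph{simple}). Given such a pair, any clone set $X$ of $(R_1, R_2)$ with $1 < |X| < m$ must be a set of consecutive integers, since it is an interval of $R_1$, and must occupy consecutive positions in $R_2$; by the choice of $R_2$ no such $X$ exists, so $\calC((R_1, R_2)) = \{[m]\} \cup \{\{i\} \mid i \in [m]\}$ is the desired fat sausage. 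It remains to exhibit such an $R_2$ for every $m \ge 4$. A naive choice such as ``all even-valued candidates increasing, then all odd-valued ones'' fails for odd $m$, because the long suffix then consists of consecutive integers in consecutive positions; instead I would use, for $m = 2k$, the order $(k+1) \succ 1 \succ (k+2) \succ 2 \succ \cdots \succ 2k \succ k$, and for $m = 2k+1 \ge 5$ the order $(k+1) \succ (2k+1) \succ 1 \succ (k+2) \succ 2 \succ \cdots \succ 2k \succ k$. A short case check (using that consecutive positions carry values alternating between ``small'', at most $k$, and ``large'', more than $k$) shows that neither order has an interval of intermediate size. Alternatively one may simply invoke the classical fact that simple permutations of length $m$ exist for every $m \ge 4$.

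The main obstacle is precisely this last construction for $m > 3$: producing two linear orders with no common nontrivial block, uniformly in $m$ and in particular handling odd $m$ correctly. Everything else reduces to Example~\ref{exm:string} together with an elementary pigeonhole count.
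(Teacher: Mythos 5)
Your proposal is correct and follows essentially the same route as the paper: a single order for strings of sausages, an explicit two-voter interleaved profile for the fat sausage with $m>3$ (your even-$m$ second order is literally the paper's; your odd-$m$ order is a minor variant, the paper instead perturbing the first order, and both are verified by the same small/large block check), and a three-voter profile plus an elementary argument for $m=3$. The only cosmetic differences are your framing via simple permutations and your pigeonhole on middle-ranked candidates for non-$2$-implementability when $m=3$, where the paper argues via adjacency of candidate $2$ in the second vote; both are sound.
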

\begin{proof}%
  If $\calC$ is a string of sausages, it can be implemented using a
  single order, namely, $1\succ\ldots\succ m$.

  Now, suppose that $\calC$ is a fat sausage.  Clearly, it cannot be
  implemented with a single order, as the clone structure
  that corresponds to the latter is a string of sausages.

  Suppose first that $m=2k$. For convenience, set $x_i=i$, $y_i=k+i$
  for $i=1, \dots, k$.  We define $\calR=(R_1, R_2)$ as follows.
  \begin{align*}
    &R_1: x_1\succ \ldots \succ x_k\succ y_1\succ\ldots \succ y_k,\\
    &R_2: y_1\succ x_1\succ y_2\succ x_2\succ \ldots \succ y_k\succ
    x_k.
  \end{align*}
  We claim that $\calC=\calC(\calR)$.  Clearly, we have
  $\calC\subseteq\calC(\calR)$.  Now, suppose that
  $D\in\calC(\calR)\setminus\calC$, i.e., $|D|\neq 1, m$.  Since $D$
  has to be contiguous in $R_1$, we have one of the following three
  cases:
  \begin{itemize}%
  \item[(a)] $D=\{x_i, \dots, x_j\}$ for some $1\le i<j\le k$;
  \item[(b)] $D=\{y_i, \dots, y_j\}$ for some $1\le i<j\le k$;
  \item[(c)] $D=\{x_i, \dots, y_j\}$ for some $1\le i\le k$, $1\le
    j\le k$.
  \end{itemize}
  Case~(a) is impossible since in $R_2$ the element $y_j$ appears
  between $x_i$ and $x_j$. Similarly, case~(b) is impossible since in
  $R_2$ the element $x_i$ appears between $y_i$ and $y_j$.  In
  case~(c) we have $x_k, y_1\in D$. Since these elements appear at the
  opposite ends of $R_2$, we conclude that $D=[m]$, a contradiction.

  Next, suppose that $m=2k+1$, $k>1$. Set $x_i=i$, $y_i=k+i$ for $i=1,
  \dots, k$, $z=2k+1$.  We define $\calR=(R_1, R_2)$ as follows.
  \begin{align*}
    &R_1: x_1\succ \ldots \succ x_k\succ y_1\succ\ldots \succ y_{k-1}\succ z\succ y_{k},\\
    &R_2: y_1\succ x_1\succ y_2\succ x_2\succ\ldots\succ y_k\succ
    x_k\succ z.
  \end{align*}
  Again, it is clear that $\calC\subseteq\calC(\calR)$.  Now, suppose
  that $D\in\calC(\calR)\setminus\calC$, i.e., $|D|\neq 1, m$.  As in
  the case of even $m$, $D$ cannot be of the form $\{x_i, \dots,
  x_j\}$ for $1\le i<j\le k$, or of the form $\{y_i, \dots, y_j\}$ for
  $1\le i<j\le k-1$. Further, if $D$ is of the form $\{x_i, \dots,
  y_j\}$ for some $i=1, \dots, k$ and some $j=1, \dots, k-1$, then $D$
  must contain all elements that appear between $x_k$ and $y_1$ in
  $R_2$, i.e., either $D=[m]$ or $D=[m]\setminus\{z\}$, which is
  impossible.  Now, if $D$ contains $z$, it must also contain the only
  element that is adjacent to it in $R_2$, i.e., $x_k$.  As $y_1$
  appears between $x_k$ and $z$ in $R_1$, we have $y_1\in D$.  But
  then $D=[m]$, since $y_1$ and $z$ are extreme elements of $R_2$.

  Finally, if $m=3$, we can set $\calR=(R_1, R_2, R_3)$, where $R_1:
  1\succ 2 \succ 3$, $R_2: 2\succ 1 \succ 3$, $R_3: 2\succ 3\succ
  1$. To see that $\calC$ cannot be implemented by any $2$-voter
  profile $(R_1, R_2)$, observe that we can assume without loss of
  generality that $R_1$ is of the form $1\succ 2\succ 3$, and in $R_2$
  element $2$ is adjacent to at least one of the remaining elements
  (and hence forms a clone with that element).
\end{proof}

Now we are ready to prove Theorem~\ref{thm:3rep}. However, we will
first prove a stronger result, and only derive Theorem~\ref{thm:3rep}
as its corollary.

\newtheorem*{thm3repA}{Theorem B.2}
\begin{theorem}\label{thm:b:2}
  Any clone structure $\calC$ is $3$-implementable.  Moreover, if the
  tree $T(\calC)$ does not have nodes that carry labels of the form
  $x\odot y\odot z$, then $\calC$ is $2$-implementable.
  If $\calC$ is a string of sausages then it is $1$-implementable. 
\end{theorem}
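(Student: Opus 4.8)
The plan is to prove all three statements at once by induction on $m=|C|$, using Proposition~\ref{prop:irr3} for the base case of an irreducible $\calC$ and the voter-count-preserving embedding of Proposition~\ref{pro:embedding} for the inductive step. Two elementary facts will be used throughout: duplicating an order in a profile leaves its family of clone sets unchanged, so a $t$-implementable clone structure is also $t'$-implementable for every $t'\ge t$; and (by the remarks on subfamilies and on collapsing, together with Theorem~\ref{thm:char-main}) any subfamily of a clone structure, and any family obtained from a clone structure by collapsing a subfamily, is again a clone structure, on a ground set that is no larger.

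For the base case I would take $\calC$ irreducible, so that by Theorem~\ref{thm:irreducible} it is a string of sausages or a fat sausage; Proposition~\ref{prop:irr3} then gives the three clauses directly---a string of sausages is $1$-implementable, a fat sausage on $m\ge 4$ candidates is $2$-implementable, and a fat sausage on exactly three candidates is $3$-implementable. For the second clause I would observe that an irreducible $\calC$ contributes a single internal node to $T(\calC)$, and that this node carries a label of the form $x\odot y\odot z$ exactly when $\calC$ is a three-element fat sausage; hence under the hypothesis of the second clause that case is excluded and an irreducible $\calC$ is $2$-implementable. The cases $m\le 2$ are covered here as well.

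For the inductive step I would assume $\calC$ is reducible and the claim holds below $m$. Then $T(\calC)$ has a non-root internal node $v$; set $E=C_v$, $\calD=\{X\in\calC\mid X\subseteq E\}$ (a proper subfamily, with $2\le|E|<m$), and $\calC'=\calC(\calD\rightarrow e)$ for a fresh candidate $e$, so that $\calC'$ lives on $m-|E|+1<m$ candidates. The trees behave transparently: $T(\calD)$ is the subtree of $T(\calC)$ rooted at $v$, and $T(\calC')$ is $T(\calC)$ with that subtree replaced by a single leaf $e$; consequently every internal node of $T(\calD)$ and of $T(\calC')$ is (a relabeling of) an internal node of $T(\calC)$ having the same type and the same number of children, so if $T(\calC)$ has no $x\odot y\odot z$ node then neither has $T(\calD)$ nor $T(\calC')$. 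By the induction hypothesis, in the situation of the first (resp.\ second) clause, $\calD$ and $\calC'$ each admit an implementation with at most $t$ voters, where $t=3$ (resp.\ $t=2$). I would pad both, by duplicating orders, to exactly $t$ voters---admissible because $t\ge 2$, which is precisely what the construction in the proof of Proposition~\ref{pro:embedding} requires---and feed these two $t$-voter profiles and the candidate $e$ into that construction, obtaining a $t$-voter profile whose clone structure is $\calC'(e\rightarrow\calD)=\calC$. This settles the first two clauses; the third is immediate from Proposition~\ref{prop:irr3}.

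The step I expect to be the crux is keeping the number of voters bounded by a constant across the recursion: a naive embedding that substitutes $\calQ$-profiles for candidates of an $\calR$-profile can, in the presence of the flip used to kill parasite clones, force the voter count to grow with the depth of $T(\calC)$. The argument therefore leans essentially on the sharper, voter-count-preserving version of Proposition~\ref{pro:embedding}, on the harmlessness of padding profiles by duplicating orders, and on the numerical fact that $t\ge 2$ so that that construction is applicable at all. The remaining ingredients---that collapsing a subfamily shrinks the ground set and does not manufacture a new three-child P-node in the PQ-tree---are routine, though the correspondence "non-root internal node $\leftrightarrow$ proper subfamily whose support is a subtree" should be stated carefully.
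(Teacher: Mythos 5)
Your proposal is correct and follows essentially the same route as the paper: induction on the number of candidates, with Proposition~\ref{prop:irr3} handling the irreducible base case and the voter-count-preserving construction from the proof of Proposition~\ref{pro:embedding} handling the inductive step. Your extra care in checking that collapsing a subtree does not create an $x\odot y\odot z$ node (so the inductive hypothesis applies with $t=2$) and in padding profiles to equal size merely makes explicit details the paper leaves implicit.
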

\begin{proof}%
  If $\calC$ is a string of sausages, then it clearly is
  $1$-implementable. Otherwise, the following argument proves the theorem.

  Fix a clone structure $\calC$ on a set $C$ of size $m$.  If
  $T(\calC)$ does not have nodes that carry labels of the form $x
  \odot y \odot z$, then set $k=2$. Otherwise set $k=3$.  The proof is
  by induction on $m$. If $m=1$ or $m=2$, the theorem is obviously
  true. Further, if $\calC$ is irreducible, the theorem follows from
  Proposition~\ref{prop:irr3}.  Otherwise, $\calC$ contains a proper
  subfamily $\calD$.  By the inductive assumption, the clone
  structures $\calD$ and $\calC(\calD\rightarrow d)$, where $d\not\in
  C$, are $k$-implementable. Let $\calR=(R_1, \ldots, R_k)$ and
  $\calQ=(Q_1, \ldots, Q_k)$ be the respective preference profiles, i.e.,
  $\calC(\calD\rightarrow d)=\calC(\calR)$, $\calD =
  \calC(\calQ)$. Then the proof of Proposition~\ref{pro:embedding}
  shows how to combine $\calR$ and $\calQ$ to obtain a preference
  profile $\calR'$ with $k$ voters such that
  $\calC=\calC(\calR')$.
\end{proof}

Now the proof of Theorem~\ref{thm:3rep} is immediate.

\newtheorem*{thm3rep}{Theorem~\ref{thm:3rep}}
\begin{thm3rep}
Any clone structure is $3$-implementable.
\end{thm3rep}
\begin{proof}
  Follows directly from Theorem~\ref{thm:b:2}.
\end{proof}

\section{Material Missing from Section~\ref{sec:sp}}
\label{app:sp}

This part of the appendix contains the missing proofs and discussion
regarding clone structures in single-peaked elections.

Given a profile $\calR$ with
$|\peak(\calR)|\ge 2$ that is single-peaked with respect to $>$, we say that $p_1$
and $p_2$ are the {\em extreme peaks} of $\calR$ with respect to $>$
if either $p_1 > p > p_2$
for each $p \in \peak(\calR) \setminus \{p_1,p_2\}$ 
or $p_2 > p > p_1$ for each $p \in \peak(\calR) \setminus \{p_1, p_2\}$.  
We say that two candidates $a,
b \in C$ are {\em on the same side} of $c \in C\setminus\{a,b\}$ in
$>$ if either $(a > c \land b > c)$ or $(c > a \land c >
b)$. Otherwise, we say that $a$ and $b$ are {\em on the opposite
  sides} of $c$ in $>$.  Given two orders $>$ and $>'$ over $C$, we
say that $>$ and $>'$ {\em agree on $D \subseteq C$} if for each $a, b
\in D$ it holds that $a > b$ if and only if $a >' b$.

The single-peakedness of a given preference profile can be witnessed
by many different orders; 
for instance, if $\calR$ is single-peaked with respect to $>$, it is also 
single-peaked with respect to $\ola{>}$.  
However, it turns out that these orders have the same extreme
peaks and agree (up to an inversion) on all candidates between these
peaks.
 
\begin{theorem}
\label{thm:unique}
Consider a preference profile $\calR$ over $C$ with $|\peak(\calR)|\ge
2$ that is single-peaked with respect to two orders $>$ and $>'$. Let $p_1$ and
$p_2$ be the extreme peaks of $\calR$ with respect to $>$ such that
$p_1>p_2$. Then $p_1$ and $p_2$ are also the extreme peaks of $\calR$
with respect to $>'$. Moreover, if $p_1 >' p_2$, then $>$ and $>'$
agree on the set $P=\{c\mid p_1 > c > p_2\}\cup\{p_1, p_2\}$.
\end{theorem}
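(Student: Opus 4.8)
The plan is to argue directly about the positions of peaks and of intermediate candidates, exploiting the compatibility condition of single-peakedness together with the fact that being a peak of some voter constrains where a candidate can sit on any witnessing axis. First I would record two elementary facts about a profile $\calR$ that is single-peaked with respect to an axis $>$. \textbf{Fact 1:} if $p = \peak(R_i)$ for some $i$ and $a, b \in C$ lie on the same side of $p$ in $>$ with (say) $a$ between $p$ and $b$, then $a \succ_i b$ (this is just iterated application of the compatibility condition $c \succ d \implies d \succ e$ along the axis). \textbf{Fact 2:} consequently, if $c$ is \emph{strictly between} two peaks $p_1, p_2$ of $\calR$ in $>$, then $c$ is itself ranked above at least one of $p_1, p_2$ by every voter whose peak is one of those two, but more importantly $c$ can never be a peak — wait, that's false in general; rather, what I actually need is: the two \emph{extreme} peaks $p_1, p_2$ are exactly the two candidates in $\peak(\calR)$ that are not strictly between two other peaks. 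I would make this the pivot: characterize the extreme peaks intrinsically (independent of the axis), so that any witnessing axis must assign them the extreme positions among peaks.

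The key intrinsic characterization I would aim for is: a peak $p$ of $\calR$ is an \emph{extreme} peak (w.r.t.\ some, equivalently any, witnessing axis) if and only if there do \emph{not} exist peaks $q, r \in \peak(\calR)$ and a voter $i$ such that $q \succ_i p$ and $r \succ_i p$ but $q, r$ are "on opposite sides" in the preference sense — hmm, this needs to be phrased purely in terms of the preference orders. The cleaner route: show that $p$ is an extreme peak iff $\{q \in \peak(\calR) \mid q = p \text{ or } q \text{ is ranked above } p \text{ by the voter with peak } \ldots\}$ — this is getting complicated. Let me instead take the following approach. Fix the witnessing order $>$ with extreme peaks $p_1 > p_2$, and let $P = \{c \mid p_1 \ge c \ge p_2\}$ (I include the endpoints). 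First show that for any voter $i$, the restriction of $\succ_i$ to $P$ is single-peaked in the restricted sense with peak $\peak(R_i)$ if $\peak(R_i) \in P$, and is a monotone "endpoint-down" order otherwise; in particular each voter restricted to $P$ either agrees with $>|_P$ or with $\ola{>}|_P$ on the "tails" below her peak. Then I would show $P$ is a clone set: no candidate outside $P$ splits $P$ in any $R_i$, because such a splitter would have to lie strictly between two peaks of $\calR$ in $>$, contradicting that $p_1, p_2$ are the \emph{extreme} peaks (everything strictly between $p_1$ and $p_2$ is in $P$ by definition, and everything outside is above $p_1$ or below $p_2$ along the axis, hence, by Fact 1 applied from the relevant extreme peak, ranked on one consistent side). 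Given that $P$ is a clone set and $|\peak(\calR)| \ge 2$ forces $|P| \ge 2$, I can then invoke the structural results already proved — specifically Proposition~\ref{thm:second-variety} and Corollary~\ref{cor:2types} — to pin down how $P$ can sit inside any other witnessing axis $>'$.

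Here is the endgame. Since $P$ is a clone set of $\calR$ and $\calR$ is single-peaked with respect to $>'$, Proposition~\ref{thm:second-variety} says $P$ is either of the first type or of the second type with respect to $>'$. But $P$ contains two peaks of $\calR$ (namely $p_1, p_2$), so by Corollary~\ref{cor:2type-nopeaks} it cannot be of the second type with respect to $>'$: hence $P$ is ranked contiguously in $>'$. Now I argue that $p_1$ and $p_2$ must be the two \emph{endpoints} of that contiguous block $P$ in $>'$: any peak of $\calR$ that is strictly interior to $P$ in $>'$ would, by Fact 1 applied along $>'$, fail to be a peak — more precisely, if $q \in \peak(\calR) \cap P$ is strictly between two members of $P$ in $>'$, pick the voter $j$ with $\peak(R_j) = q$; then by Fact 1 (along $>'$) the two $>'$-neighbours of $q$ inside $P$ are both ranked below $q$ by voter $j$, which is fine, but $q$ being a peak means \emph{no} candidate is ranked above $q$ by voter $j$, and yet... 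I'll instead show the contrapositive: the $>'$-endpoints of $P$ must be peaks, because the two extreme candidates of the block $P$ are each the bottom choice within $P$ of whichever voter "leans" that way, and a non-peak candidate strictly interior to $P$ in $>'$ cannot be extreme; combined with the fact that along $>'$ the extreme peaks of $\calR$ are forced to be the extreme candidates of $P$ (same argument as for $>$, since $P$ is the set of candidates between the extreme peaks in \emph{any} witnessing axis), we get that the extreme peaks with respect to $>'$ are again $p_1, p_2$. Finally, once both $>$ and $>'$ restrict to contiguous blocks on $P$ with the same endpoints $p_1, p_2$, I show they agree on $P$ (or one is the reverse of the other): for any two voters whose peaks sit at the two ends, their preference orders restricted to $P$ are monotone along the block in opposite directions, and these two linear orders on $P$ already determine the block order uniquely up to global reversal; since both $>|_P$ and $>'|_P$ must be compatible with both of these monotone voter orders, and $p_1 >' p_2$ is assumed (fixing the orientation), they coincide. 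The main obstacle I anticipate is the careful verification that $P$ — defined as the axis-interval between the extreme peaks — is genuinely a clone set and that its endpoints are forced to be exactly the extreme peaks under \emph{every} witnessing axis; the structural propositions already in the paper do most of the heavy lifting once that clone-set claim is in hand, so I would invest the bulk of the write-up there.
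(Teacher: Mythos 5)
Your route breaks at its load-bearing step: the claim that $P=\{c\mid p_1> c> p_2\}\cup\{p_1,p_2\}$ (the axis-interval between the extreme peaks) is a clone set of $\calR$ is false. Your justification --- that any splitter of $P$ would have to lie strictly between two peaks in $>$, and that candidates above $p_1$ or below $p_2$ are ``by Fact 1 applied from the relevant extreme peak'' ranked on one consistent side --- only controls the voters whose peak is $p_1$ or $p_2$; a voter whose peak is \emph{interior} to $P$ may zig-zag out past $p_1$ and back before reaching $p_2$. Concretely, take $C=\{a,p_1,c,p_2\}$ with axis $a>p_1>c>p_2$ and the three votes $p_1\succ a\succ c\succ p_2$, \ $p_2\succ c\succ p_1\succ a$, \ $c\succ p_1\succ a\succ p_2$; all three are compatible with this axis, the peaks are $p_1,c,p_2$, the extreme peaks are $p_1,p_2$, and yet $a\notin P$ separates $p_1$ from $p_2$ in the third vote, so $P=\{p_1,c,p_2\}$ is not a clone set. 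Everything downstream --- invoking Proposition~\ref{thm:second-variety} and Corollary~\ref{cor:2type-nopeaks} to force $P$ to be contiguous in $>'$, and the endpoint/orientation analysis --- presupposes this clone-set claim, so the argument does not get off the ground. The endpoint identification also contains a circular step: it appeals to ``$P$ is the set of candidates between the extreme peaks in \emph{any} witnessing axis,'' which is precisely part of what has to be proved.

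The statement admits a much more direct argument (this is the paper's route): for any two peaks $p,q\in\peak(\calR)$ and any candidate $c$, whether $p$ and $q$ lie on the same or on opposite sides of $c$ is determined by the profile itself and hence is identical under $>$ and $>'$ --- if, say, $p>q>c$ but $p>'c>'q$, then the voter whose peak is $p$ must rank $q\succ c$ (compatibility with $>$) and $c\succ q$ (compatibility with $>'$), a contradiction. Applying this with $c$ a peak strictly between $p_1$ and $p_2$ shows that $p_1,p_2$ remain the extreme peaks under $>'$; applying it with $c$ an arbitrary candidate shows $\{c\mid p_1>'c>'p_2\}\cup\{p_1,p_2\}=P$; and if $c,d\in P$ satisfied $c>d$ but $d>'c$, the voter with peak $p_1$ would have to rank $c\succ d$ and $d\succ c$ simultaneously, which settles the agreement on $P$. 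No clone-set machinery is needed, and in fact none is available at this level of generality.
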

\begin{proof}
  Fix two orders $>$ and $>'$ that both witness the single-peakedness of $\calR$.
  Consider two candidates $p, q\in \peak(\calR)$,
  and another candidate $c\in C$.  We claim that either $p$ and $q$
  are on the same side of $c$ in both $>$ and $>'$, or they are on the
  opposite sides of $c$ in both $>$ and $>'$. Indeed, suppose that
  this is not the case.  Without loss of generality we assume that $p
  > q > c$ and $p >' c >' q$. Now, consider a preference order $R_i$
  such that $\peak(R_i) = p$. Since $\calR$ is single-peaked  with respect to 
  $>$, it must be the case that $p \succ_i q \succ_i c$; on the other hand, 
  since $\calR$ is single-peaked  with respect to $>$, we have 
  $p \succ_i c \succ_i q$, a contradiction.  
  Hence, either $p$ and $q$ are on the same side of
  $c$ in both $>$ and $>'$, or $p$ and $q$ are on the opposite sides
  of $c$ in both $>$ and $>'$.

  Now, consider an arbitrary $p\in\peak(\calR) \setminus \{p_1,
  p_2\}$. Since $p_1$ and $p_2$ are the extreme peaks of $\calR$ and $p_1 > p_2$, 
  we have $p_1 > p > p_2$. Therefore, by the
  argument above we have $p_1 >' p >' p_2$.
  This proves the first statement of the theorem.

  To prove the second statement, assume that $p_1 >' p_2$.  Also,
  without loss of generality, assume that $p_1=\peak(R_1)$ and
  $p_2=\peak(R_2)$.  Let $P'=\{c \mid p_1 >' c >' p_2\}\cup\{p_1,
  p_2\}$.  Suppose that $P\setminus P'\neq\emptyset$, and consider a
  candidate $c\in P\setminus P'$. The candidates $p_1$ and $p_2$ are
  on the opposite sides of $c$ in $>$, but on the same side of $c$ in
  $>'$, a contradiction.  Assuming $P'\setminus P\neq\emptyset$ leads
  to a contradiction as well. Thus, $P=P'$.  Now, suppose that for
  some $c, d\in P \setminus \{p_1, p_2\}$ we have $c > d$ and $d>' c$.
  Then, since $\calR$ is single-peaked  with respect to $>$, we have $p_1 \succ_1 c \succ_1 d$.
  However, since $\calR$ is single-peaked  with respect to $>'$, we have $p_1\succ_1 d\succ_1 c$, 
  a contradiction. Thus, the theorem is proved.
\end{proof}

Thus, by Theorem~\ref{thm:unique}, we can speak of extreme peaks of a single-peaked profile,
without referring to a specific societal axis.

\newtheorem*{thmsecondvariety}{Proposition~\ref{thm:second-variety}}

\begin{thmsecondvariety}
Let $\calR = (R_1, \ldots, R_n)$ be a preference profile over a
candidate set $C$ that is single-peaked  with respect to some order $>$, 
and let $D\in\calC(\calR)$ be a clone set for $\calR$ with $|D|\ge 2$.
Then $C$ can be partitioned
into pairwise disjoint sets $A_1$, $A_2$, $D_1$, $D_2$, and $P$ so that 
$C\setminus D=A_1\cup P\cup A_2$, $D=D_1\cup D_2$, 
$D_1\neq\emptyset$, $D_2\neq\emptyset$, and
$$
A_1 > D_1 > P > D_2 > A_2.
$$
Further, if $P\neq\emptyset$, then $\peak(\calR)\subseteq P$, and, moreover,  
$P \succ_i D \succ_i A_1 \union A_2$ for each $i=1, \dots, n$.
\end{thmsecondvariety}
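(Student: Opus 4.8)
The plan is to exploit the standard ``interval'' property of single-peaked profiles: if $\calR$ is single-peaked with respect to $>$, then for every voter $R_i$ and every $k$ the set of top-$k$ candidates of $R_i$ is an interval of the axis $>$. This follows directly from compatibility: if $a,b$ are among the top $k$ but an axis-intermediate $c$ is not, then (treating the two possible axis-orders of the triple $\{a,c,b\}$) $a\succ_i c$ would force $c\succ_i b$, or $b\succ_i c$ would force $c\succ_i a$, each contradicting that $a,b$ outrank $c$. I would record this as a one-line preliminary observation.

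Next, fix a voter $R_i$. Since $D$ is a clone set, its members occupy a contiguous block of $R_i$, say the ranks $s_i,\dots,t_i$; write $U_i$ for the set of candidates ranked strictly above this block (the top $s_i-1$ candidates) and $V_i=U_i\cup D$ for the top $t_i$ candidates. By the interval property both $U_i$ and $V_i$ are axis-intervals, and $U_i\cap D=\emptyset$. Let $[\ell,r]$ be the shortest axis-interval containing $D$ (from the $>$-largest to the $>$-smallest member of $D$), and set $P:=[\ell,r]\setminus D$. Since $V_i$ is an axis-interval containing $D$ it contains $[\ell,r]$, hence $P\subseteq V_i$; as $P\cap D=\emptyset$ this gives $P\subseteq U_i$. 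Intersecting, $[\ell,r]\cap U_i=P$, so $P$ — an intersection of two axis-intervals — is itself an axis-interval. Consequently $D=[\ell,r]\setminus P$ is a union of at most two axis-intervals: if $P=\emptyset$ then $D$ is a single interval, which we split arbitrarily into two nonempty pieces $D_1>D_2$ (possible as $|D|\ge 2$); if $P\neq\emptyset$ then, since $\ell,r\in D$, the interval $P$ lies strictly inside $[\ell,r]$, so $D$ splits into two nonempty pieces $D_1>P>D_2$. Taking $A_1=\{c:c>\ell\}$ and $A_2=\{c:r>c\}$ yields $C\setminus D=A_1\cup P\cup A_2$ and $A_1>D_1>P>D_2>A_2$.

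For the ``further'' part, assume $P\neq\emptyset$ and consider any voter $R_i$. I claim $U_i=P$. We already have $P\subseteq U_i$; if some $b\in U_i\setminus P$ existed it would lie $>$-outside $[\ell,r]$ (say $>$-above $\ell$), but then the axis-interval $U_i$ contains $b$ and the nonempty set $P$ (all of which is $>$-below $\ell$), hence $U_i$ contains $\ell\in D$, contradicting $U_i\cap D=\emptyset$. So $U_i=P$: thus $P$ occupies exactly the top $s_i-1\ge 1$ ranks of $R_i$, giving $\peak(R_i)\in P$ and $P\succ_i D$; and $V_i=P\cup D$ occupies the top $t_i$ ranks, so every candidate outside $P\cup D$ — i.e.\ every element of $A_1\cup A_2$ — is ranked below all of $D$, giving $D\succ_i A_1\cup A_2$. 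Since this holds for every $i$ we obtain $\peak(\calR)\subseteq P$ together with the claimed per-voter rankings.

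The main obstacle is the axis-combinatorics: showing that $P$ is a \emph{single} interval (equivalently, that $D$ breaks into at most two axis-blocks) and, when $P\neq\emptyset$, that $U_i=P$ holds for \emph{every} voter, not merely $P\subseteq U_i$. Both points rest on the same little fact — an axis-interval disjoint from $D$ cannot ``jump over'' a member of $D$ — which is exactly where the clone property (forcing $D$ to be a block within each $R_i$) and the interval property of single-peakedness feed into one another; the rest is bookkeeping with empty sets.
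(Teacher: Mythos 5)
Your proof is correct, but it takes a genuinely different route from the paper's. The paper argues the partition claim by pure contradiction: if the candidates outside $D$ cut $D$ into three or more axis-blocks, one exhibits $d_1 > c_1 > d_2 > c_2 > d_3$ and checks that wherever $\peak(R_1)$ sits relative to $d_2$, the set $D$ cannot be contiguous in $R_1$; nonemptiness of $D_1,D_2$ is then patched separately (merging $P$ into $A_1$ in the degenerate case), and the ``further'' part is proved by three separate little arguments ($\peak(R_i)\in P$, then $P\succ_i D$, then $D\succ_i A_1\cup A_2$), each again exploiting contiguity of $D$ in $R_i$. You instead prove and use the standard convexity lemma that every top-$k$ prefix of a single-peaked order is an axis-interval, take the axis hull $[\ell,r]$ of $D$, and identify $P=[\ell,r]\setminus D$ with $[\ell,r]\cap U_i$, which makes $P$ an interval and hence gives the block structure directly; when $P\neq\emptyset$ you then sharpen this to $U_i=P$ for every voter, from which $\peak(R_i)\in P$, $P\succ_i D$ and $D\succ_i A_1\cup A_2$ all drop out at once. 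What each approach buys: the paper's argument is self-contained and needs no auxiliary lemma, at the cost of case analysis and a separate repair of the degenerate case; your argument front-loads one reusable structural fact, after which $D_1\ni\ell$ and $D_2\ni r$ are automatically nonempty whenever $P\neq\emptyset$ and the entire second claim follows from the single identity $U_i=P$. Both proofs are complete; just make sure the one-line interval lemma and the observation that $U_i\cap[\ell,r]=P$ holds for \emph{every} voter $i$ (not only the one used to show $P$ is an interval) are stated explicitly, as your ``further'' part relies on the latter.
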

\begin{proof}
  Suppose that our first claim is not true.  
  Then there exist two candidates $c_1, c_2\in
  C\setminus D$ and three candidates $d_1, d_2, d_3\in D$ such that
  $d_1 > c_1 > d_2 > c_2 > d_3$. Let $p=\peak(R_1)$.  If $p>d_2$ or
  $p=d_2$, we have $d_2\succ_1 c_2\succ_1 d_3$ and hence $D$ is not
  contiguous in $R_1$, a contradiction.  Similarly, if $d_2 > p$, we
  have $d_2\succ_1 c_1\succ_1 d_1$, a contradiction again. This proves
  our claim regarding the partition of $C$. To see that we can ensure 
  that both $D_1$ and $D_2$ are non-empty, note that if, e.g., 
  $D_1=\emptyset$, we can modify the partition by merging $P$
  into $A_1$ (so that the new $P$ is empty), and repartitioning $D$
  into two non-empty sets (recall that $|D|\ge 2$).

  To prove the second claim, consider an arbitrary preference
  profile $R_i$. Since $P \neq \emptyset$, the peak of $R_i$ 
  must be in $P$, as otherwise it would be impossible for $R_i$ to rank members
  of $D$ contiguously. 
  Now, let us show that $P \succ_i D$. Suppose that this is not the case, i.e., 
  $d\succ_i p$ for some $d\in D$, $p\in P$.  Since both $D_1$ and $D_2$ are non-empty, we can pick  
  two alternatives $d_1 \in D_1, d_2 \in D_2$.
  As $R_i$ ranks members of $D$ contiguously, it has to be the case that
  $D\succ_i p$, and, in particular, $\{d_1,d_2\} \succ_i p$.
  But we have $d_1>p>d_2$, so $d_1\succ_i p$ implies $p\succ_i d_2$, a contradiction.
  Thus $P \succ_i D$.

  Finally, let us show that $D \succ_i A_1 \cup A_2$. 
  Since the peak of $R_i$ is in $P$, $A_1 > D_1$ implies
  $D_1 \succ_i A_1$, and $D_2 >A_2$ implies $D_2 \succ_i A_2$. 
  Suppose for the sake of contradiction that $a\succ_i d$ for some $a\in A$, $d\in D$. 
  Then either $a\in A_1$, $d\in D_2$ (and hence $D_1\succ_i a\succ_i d$)
  or $a\in A_2$, $d\in D_1$ (and hence $D_2\succ_i a\succ_i d$). 
  In both cases, we obtain a contradiction with $D$ being contiguous in $R_i$.
  Thus, $D \succ_i A_1 \cup A_2$.  This completes the proof.
\end{proof}

\begin{proposition}
\label{thm:declone-to-sp}
Let $\calR = (R_1, \ldots, R_n)$ be a single-peaked preference profile
over a candidate set $C$, and let $D\in \calC(\calR)$ be a clone set
such that $|D| \geq 2$. Let $c$ be some candidate not in $C$. Then the
preference profile $\calR' = \calR(\calD \mapsto c)$ is single-peaked.
\end{proposition}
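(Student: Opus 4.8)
The plan is to fix a societal axis $>$ witnessing single-peakedness of $\calR$ and apply Proposition~\ref{thm:second-variety} to the clone set $D$, obtaining the partition $C = A_1 \cup D_1 \cup P \cup D_2 \cup A_2$ with $A_1 > D_1 > P > D_2 > A_2$, $D = D_1 \cup D_2$, $C\setminus D = A_1 \cup P \cup A_2$, and $D_1, D_2$ both nonempty. I then build an explicit axis $>'$ for $\calR' = \calR(\calD \mapsto c)$: let $>'$ agree with $>$ on $C \setminus D$, and insert $c$ so that $x >' c$ for all $x \in A_1 \cup P$ and $c >' x$ for all $x \in A_2$ (this determines the position of $c$ uniquely; informally $c$ goes into the slot formerly occupied by $D_2$, and when $P = \emptyset$ this is simply the slot of the contiguous block $D$). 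The crucial feature of this choice is that $>'$ restricts to $>$ on $C \setminus D$.

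The one structural fact I would isolate first is that, in every $R_i$, the block $D$ is contiguous and, by the second part of Proposition~\ref{thm:second-variety}, positioned as $P \succ_i D \succ_i A_1 \cup A_2$; after decloning, $c$ occupies exactly these positions in $R'_i$. Consequently, deleting $c$ from $R'_i$ recovers $R_i$ restricted to $C\setminus D$, and for $w \in C\setminus D$ we have $w \succ_i c$ in $R'_i$ iff $w \succ_i d$ for every $d \in D$ in $R_i$, while $c \succ_i w$ in $R'_i$ iff $d \succ_i w$ for every $d \in D$ in $R_i$.

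Next I would check that each $R'_i$ is compatible with $>'$ using the valley-free reformulation of the axiom: no candidate may be ranked strictly below two candidates lying on opposite sides of it along the axis. For a triple $x >' y >' z$ avoiding $c$ this is immediate, since $R'_i$ and $R_i$ agree on $C\setminus D$, the axes $>'$ and $>$ agree there, and compatibility is inherited under deleting candidates. For a triple containing $c$, the placement of $c$ forces exactly one of: $c$ is the middle element and the other two are some $x \in A_1\cup P$ and some $z \in A_2$; or $c$ is the $>'$-smallest and the other two lie in $A_1 \cup P$; or $c$ is the $>'$-largest and the other two lie in $A_2$. In each of these I would convert a hypothetical valley (at $c$, or at the middle element $y$) into the statement ``there is $d\in D$ lying $>$-between the appropriate two candidates and forming a valley for $R_i$ with respect to $>$'', using the chain $A_1 > D_1 > P > D_2 > A_2$ together with nonemptiness of $D_1$ and $D_2$ to produce such a $d$; this contradicts single-peakedness of $R_i$. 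Since $>'$ then witnesses single-peakedness of $\calR'$, we are done.

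I expect the placement of $c$ to be the only real subtlety: because $D$ need not be contiguous along $>$ (the ``second type'' phenomenon from the discussion preceding Proposition~\ref{thm:second-variety}), one cannot merely coalesce the two slots of $D_1$ and $D_2$; the point is that dropping $D_1$'s slot and putting $c$ on the $A_2$-side of $P$ works, precisely because the sandwich $P \succ_i D \succ_i A_1\cup A_2$ prevents any $R'_i$ from developing a valley at $c$. The degenerate cases $P=\emptyset$, $A_1=\emptyset$, or $A_2=\emptyset$ require no separate handling, as the corresponding sub-cases of the verification become vacuous.
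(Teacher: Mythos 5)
Your proof is correct and follows essentially the same route as the paper's: invoke Proposition~\ref{thm:second-variety}, keep $>$ unchanged on $C\setminus D$, and insert $c$ into one of the two slots vacated by $D$ (you use the $D_2$ slot, $A_1 >' P >' c >' A_2$, while the paper uses the $D_1$ slot, $A_1 >' c >' P >' A_2$; by symmetry both work). The only difference is that you spell out the triple-by-triple verification that the paper dismisses as immediate, which is fine.
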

\begin{proof}
  Suppose that $\calR$ is single-peaked with respect to some order $>$ over $C$.
  We will now construct an order $>'$ over $(C\setminus D)\cup\{c\}$
  such that $\calR'$ is single-peaked with respect to $>'$.

  By Proposition~\ref{thm:second-variety}, we know that $C$ can be
  partitioned into $P$, $D_1$, $D_2$, $A_1$, $A_2$ so that $D_1 \neq
  \emptyset$, $D_2 \neq \emptyset$ and
  \[
    A_1 > D_1 > P > D_2 > A_2.
  \]
  If $P = \emptyset$, we set $>'$ to be an order that agrees
  with $>$ on $A_1 \cup A_2$ and satisfies $A_1 >' c >' A_2$.
  If $P \neq \emptyset$, we set $>'$ to be an order that agrees
  with $>$ on $A_1 \cup P \cup A_2$ and satisfies $A_1 >' c >' P >' A_2$.
  In both cases,
  it is immediate that $\calR'$ is single-peaked  with respect to $>'$.
\end{proof}

\newtheorem*{corgreedy}{Proposition~\ref{cor:greedy}} 
\begin{corgreedy}
  Let $\calR$ be a preference profile over a set of candidates $C$.
  Let $D^1, \ldots, D^k \in \calC(\calR)$ be a sequence of
  pairwise disjoint clone sets, and let $c^1, \ldots, c^k$ be
  a sequence of distinct candidates not in $C$. For each $i=1, \dots, k$,
  let $\calR_i$ denote a preference profile in which for
  each $j=1, \dots, k$, $j \neq i$, $D^j$ is decloned to $c^j$.
  Then  $\calR$ is single-peaked if and only if each of the profiles
  $\calR_i$, $1 \leq i \leq k$, is single-peaked.
\end{corgreedy}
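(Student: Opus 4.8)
The forward direction is a routine iteration of Proposition~\ref{thm:declone-to-sp}. First observe that if $D$ and $D'$ are disjoint clone sets of a profile $\calS$ and $c$ is a fresh candidate, then $D'$ is still a clone set of $\calS(D\mapsto c)$: contracting the contiguous $D$-block to a single candidate leaves unchanged the relative order of every member of $D'$ and every candidate outside $D'$. Hence, fixing $i$ and decloning the sets $D^j$ with $j\neq i$ one at a time, each intermediate profile is again single-peaked by Proposition~\ref{thm:declone-to-sp}, and the final profile is exactly $\calR_i$. So if $\calR$ is single-peaked then every $\calR_i$ is.

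For the converse, assume that every $\calR_i$ is single-peaked. Write $\calR_{[j]}$ for the profile obtained from $\calR$ by decloning $D^{j+1},\dots,D^k$ (to $c^{j+1},\dots,c^k$), so that $\calR_{[1]}$ coincides with $\calR_1$ and $\calR_{[k]}=\calR$. The plan is to prove by induction on $j$ that each $\calR_{[j]}$ is single-peaked, which for $j=k$ is the desired statement. The inductive step is an instance of the following \emph{splicing claim}: if $\calS$ is a profile, $D\in\calC(\calS)$ is a clone set, $c$ is fresh, the profile $\calS(D\mapsto c)$ is single-peaked, and the profile obtained from $\calS$ by decloning some family of pairwise disjoint clone sets, each disjoint from $D$, is single-peaked, then $\calS$ is single-peaked. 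Indeed, taking $\calS=\calR_{[j]}$ and $D=D^j$, the profile $\calS(D^j\mapsto c^j)$ is $\calR_{[j-1]}$ (single-peaked by the induction hypothesis) and the profile obtained from $\calS$ by decloning $D^1,\dots,D^{j-1}$ is $\calR_j$ (single-peaked by hypothesis, using that decloning operations on disjoint clone sets commute), so the splicing claim yields single-peakedness of $\calR_{[j]}$.

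To prove the splicing claim I would combine a witnessing axis $>'$ of $\calS(D\mapsto c)$ with a witnessing axis $>^{\circ}$ of the profile $\calS^{\circ}$ obtained by coarsening everything outside $D$; note that $D$ is still a clone set of $\calS^{\circ}$. Proposition~\ref{thm:second-variety} applied to $D$ in $\calS^{\circ}$ writes $D=D_1\cup D_2$ with $A_1>^{\circ}D_1>^{\circ}P>^{\circ}D_2>^{\circ}A_2$. If $P=\emptyset$ the set $D$ is contiguous in $>^{\circ}$, and one simply deletes $c$ from $>'$ and re-inserts $D$ in the order induced by $>^{\circ}$; triples internal to $D$ are handled by single-peakedness of $\calS^{\circ}$ restricted to $D$, and all other triples project to $\calS(D\mapsto c)$. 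The delicate case is $P\neq\emptyset$: then $D$ contains no peak of $\calS$ (otherwise that peak would survive in $\calS^{\circ}$, contradicting $\peak(\calS^{\circ})\subseteq P$), $D$ is a clone set of the second type by Corollary~\ref{cor:2type-nopeaks}, and $D_1,D_2$ sit on the two sides of the entire peak region of $>^{\circ}$. Here one first argues that $c$ lies beyond the extreme peaks in $>'$ as well (the block of $\calS(D\mapsto c)$ carrying the peaks is a first-type clone set not containing $c$), then rebuilds an axis for $\calS$ from $>'$ by placing $D_1$ and $D_2$ on the appropriate extreme sides in the orders prescribed by $>^{\circ}$. The main obstacle is verifying that the resulting order really witnesses single-peakedness of $\calS$: the two $D$-blocks end up far apart, so a triple with a member of $D_2$ and members outside $D$ cannot be handled by projecting to $\calS(D\mapsto c)$ and must instead be handled by restricting to $\calS^{\circ}$, while a triple avoiding $D$ is handled by $\calS(D\mapsto c)$; making both arguments go through simultaneously forces $>'$ and $>^{\circ}$ to be compatible on the relevant candidates, and this compatibility is exactly what Theorem~\ref{thm:unique} provides, since coarsening $D$ in $>^{\circ}$ and coarsening the remaining clone sets in $>'$ both yield witnessing axes of one and the same fully coarsened profile, hence axes with the same extreme peaks that agree, up to reversal, between them.
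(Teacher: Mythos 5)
Your overall strategy is sound and is essentially the paper's argument reorganized: where the paper fixes one witness $>'$ of the fully decloned profile and shows that, for each $i$, single-peakedness of $\calR_i$ yields a purely local ``edit'' of $>'$ around $c^i$ (so that the edits, having disjoint supports, can be applied jointly), you re-expand the clone sets one at a time via a splicing claim; both versions rest on the same tools, namely Proposition~\ref{thm:declone-to-sp}, Proposition~\ref{thm:second-variety}, Corollary~\ref{cor:2type-nopeaks} and Theorem~\ref{thm:unique}, and your $P\neq\emptyset$ case matches the paper's second case almost verbatim. The gap is in the case you dismiss as easy, $P=\emptyset$. A triple consisting of \emph{two} members of $D$ and one outsider is neither internal to $D$ nor does it ``project to $\calS(D\mapsto c)$'' (two distinct members of $D$ cannot both be replaced by the single candidate $c$), and these are exactly the triples for which the orientation of the inserted block matters. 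Consequently, re-inserting $D$ ``in the order induced by $>^{\circ}$'' is not correct as stated. Concretely, take $\calS$ over $\{a,d_1,d_2,x,y\}$ with the two votes $a\succ d_1\succ d_2\succ x\succ y$ and $a\succ x\succ y\succ d_1\succ d_2$, $D=\{d_1,d_2\}$, and the decloned family $\{\{x,y\}\}$ (collapsed to $e$). Then $>'\colon c>a>x>y$ witnesses single-peakedness of $\calS(D\mapsto c)$ and $>^{\circ}\colon e>a>d_1>d_2$ witnesses single-peakedness of $\calS^{\circ}$, with $P=\emptyset$; your construction yields $d_1>d_2>a>x>y$, which is not a witness for $\calS$ (for the triple $d_1>d_2>a$ the first vote has $a\succ_1 d_2$ but not $d_2\succ_1 d_1$), whereas the reversed insertion $d_2>d_1>a>x>y$ works.

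Your closing appeal to Theorem~\ref{thm:unique} does not repair this, because that theorem constrains the two axes only on the segment between the extreme peaks and needs at least two distinct peaks, while the problematic configuration is precisely the one where $c$ (respectively the $D$-block) lies \emph{outside} the peak segment, or where the coarsened profile has a single peak --- as in the example above. What is missing is the orientation analysis that the paper carries out in its first case: compare on which side of the peak block $B$ the candidate $c$ sits in $>'$ with the side on which $D$ sits in the other axis, and insert $D$ in the $>^{\circ}$-order or in its reverse accordingly; the key point being that when $c\notin B$ the set $D$ contains no peaks, so which orientation is forced is determined by the voters' common ranking of $D$ relative to the side of the peaks. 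With that correction (and the corresponding check for the two-in-$D$ triples) your splicing claim and the induction built on it go through.
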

\begin{proof}
  The ``only if'' direction follows from Proposition~\ref{thm:declone-to-sp}.
  For the ``if'' direction, 
  we will give a proof for the case $k=2$; the general case follows by induction.

  Set $\calR' = \calR_1(D^1 \mapsto c^1) =  \calR_2(D^2 \mapsto c^2)$, 
  and let $D=D^1$, $c=c^1$.
  By Proposition~\ref{thm:declone-to-sp},
  $\calR'$ is single-peaked; let $>'$ be some order that witnesses this.
  The general idea of our proof is as follows: We will first show that
  there are orders $>^1$ and $>^2$, both very similar to $>'$,
  witnessing single-peakedness of $\calR_1$ and $\calR_2$,
  respectively. We will then show that the ``edits'' needed to turn
  $>'$ into $>^1$ and the ``edits'' needed to turn $>'$ into $>^2$ are
  independent and thus we can turn $>'$ into an order witnessing
  single-peakedness of $\calR$.

  We will now construct an order $>^1$ that is very similar to $>'$
  and witnesses the single-peakedness of $\calR_1$.  We start with an
  arbitrary preference order $>$ that witnesses the single-peakedness
  of $\calR_1$.  This order may not have the properties that we are
  interested in; therefore, we will construct $>^1$ by taking a
  ``hybrid'' of $>$ and $>'$.
  We will consider two cases.
  \begin{description}
  \item[\label{case:type1}$\boldsymbol{D}$ is a clone set of the first type with respect to $\boldsymbol{>}$.] 
  We claim that in this case we can construct $>^1$
  from $>'$ by replacing $c$ with the members of $D$, ranked
  either according to $>$ or according to $\ola{>}$.

  Indeed, let $p_1$ and $p_2$ be the extreme peaks of $\calR'$ such that
  $p_1 >' p_2$, and let $B = \{b \mid p_1 >' b >' p_2\}$. If
  $|\peak(\calR')|=1$, set $p_1 = p_2$ to be the unique member
  of $\peak(\calR')$ and let $B = \{p_1\}$.
  
  Let $>^c$ be the order obtained from $>$ by replacing the
  occurrence of $D$ with $c$; if $p_2 >^c p_1$, reverse $>^c$. 
  The proof of
  Proposition~\ref{thm:declone-to-sp} shows that $\calR'$ is single-peaked 
  with respect to $>^c$. Thus, by Theorem~\ref{thm:unique}, $>^c$ and $>'$
  agree on $B$ and rank the members of $B$ consecutively.

  Suppose first that $c >^c B$. If $c >' B$ then
  we obtain $>^1$ from $>'$ by replacing the occurrence of $c$ with the
  members of $D$, ranked in the order of $>$.
  If $B >' c$, we replacing the occurrence of $c$ with the
  members of $D$, ranked in the order of $\ola{>}$.
  It is easy to see that $\calR_1$ is single-peaked with respect to $>^1$. 
  The case $B >^c c$ can be handled similarly.  
  Finally, suppose that $c \in B$. Since $>'$
  and $>^c$ agree on $B$ and rank members of $B$ consecutively, 
  it suffices again to replace the occurrence of $c$ with the
  members of $D$, ranked in the order of $>$.
  Clearly, $\calR_1$ is single-peaked with respect to the resulting order $>^1$.

  \item[\label{case:type2} $\boldsymbol{D}$ is a clone set of the second type
  with respect to $\boldsymbol{>}$.] 
  By Proposition~\ref{thm:second-variety}, there 
  is a (unique) partition of $C$ into sets $A_1, A_2, D_1, D_2, P$   
  such that $P \neq \emptyset$, $D_1\neq\emptyset$, $D_2\neq\emptyset$, and
  $A_1 > D_1 > P > D_2 > A_2$. Further, for each $R_i \in \calR_1$ it holds 
  that $P \succ_i D \succ_i A$. This implies that each $R'_i \in \calR'$ is of the form 
  $P \succ'_i c \succ'_i A$. Thus, both $P$ and $P\cup\{c\}$
  are clone sets for $\calR'$. Moreover, 
  $P\cap \peak(\calR')\neq\emptyset$, so by Corollary~\ref{cor:2type-nopeaks}  
  $P$ is a clone set of the first type with respect to $>'$. Thus, 
  we have either $A'_1 >' c >' P >' A'_2$ or $A'_1 >' P >' c >' A'_2$
  for some $A'_1, A'_2$ such that $A'_1\cup A'_2=A$; assume without loss of generality
  that  $A'_1 >' c >' P >' A'_2$. 
  Consider the order $>^1$ given by $A'_1 >^1 D_1 >^1 P >^1 D_2 >^1 A'_2$, 
  which agrees with $>'$ on $C \setminus D$ and agrees with $>$ on $D$.
  Clearly, $\calR_1$ is single-peaked with respect to $>^1$.
  \end{description}

In both cases, we derive an order $>^1$ that 
witnesses the single-peakedness of $\calR_1$ from $>'$
by either replacing the occurrence of $c$ with the members of $D$ 
(if $D$ is a clone set of the first type), or replacing $c$ with some members of $D$ and
inserting the remaining members of $D$ into a clearly specified position in
$>'$. Now, recall that $\calR_1(D^1\mapsto c^1)= \calR_2(D^2\mapsto c^2)$, 
and therefore $>'$ also witnesses the single-peakedness of $\calR_2(D^2\mapsto c^2)$.
Hence, we can derive an order $>^2$ that witnesses the single-peakedness of $\calR_2$
from $>'$ in a similar manner. 
Crucially, the ``edits'' required to obtain $>^1$ from $>'$ are
independent of the edits required to obtain $>^2$ from
$>'$. Consequently, if we apply these edits
jointly to obtain a new order $>^*$, this order
witnesses that $\calR$ is single-peaked, which is exactly what we need to prove.
\end{proof}

\newtheorem*{propirrsp}{Proposition~\ref{prop:irr-sp}}
\begin{propirrsp}
  Fat sausages and strings of sausages are single-peaked.
\end{propirrsp}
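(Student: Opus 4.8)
The plan is to treat the two families separately; the string of sausages is essentially immediate, while the fat sausage needs a short explicit construction. For the string of sausages over $[m]$, recall from Example~\ref{exm:string} that it equals $\calC(\calR)$ for the one-voter profile $\calR=(R)$ with $R\colon 1\succ 2\succ\cdots\succ m$. I would simply observe that $\calR$ is single-peaked with respect to the axis $1>2>\cdots>m$: whenever $c>d>e$ (or $e>d>c$) on this axis we have $c\succ d \implies d\succ e$, because $\succ$ agrees with $>$. Hence a single monotone vote already witnesses single-peakedness of a string of sausages.

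For the fat sausage over $[m]$ I would again fix the axis $1>2>\cdots>m$ and exhibit a profile $\calR$ with $\calC(\calR)=\{[m]\}\cup\{\{i\}\mid i\in[m]\}$. For $m\le 2$ this is trivial, so assume $m\ge 3$. For each $j\in[m]$ let $B_j$ be the ``rightward zig-zag from $j$'': the order that ranks candidates by increasing axis-distance to $j$, breaking ties in favour of the right neighbour (so $B_j$ begins $j\succ j+1\succ j-1\succ j+2\succ\cdots$, with the obvious adjustment once one side is exhausted); define $C_j$ symmetrically, breaking ties to the left. Let $\calR=(B_1,\dots,B_m,C_1,\dots,C_m)$. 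Every order in $\calR$ is single-peaked with respect to $1>\cdots>m$, since each is a zig-zag emanating from its own peak.

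It then remains to verify that $\calC(\calR)$ is exactly the fat sausage. Every singleton and $[m]$ itself are clone sets, so I must rule out every $D$ with $2\le|D|\le m-1$. Since $\calR$ is single-peaked with respect to $1>\cdots>m$, Proposition~\ref{thm:second-variety} says that each such $D$ is either contiguous in the axis (``first type'') or not (``second type''). A second-type $D$ comes with a partition $A_1>D_1>P>D_2>A_2$ in which $P\ne\emptyset$ and $\peak(\calR)\subseteq P\subsetneq[m]$; but here $\peak(\calR)=[m]$, since the peak of $B_j$ is $j$, and no axis-interval $P$ can contain $\peak(\calR)$ without being all of $[m]$ -- a contradiction, so there is no second-type $D$. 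If $D=[i,j]$ is a first-type clone with $i<j$ and $[i,j]\ne[1,m]$, then $j\le m-1$ or $i\ge 2$. In the first case $B_j$ ranks $j+1\notin[i,j]$ strictly between $j$ and $j-1$, both of which lie in $[i,j]$ (as $i\le j-1$), so $[i,j]$ is not a clone set for $B_j$ and hence not for $\calR$; in the second case the mirror-image argument with $C_i$ and the splitter $i-1$ applies. Thus the only clone set of size at least $2$ is $[m]$, so $\calC(\calR)$ is the fat sausage, which completes the proof.

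I expect the only genuinely delicate point to be the exclusion of ``spurious'' clone sets of intermediate size: a priori such a set need not be an axis-interval, and what saves the day is Proposition~\ref{thm:second-variety} together with the deliberate choice of a profile in which \emph{every} candidate is a peak, which is precisely what kills the second-type possibility and leaves only the easy interval case. The remaining parts -- single-peakedness of the zig-zag orders and checking the locations of the splitting candidates $j+1$ and $i-1$ -- are routine.
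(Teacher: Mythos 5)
Your proof is correct, but it takes a different route from the paper's. The paper proves Proposition~\ref{prop:irr-sp} by recycling the economical implementations from Proposition~\ref{prop:irr3} (one voter for a string of sausages; two voters for a fat sausage with $m\ge 4$, three for $m=3$) and merely checking that those specific profiles are single-peaked with respect to an explicit axis (e.g., $x_k>\cdots>x_1>y_1>\cdots>y_k$ for even $m$); the clone-structure verification comes for free from Proposition~\ref{prop:irr3}. You go the other way: you build a fresh $2m$-voter profile of zig-zag orders that is single-peaked by construction, and then carry the verification burden on the clone-structure side, ruling out second-type clones via Proposition~\ref{thm:second-variety} (or Corollary~\ref{cor:2type-nopeaks}) because every candidate is a peak, and killing every proper interval $[i,j]$ with an explicit splitter ($j+1$ in $B_j$ when $j\le m-1$, $i-1$ in $C_i$ when $i\ge 2$). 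Both arguments are sound; the paper's buys a much smaller witnessing profile (which matters for its $k$-implementability results, Theorem~\ref{thm:3rep}) and avoids invoking the structural machinery of Proposition~\ref{thm:second-variety}, while yours has the appeal that single-peakedness is immediate by design and the "every candidate is a peak" trick makes the exclusion of non-interval clones conceptually clean, at the cost of many voters and reliance on that proposition.
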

\begin{proof}
  It suffices to check that the profiles constructed in the proof of
  Proposition~\ref{prop:irr3} are single-peaked.
  For the string of sausages, this is immediate: the respective order
  $>$ is given by $1>\ldots>m$.
  For the fat sausage with $m=2k$, $k\ge 2$, we can use the order
  $x_k>\ldots>x_1>y_1>\ldots>y_k$.
  For the fat sausage with $m=2k+1$, $k\ge 2$, we can use the order
  $z>x_k>\ldots>x_1>y_1>\ldots>y_k$.
  Finally, for the fat sausage with $m=3$, we can set $1>2>3$.
\end{proof}

\newtheorem*{propsptree}{Proposition~\ref{prop:sp-tree}}
\begin{propsptree}
  Let $\calC$ be a clone structure over a set of candidates $C$. Suppose that
  for each Q-node of the PQ-tree decomposition $T(\calC)$ of $\calC$
  it holds that all children, except possibly the leftmost child and the
  rightmost child, are labeled with singletons, i.e., elements of $C$.
  Then $\calC$ is a single-peaked clone structure.
\end{propsptree}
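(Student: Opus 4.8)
The plan is to prove, by induction on $|C|$, a statement slightly stronger than the proposition, so that the induction can feed on itself. Call a leaf $\ell$ of $T(\calC)$ \emph{admissible} if, on the path from the root of $T(\calC)$ to $\ell$, every Q-node is entered through its leftmost or its rightmost child. I will show: \emph{every clone structure $\calC$ satisfying the hypothesis has a single-peaked implementation $\calR$ (with $\calC(\calR)=\calC$) in which every admissible leaf of $T(\calC)$ is a peak}; Proposition~\ref{prop:sp-tree} is this with the peak clause discarded. The hypothesis will be used through the following observation: a non-singleton block can sit under a Q-node only at an extreme child, so for any irreducible block $\calD\in\Dec(C)$ no ancestor of $T(\calD)$'s root is a non-extreme Q-child, and hence the leaf that $\calD$ collapses to is admissible in $T(\calC(\calD\to e))$.

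For the base case $\calC$ is irreducible, so by Theorem~\ref{thm:irreducible} it is a string or a fat sausage. If it is a string $c_1\oplus\cdots\oplus c_m$ I would take the two orders $c_1\succ\cdots\succ c_m$ and $c_m\succ\cdots\succ c_1$: they implement $\calC$, are single-peaked with respect to $c_1>\cdots>c_m$, and their peaks are $c_1$ and $c_m$, the only admissible leaves. If $\calC$ is a fat sausage I would start from the single-peaked implementation of Propositions~\ref{prop:irr3} and~\ref{prop:irr-sp} --- which is valid because a fat sausage is a minimal clone structure --- and then, for each candidate that is not yet a peak, add one further vote that zig-zags with respect to the same axis and peaks at that candidate; adding votes does not change a minimal clone structure, so this still implements $\calC$ and now every leaf is a peak.

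For the inductive step I would pick a block $\calD\in\Dec(C)$, set $\calC''=\calC(\calD\to e)$, and apply the inductive hypothesis to the (smaller, still hypothesis-satisfying) structures $\calC''$ and $\calD$, obtaining implementations $\calQ$ and $\calS$ in which all admissible leaves are peaks. If $\calD$ is a fat sausage I would also throw the axis order of $\calD$ and its reverse into $\calS$, so that the ground set $C_\calD$ of $\calD$ occurs in both orientations among $\calS$'s votes. Since $e$ is admissible in $T(\calC'')$ it is a peak of $\calQ$; duplicating that vote, I may assume $e$ is the peak of at least $|\calS|$ votes of $\calQ$. Then I would expand $e$ into $\calD$: in $|\calS|$ of the $e$-peaked votes I plug in the votes of $\calS$ (so the block $C_\calD$ sits at the top), and in every other vote I plug in a monotone ordering of $C_\calD$ along $\calD$'s axis, oriented towards that vote's peak. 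One checks that $\calR$ is single-peaked for $\calQ$'s axis with $\calD$'s axis substituted at $e$, and --- exactly as in the proof of Proposition~\ref{pro:embedding}, using the two opposite orientations of $C_\calD$ --- that no clone set straddling the boundary of $C_\calD$ is created, so $\calC(\calR)=\calC$. Every admissible leaf of $T(\calC)$ is a peak of $\calR$: the ones inside $C_\calD$ because they are peaks of $\calS$ sitting atop some vote, the ones outside because the corresponding $\calQ$-vote is essentially untouched.

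The hard part is exactly the peak bookkeeping that motivates the strengthened statement: a fat sausage can be spliced into a profile only at a position that happens to be a peak, so I must carry along, all the way down the induction, the guarantee that the collapsed-block positions are peaks. Keeping this invariant alive is where the Q-node hypothesis is essential --- it is precisely what keeps those positions admissible --- together with the fact that a fat sausage, being a minimal clone structure, absorbs arbitrary additional single-peaked votes, which is what lets me manufacture whatever peaks I need without disturbing the clone structure.
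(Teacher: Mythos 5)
Your proof is correct, but it takes a genuinely different route from the paper's. The paper proves Proposition~\ref{prop:sp-tree} by one explicit global construction: it takes the societal axis to be the left-to-right order of the leaves of $T(\calC)$, defines for each node $v$ a careful suffix order $\succ_v$ on the candidates outside $C_v$, and then adds, for every child of a P-node and only for the two extreme children of each Q-node, four votes that rank $C_v$ (in axis order or reversed) on top and the rest according to $\succ_v$; single-peakedness and $\calC(\calR)=\calC$ are then verified directly, with the Q-node hypothesis entering through the claim that $\succ_v$ keeps all clone sets contiguous. You instead argue by induction along the PQ-tree, collapsing one block of $\Dec(C)$ at a time and re-expanding via the embedding machinery of Proposition~\ref{pro:embedding}, strengthened by the invariant that every ``admissible'' leaf is a peak; the Q-node hypothesis is used exactly to keep the collapsed block's placeholder admissible, hence a peak, which is what allows splicing a fat sausage (whose implementation from Propositions~\ref{prop:irr3} and~\ref{prop:irr-sp} needs several orientations of the block, including ones peaking in its interior) at that position without breaking single-peakedness, and the two opposite orientations of the block at the top reproduce the flip trick that kills parasite clones. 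Your invariant-based argument is more modular and makes the role of the hypothesis conceptually transparent (splice points must be peaks), and the minimality of the fat sausage lets you manufacture peaks freely; the paper's construction avoids induction, yields an explicit profile whose size is linear in the tree, and does not need any peak bookkeeping. Two routine steps you leave implicit should be spelled out in a full write-up: that collapsing a single block $\calD\in\Dec(C)$ contracts exactly the corresponding subtree of $T(\calC)$ to a leaf (so $\calC(\calD\to e)$ still satisfies the hypothesis and admissibility of the remaining leaves and of $e$ is preserved), and the verification that the ``oriented towards the peak'' insertions in the non-spliced votes keep every upper contour set an interval of the combined axis; both are true and straightforward, so I see no genuine gap.
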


\begin{proof}
  Let us fix $C$ and $\calC$ as in the statement of the theorem and
  let $T = T(\calC)$ be some PQ-tree decomposition of $\calC$. We
  first describe a societal axis $>$, and then construct a profile
  $\calR$ that is single-peaked with respect to $>$ and satisfies 
  $\calC(\calR) = \calC$.

  We obtain $>$ as follows. For every pair of candidates $c',c'' \in C$, we
  set $c' > c''$ if in the DFS traversal of $T$, the node representing
  $c'$ is visited before the node representing $c''$. Note that $T$
  is an ordered tree and thus the order of the DFS traversal is uniquely
  determined; intuitively, this is simply the left-to-right order of 
  leaves of $T$.

  To describe the profile $\calR$, we need some additional notation.
  As in Definition~\ref{def:d2c}, for every node $v \in T$, we set $C_v
  = \{c \in C \mid c \text{ is a leaf of $T$'s subtree rooted in } v\}$, 
  and let $D_v = C \setminus C_v$.  
  For every node $v \in T$, we will introduce several
  preference orders that rank the candidates in $C_v$ ahead of those in $D_v$. 
  This will ensure that the part of $\calC$ that corresponds to $v$ is
  implemented correctly. Since each preference order in $\calR$ has 
  to rank all candidates, we will define for each $v \in T$ an order
  $\succ_v$ on $D_v$ that will be used for ranking the candidates in $D_v$
  in the preference profiles that correspond to $v$.

  The order $\succ_v$ is defined as follows.
  Let $P=(v_1, \ldots, v_k)$ be the (unique) path from $v$ to the root, 
  where $v = v_1$, $v_k$ is the root, and for each $i=1, \dots, k-1$
  the node $v_{i+1}$ is the parent of $v_i$.
  Let $c, c'$ be two candidates in $D_v$.  
  Let $v_i$ be the first node of $P$ that lies on the path from $c$ to the root.  Similarly,
  let $v_j$ be the first node of $P$ that lies on the path from $c'$ to the root. 
  Then $\succ_v$ orders $c$ and $c'$ as follows:
  \begin{enumerate}
  \item If $i < j$ then $c \succ_v c'$.
  \item If $j > i$ then $c' \succ_v c$.
  \item If $i=j$ then:
    \begin{enumerate}
    \item If both $c$ and $c'$ are to the left of $C_{v_{i+1}}$ in $T$,
      then $c \succ_v c'$ if and only if $c' > c$.
    \item If both $c$ and $c'$ are to the right of $C_{v_{i+1}}$ in $T$,
      then $c \succ_v c'$ if and only if $c > c'$.
    \item If $c$ and $c'$ are on the opposite sides of $C_{v_{i+1}}$ in $T$,
      then the one to the left of $C_{v_{i+1}}$ precedes in $\succ_v$ the
      one to the right of $C_{v_{i+1}}$.
    \end{enumerate}
  \end{enumerate}
  Let $\succ$ be some preference order over $C_v$ that is single-peaked with   
  respect to $>$.
  The reader can verify that any preference order that ranks $C_v$ above $D_v$, 
  agrees with $\succ$ on $C_v$, and agrees with $\succ_v$
  on $D_v$ is single-peaked with respect to $>$. 
  Further, for each clone $D \in \calC$ such that
  either $C_v \subseteq D$ or $C_v \cap D = \emptyset$ it holds that
  the members of $D$ are ranked consecutively in $\succ_v$. 
  (The last claim uses the fact that for any Q-node 
   only its leftmost and rightmost child can be non-singletons.)  

  We can now describe the profile $\calR$. For each node $v$ we
  construct several preference orders as follows.

  Let $v$ be a P-node with children $v_1, \ldots, v_k$. For
  each $i=1, \dots, k$, we add four preference orders, 
  which we will denote by $R_{v_i}^1$, $R_{v_i}^2$,
  $R_{v_i}^3$, and $R_{v_i}^4$. Each of them ranks $C_v$ 
  above $D_v$ and agrees with $\succ_v$ on $D_v$. Thus, it
  remains to describe how they order the members of $C_v$.
  Set $A = C_{v_1} \cup \cdots \cup C_{v_{i-1}}$, 
  $B = C_{v_{i+1}} \cup \cdots \cup C_{v_k}$. 
  We have (see the description below for clarification):
  \begin{align*}
    R^1_{v_i}&:   C_{v_i} \succ A \succ B, \\
    R^2_{v_i}&:   \revnot{C_{v_i}} \succ A \succ B, \\
    R^3_{v_i}&:   C_{v_i} \succ B \succ A, \\
    R^4_{v_i}&:   \revnot{C_{v_i}} \succ B \succ A.
  \end{align*}
  For each occurrence of $A$ and $B$ in the above preference orders,
  we order member of $A$ and $B$ either following $>$ or the reverse
  of $>$, whichever way is required to ensure
  single-peakedness. Each occurrence of $C_{v_i}$ corresponds to ranking 
  the members of $C_{v_i}$ according to $>$, and each occurrence of
  $\revnot{C_{v_i}}$ corresponds to ranking the members of $C_{v_i}$ 
  according to  $>$.

  The preference orders $R^j_{v_i}$, $j=1, \dots, 4$, ensure
  that any clone in $\calR$ that contains both a member of $C_{v_i}$ and
  a member of $C_{v}\setminus C_{v_i}$ has to contain all of $C_v$. 
  Further, for each clone $D \in \calC$, the members of $D$ are ranked
  consecutively in each $R^j_{v_i}$, $j=1, \dots, 4$.

  Now, let $v$ be a Q-node in $T$ with children $v_1, \ldots, v_k$;
  note that $k \geq 3$, since in our construction of PQ-trees
  all nodes with 2 children are labeled as P-nodes.
  Then we introduce voters $R_{v_1}^1$, $R_{v_1}^2$,
  $R_{v_1}^3$, and $R_{v_1}^4$, and $R_{v_k}^1$, $R_{v_k}^2$,
  $R_{v_k}^3$, and $R_{v_k}^4$, defined in the same way as for a P-node. 
  This completes the description of $\calR$
  
  Clearly, each set in $\calC$ is a clone in the preference profile $\calR$.
  Conversely, no set $D\in 2^C\setminus\calC$  is a clone in $\calR$. 
  Indeed, fix a subset $D\in 2^C\setminus\calC$, 
  let $c$ be the minimal element of $D$ with respect to $>$, 
  and let $c'$ be the maximal element of $D$ with respect to $>$. Now consider
  the path $v_1, \ldots, v_t$ from $c$ to $c'$ in $T$. 
  By construction of $\calR$ we have
  $\bigcup_{i=1}^t C_{v_i} \subseteq D$. However, by the choice of $c$
  and $c'$ we have $D = \bigcup_{i=1}^t C_{v_i} \in \calC$, which is a
  contradiction.  
\end{proof}

\section{Complete Decloning Algorithm}\label{app:declone-alg}

In this section we give a polynomial-time algorithm that finds optimal
decloning toward a single-peaked profile without making any
assumptions on the structure of the initial profile.

We start by proving a preliminary lemma, which may be of independent
interest.

\begin{lemma}\label{prop:type1-strings}
  Let $\calR$ be a single-peaked preference profile over a candidate set $C$,
  and let $D\in\calC(\calR)$ be a clone set such that the set family
  $\calD = \{X\in\calC(\calR)\mid X\subseteq D\}$ is a string of sausages.   
  If $|D| \geq 3$, then 
  $\calR$ is single-peaked with respect to some order $>$ such that
  $D$ is a clone set of the first type with respect to $>$.
\end{lemma}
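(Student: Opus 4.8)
The plan is to take an axis $>'$ that witnesses single-peakedness of $\calR$ and ``fold'' the (at most) two blocks that $D$ may occupy in $>'$ into one contiguous block. I would first apply Proposition~\ref{thm:second-variety} to obtain a partition $C = A_1 \cup D_1 \cup P \cup D_2 \cup A_2$ with $A_1 >' D_1 >' P >' D_2 >' A_2$, $D = D_1 \cup D_2$, $D_1 \neq \emptyset$, and $D_2 \neq \emptyset$. If $P = \emptyset$, then since $D$ is a clone set it is already contiguous in $>'$, and because $A_1 >' D_1$ and $D_2 >' A_2$ all of $A_1$ sits above this block and all of $A_2$ below it; hence $D$ is already of the first type with respect to $>'$ and there is nothing to do. So assume $P \neq \emptyset$; Proposition~\ref{thm:second-variety} then also gives $\peak(\calR) \subseteq P$ and $P \succ_v D \succ_v A_1 \cup A_2$ for every voter~$v$.

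The heart of the argument is to show that, because $|D| \geq 3$, all the restrictions $R_v|_D$ are one and the same linear order. Observe first that $\{X \in \calC(\calR) : X \subseteq D\}$ equals $\calC(\calR|_D)$ (a candidate outside $D$ cannot split $D$, hence cannot split any $X \subseteq D$), so $\calC(\calR|_D)$ is a string of sausages; write $D = \{e_1, \ldots, e_\ell\}$ so that $\{e_t, e_{t+1}\} \in \calC(\calR|_D)$ for $1 \leq t < \ell$. Since each pair $\{e_t, e_{t+1}\}$ must appear consecutively in every $R_v|_D$, and these pairs chain together, $R_v|_D$ is necessarily $e_1 \succ \cdots \succ e_\ell$ or its reverse. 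On the other hand, single-peakedness with respect to $>'$ combined with $P \succ_v D$ forces (by the standard ``closer to the peak is the more preferred'' argument, using that in $>'$ all of $D_1$ lies above $\peak(R_v)$ and all of $D_2$ below it) the members of $D_1$ to appear in $R_v$ in the reverse of their $>'$-order and the members of $D_2$ in their $>'$-order; consequently the subsequences $\sigma_1 := R_v|_{D_1}$ and $\sigma_2 := R_v|_{D_2}$ do not depend on $v$. If some voter realized the orientation $e_1 \succ \cdots \succ e_\ell$ and another the reverse, then $\sigma_1$ would be a common subsequence of a linear order and its reverse, forcing $|D_1| \leq 1$, and likewise $|D_2| \leq 1$, contradicting $|D_1| + |D_2| = |D| \geq 3$. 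So all $R_v|_D$ coincide, and after relabeling $e_i \leftrightarrow e_{\ell+1-i}$ if necessary we may assume $R_v|_D = e_1 \succ \cdots \succ e_\ell$ for all $v$.

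Finally, I would let $>$ be the order that agrees with $>'$ on $A_1 \cup P \cup A_2$ (so in particular $A_1 > P > A_2$ there) and inserts $D$ as the block $e_\ell > e_{\ell-1} > \cdots > e_1$ between $A_1$ and $P$. To check that $\calR$ is single-peaked with respect to $>$, it suffices to check that every prefix (top-$j$ set) of every $R_v$ is an interval of $>$. For $j \leq |P|$ the prefix lies inside $P$, where $>$ and $>'$ agree, so this follows from single-peakedness with respect to $>'$. For $|P| < j \leq |P| + \ell$ the prefix is $P \cup \{e_1, \ldots, e_{j-|P|}\}$, which is an interval of $>$ because $\{e_1, \ldots, e_{j-|P|}\}$ is exactly the bottom part of the $D$-block and is adjacent to $P$ in $>$. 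For $j > |P| + \ell$ the prefix has the form $P \cup D \cup S$ with $S \subseteq A_1 \cup A_2$; since both $P \cup D \cup S$ and $P \cup D_1 \cup D_2$ are intervals of $>'$, the set $S$ must be a $>'$-suffix of $A_1$ together with a $>'$-prefix of $A_2$, which is precisely what makes $P \cup D \cup S$ an interval of $>$. As $D$ is an interval of $>$, it is a clone set of the first type with respect to $>$, and the proof is complete.

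I expect the main obstacle to be the structural claim of the second paragraph. The delicate point is that the way $D_1$ and $D_2$ get interleaved inside each $R_v$ is the same for every voter --- this is where $P \neq \emptyset$ (hence $P \succ_v D$ for all $v$) and single-peakedness with respect to $>'$ are both essential --- and $|D| \geq 3$ is exactly the hypothesis that excludes $D_1$ and $D_2$ from both being singletons, which is the only way that two voters could disagree about the orientation of $D$. Everything after that claim is routine verification.
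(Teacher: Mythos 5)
Your proof is correct and takes essentially the same route as the paper's: apply Proposition~\ref{thm:second-variety}, use $|D|\ge 3$ together with the string-of-sausages structure and $\peak(\calR)\subseteq P$ to force all voters to rank $D$ identically, and then reinsert $D$ as a single contiguous block adjacent to $P$ on the axis. The only differences are cosmetic — you place the block above $P$ in reversed order where the paper places it below $P$ in the voters' order, and you spell out the prefix-interval verification of single-peakedness that the paper leaves implicit.
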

\begin{proof}
  Suppose that $|D|\ge 3$ and  $\calR$ is single-peaked with respect to some order $>'$
  such that $D$ is a clone set of the second type with respect to $>'$.
  By Proposition~\ref{thm:second-variety},
  there is a partition of $C$ into sets $P$, $D_1$, $D_2$, $A_1$, $A_2$
  such that $D = D_1 \cup D_2$, $D_1 \neq \emptyset$, $D_2 \neq \emptyset$,  
  and $A_1 >' D_1 >' P >' D_2 >' A_2$.
  By the same proposition, $\peak(\calR) \subseteq P$.

  Assume without loss of generality that the first voter in $\calR$
  ranks the candidates in $D$ as $d_1\succ_1\dots\succ_1 d_m$.
  We will now argue that each voter in $\calR$
  ranks the candidates in $D$ according to $\succ_1$.
  Since $\calD$ is a string of sausages, each voter's
  ranking of the candidates in $D$ coincides with either $\succ_1$ or $\ola{\succ_1}$.
  Now, suppose that some voter $i$ ranks the candidates in $D$
  according to $\ola{\succ_1}$. Since $|D| \geq 3$, at least
  one of the sets $D_1, D_2$ has at least two elements.
  Assume without loss of generality that $|D_1|\ge 2$ and let
  $d, d'$ be two candidates in $D_1$ such that and $d \succ_1 d'$
  (and hence $d' \succ_i d$).
  Since $\peak(R_1)\in P$ and $D_1>P$, we have $d' >' d$.  
  However, since $\peak(R_i)\in P$, it has to be the case
  that $d \succ_i d'$,  a contradiction.
  
  This implies that $\calR$ is single-peaked with respect to
  an order $>$ that agrees with $>'$ on $C \setminus D$, agrees
  with $\succ_1$ on $D$, and is of the form
  $A_1 > P > D > A_2$.
  Clearly, $D$ is a clone of the first type with respect to $>$.
\end{proof} 

We will now use Lemma~\ref{prop:type1-strings} to show that it is never beneficial to 
partially collapse a string of sausages into a clone of size $3$ or more.

\begin{proposition}\label{cor:strings-all-or-nothing}
 Let $\calR$ be a preference profile over a set of candidates $C$,
 and let $D\in\calC(\calR)$ be a clone set such that
 the set family $\calD = \{X\in\calC(\calR)\mid X\subseteq D\}$ is a
 string of sausages.  Let $D' \in \calD$ be a clone set, $D' \neq D$,
 $|D'| \geq 2$, and let $d'$ be a candidate not in $C$.  If
 $\calR' = \calR(D' \mapsto d')$ is single-peaked and
 $|D\setminus D'\cup\{d'\}|\ge 3$,
 then $\calR$ is single-peaked as well.
\end{proposition}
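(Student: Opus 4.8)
The plan is to start from a societal axis that witnesses single‑peakedness of $\calR'$ and in which the collapsed clone $\bar D:=(D\setminus D')\cup\{d'\}$ occupies a single contiguous block, and then to re‑expand $d'$ into the block of $D'$ in the correct orientation, verifying that the result still witnesses single‑peakedness — now of $\calR$.

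First I would do bookkeeping. Fix the string order $d_1,\dots,d_m$ of $\calD=\{X\in\calC(\calR)\mid X\subseteq D\}$ and write $D'=\{d_p,\dots,d_q\}$ with $r:=|D'|=q-p+1\ge 2$ and $(p,q)\ne(1,m)$. Because $\{d_i,d_{i+1}\}\in\calC(\calR)$ for all $i$, each such pair is a clone of every individual voter, so every voter of $\calR$ ranks $D$ either as $d_1\succ\cdots\succ d_m$ or in the reverse order; hence every voter of $\calR'$ ranks $\bar D$ either as $\rho:=(d_1,\dots,d_{p-1},d',d_{q+1},\dots,d_m)$ or in reverse. Combining this with the obvious correspondence between clone sets under decloning (a clone of $\calR$ disjoint from $D'$ stays a clone of $\calR'$ and conversely, and a clone of $\calR$ containing $D'$ corresponds to a clone of $\calR'$ containing $d'$), one checks that $\{X\in\calC(\calR')\mid X\subseteq\bar D\}$ is exactly the string of sausages on $\rho$, of length $t=m-r+1\ge 3$. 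Similarly the clone sets of $\calR$ inside $D'$ form a string of sausages on $\pi':=(d_p,\dots,d_q)$, so each voter of $\calR$ orders $D'$ as $\pi'$ or $\ola{\pi'}$, and (since $t\ge 3$ and $r\ge 2$) a voter orders $D'$ as $\pi'$ exactly when it orders $D$ forward, i.e.\ exactly when it orders $\bar D$ as $\rho$.

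Now apply Lemma~\ref{prop:type1-strings} to $\calR'$ and $\bar D$ (legal since $|\bar D|\ge 3$): $\calR'$ is single‑peaked with respect to an order $>'$ under which $\bar D$ is a clone of the first type, hence a contiguous block. The first genuine obstacle is to improve this to an axis whose restriction to $\bar D$ equals $\rho$ or $\ola{\rho}$ — this is not automatic when $\bar D$ contains a peak of $\calR'$, since then $\bar D$ is first‑type for \emph{every} witnessing axis but its internal order need not respect $\rho$. To get it I would argue: if some voter has its peak outside $\bar D$, it enters the contiguous block $\bar D$ from one side, ordering $\bar D$ both as $>'|_{\bar D}$ or its reverse and as $\rho$ or its reverse, which forces $>'|_{\bar D}\in\{\rho,\ola{\rho}\}$; if every peak lies in $\bar D$, all voters rank $\bar D$ at the top, and then either both endpoints of $\rho$ occur as peaks — so $\calR'$ restricted to $\bar D$ contains both $\rho$ and $\ola{\rho}$, whose only common single‑peaked witness is $\rho$ or $\ola{\rho}$ — or all those peaks coincide with one endpoint of $\rho$, in which case one may simply re‑sort the contiguous $\bar D$‑block into the order $\rho$ without affecting any voter's compatibility. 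Reversing $>'$ if necessary, assume $>'|_{\bar D}=\rho$, so $d'$ is the $p$‑th element of the block.

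Finally I would build $>$ from $>'$ by replacing the single element $d'$ with a contiguous block carrying $D'$ in the order $\pi'$ (so the block's left end is $d_p$, its right end $d_q$), leaving everything else fixed, and check $\calR$ single‑peaked with respect to $>$ via the standard reformulation ``$R$ is compatible with an axis iff every prefix of $R$ is an interval of the axis''. For a prefix $T$ of a voter $R_v$: if $T\cap D'=\emptyset$ or $D'\subseteq T$, then $T$ comes from an interval of $>'$ and is automatically an interval of $>$. The only delicate case is $\emptyset\ne T\cap D'\subsetneq D'$, where $T=T_0\cup S$ with $T_0$ the set of candidates $v$ ranks above $d'$ (a prefix of $R'_v$, hence an interval of $>'$ lying entirely on one side of $d'$) and $S$ an initial segment of $v$'s order $\sigma_v\in\{\pi',\ola{\pi'}\}$ of $D'$. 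By the linkage from the first step, $\sigma_v=\pi'$ exactly when $v$ orders $\bar D$ as $\rho$, which (as $v$ enters the block from the side of its peak) puts $T_0$ to the left of $d'$ in $>'$, hence adjacent to the left end $d_p$ of the $D'$‑block in $>$; and $S$, being an initial segment of $\pi'$, is precisely the matching initial segment of the block, so $T_0\cup S$ is an interval of $>$. The case $\sigma_v=\ola{\pi'}$ is mirror‑symmetric, and the degenerate case $T_0=\emptyset$ arises only when $d'$ is an endpoint of $\rho$ and is handled the same way. I expect essentially all the work to be in the axis‑improvement step of paragraph three and the ``which side'' bookkeeping in this last case; the rest is formal.
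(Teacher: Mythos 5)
Your proposal is correct and takes essentially the same route as the paper's proof: apply Lemma~\ref{prop:type1-strings} to get an axis on which $(D\setminus D')\cup\{d'\}$ is contiguous, then use the case analysis on where the peaks lie (a peak outside the block, two extreme peaks inside, or a single peak inside) to force, or harmlessly re-sort, the axis restricted to that block into the string order, and finally re-expand $d'$ into $D'$. Your explicit prefix-interval verification in the last step merely spells out what the paper dismisses as ``easy to see,'' so there is no substantive difference in approach.
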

\begin{proof}
 Let $D=\{d_1, \dots, d_m\}$. We can assume that the first voter
 ranks the candidates in $D$ as $d_1\succ_1\ldots\succ_1 d_m$,
 and $D'=\{d_i, \dots, d_j\}$, where $1\le i<j\le m$.
 Note that all voters in $\calR$ rank the candidates in $D$
 either according to $\succ_1$ or according to $\ola{\succ_1}$, and therefore
 $D\setminus\{d_1, d_m\}$ does not contain any peaks of $\calR$.
 Set $D^*=D\setminus D'\cup\{d'\}$.
 Clearly,  $D^*$ is a string of sausages in $\calR'$.  Thus, by Lemma~\ref{prop:type1-strings} the 
profile $\calR'$
 is single-peaked with respect to some order $>'$ such that
 $D^*$ is a clone set of the first type with respect to $>'$,
 i.e., we have $A_1 >' D^* >' A_2$,  where $A_1\cup A_2=C\setminus D$.

 Suppose that $\peak(\calR')\not\subseteq D^*$.
 Then, since $D^*$ is a string of sausages in $\calR'$,
 the restriction of $>'$ on $D^*$ is either of the form
 $$
 d_1 >' \ldots >' d_{i-1}>' d' >' d_{j+1}>' \ldots >' d_m
 $$
 or of the form
 $$
 d_m >' \ldots >' d_{j+1}>' d' >' d_{i-1}>' \ldots >' d_1;
 $$
 by reversing $>'$ if necessary, we can assume the former.
 Consider the order $>$ on $C$ that agrees with $>'$ outside
 of $D$ and ranks the elements of $D$ as
 $d_1>\ldots >d_m$. It is easy to see that $\calR$
 is single-peaked with respect to $>$.

 On the other hand, suppose that $\peak(\calR')\subseteq D^*$.  This
 means that $\peak(\calR) \subseteq \{d_1, d_m\}$. If
 $|\peak(\calR')| = 2$ then $>'$ is of the form as in the
 paragraph above and we can obtain $>$ from it the same way.
 If $|\peak(\calR')| = 1$ then all preference orders
 in $\calR'$ rank voters in $D^*$ identically. We obtain $>$ from
 $>'$ by replacing the occurrence of $D^*$ with the occurrence
 of $D$, orders from $d_1$ to $d_m$.
\end{proof}

We will now briefly describe our algorithm {\sc DecloneSP}$(\calR)$
that finds an optimal decloning of $\calR$ towards a single-peaked
profile.  It proceeds as {\sc
  BasicDecloneSP}$(\calR)$, with one exception.  Specifically, when
{\sc DecloneSP}$(\calR)$ processes a node $v$ of type Q with $m$
children $v_1, \dots, v_m$ and discovers that $v$ cannot be colored
white, it considers two ways of splitting its children into two
contiguous groups, namely, $(\{v_1\}, \{v_2, \dots, v_m\})$ and
$(\{v_1, \dots, v_{m-1}\}, \{v_m\})$.  For each split, it declones the
non-singleton group, i.e., removes the respective branches of the tree
and replaces them with one black node.  It then checks if the
resulting profile is single-peaked.  If yes, it might be possible to
recursively expand the singleton node, so {\sc DecloneSP}$(\calR)$
calls itself recursively on the respective subtree.  After the
recursive calls that correspond to both splits return, the algorithm
chooses the better split.  We postpone the formal description of the
algorithm and its proof of correctness to the full version of the
paper; in essence, Proposition~\ref{cor:strings-all-or-nothing} shows
that is suffices to consider the splits of the form described above,
and an argument similar to that in the proof of
Theorem~\ref{thm:declone-alg} shows that different branches of the
tree can be handled independently.

\section{Material Missing from  Section~\ref{sec:sc}}
\label{app:sc}

\newtheorem*{propsceasy}{Proposition~\ref{prop:sc-easy}}
\begin{propsceasy}
The problem of checking if a given profile is single-crossing is in $\p$.
\end{propsceasy}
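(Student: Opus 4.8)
The plan is to decide single-crossingness \emph{constructively}: essentially one guesses which voter sits at an end of the (unknown) witnessing order and then sorts the remaining voters by their distance from it. For two linear orders $R,R'$ over $C$, let $\delta(R,R')$ denote the number of unordered pairs $\{c,d\}\subseteq C$ on which $R$ and $R'$ disagree (the Kendall-tau distance); this is clearly computable in polynomial time.

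The key structural ingredient is an additivity lemma. Suppose $\calR$ is single-crossing with respect to an order $\lhd=(w_1,\dots,w_n)$ on $[n]$. By Definition~\ref{def:sc}, for each pair of candidates the relative order changes at most once as one reads the votes along $\lhd$; hence for $i<j<k$, a pair on which $R_{w_i}$ and $R_{w_k}$ disagree ``flips'' in exactly one of the two segments $w_i\ldots w_j$ and $w_j\ldots w_k$, while a pair on which they agree flips in neither. This yields $\delta(R_{w_i},R_{w_k})=\delta(R_{w_i},R_{w_j})+\delta(R_{w_j},R_{w_k})$. Two corollaries follow immediately: (a) the sequence $i\mapsto\delta(R_{w_1},R_{w_i})$ is non-decreasing, and the set of positions attaining any fixed value is an interval of $\lhd$ along which all preference orders coincide (consecutive voters in such an interval are at distance $0$); (b) consequently, two voters equidistant from an endpoint of $\lhd$ have identical preference orders.

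The algorithm is then: for each voter $a\in[n]$, let $\sigma_a$ be an ordering of $[n]$ obtained by sorting the voters using the key $\delta(R_a,\cdot)$, breaking ties arbitrarily, and verify directly whether $\sigma_a$ witnesses single-crossingness --- that is, whether for every pair of distinct candidates $c,d$ the set $\{i\mid c\succ_{\sigma_a(i)}d\}$ is a prefix or a suffix of $\sigma_a$. Accept if and only if some $a$ passes this check. Precomputing all pairwise distances costs $O(n^2|C|^2)$ and each verification costs $O(n|C|^2)$, so the whole procedure is polynomial. Correctness in one direction is immediate: if some $\sigma_a$ passes, the verification itself exhibits a witnessing order. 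For the converse, suppose $\calR$ is single-crossing with witness $\lhd$ and take $a$ to be the first voter of $\lhd$; by corollary (a), $\lhd$ (started at $a$) is one valid outcome of ``sort by distance from $a$'', and any other outcome differs from $\lhd$ only by permuting, within its block, each maximal set of voters equidistant from $a$. By corollary (b) these are blocks of identical preference orders, and permuting identical voters never changes whether a position set $\{i\mid c\succ d\}$ is a prefix or a suffix; hence $\sigma_a$ is also a witness and passes verification. (As a minor optimization, the additivity lemma lets one show that a pair of voters of maximum pairwise distance consists of two $\lhd$-endpoints, so trying this single $a$ suffices; this is not needed for the complexity claim.)

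I expect the main obstacle to be precisely the correctness direction: pinning down that ties in the distance-sort occur \emph{only} between voters with identical preference orders, so that arbitrary tie-breaking is harmless. This is what the additivity lemma delivers, and the one subtlety to watch is that the lemma must genuinely invoke single-crossingness of the whole profile --- to forbid a single candidate pair from flipping twice along $\lhd$ --- rather than merely of the triples involved.
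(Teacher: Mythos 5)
Your proof is correct, and it takes a genuinely different route from the paper's, although both share the same outer loop of guessing which voter sits at an end of the witnessing order. The paper, after fixing voter $1$ as the putative first voter, builds a directed graph on the voters with an edge from $j$ to $k$ whenever some candidate pair is ordered as in voter $1$ by voter $j$ but reversed by voter $k$; single-crossingness forces $j$ to precede $k$, so the profile is single-crossing with voter $1$ first if and only if this digraph is acyclic, in which case a topological order is a witness. You instead prove the Kendall-tau additivity (betweenness) property along a single-crossing order and deduce that sorting by distance from an endpoint reconstructs a witness up to permuting blocks of identical voters, which you correctly show is harmless; a direct prefix/suffix verification then makes soundness immediate. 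What each buys: the paper's precedence-digraph argument needs no metric lemma and handles ties implicitly by taking an arbitrary topological order (it even glosses over pairs of identical voters, for which neither edge exists, without harm), while your approach isolates a reusable structural fact about single-crossing profiles, makes the tie-breaking issue explicit and resolves it cleanly, and yields the further optimization that a farthest pair of voters must consist of two endpoints, so a single sort suffices. Both run in comparable polynomial time, and your two subtleties --- that additivity needs single-crossingness of the whole profile, and that distance ties from an endpoint occur only between identical votes --- are exactly the right points to flag and are handled correctly.
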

\begin{proof}
Suppose that we are given a preference profile $\calR=(R_1, \dots, R_n)$ over a set
of candidates $C=\{c_1, \dots, c_m\}$. 
For each $i=1, \dots, n$, we will check if there
exists an order $\lhd$ over $[n]$ in which voter $i$ appears first;
we return ``yes'' if the answer is positive for some $i$.

Without loss of generality, we can focus on the case $i=1$ and assume
that voter $1$ ranks the candidates as $c_1\succ_1 \ldots \succ_1 c_m$.
Now, consider a directed graph $G$ with a vertex set $[n]$ that has
an edge from $j$ to $k$ if there exists a pair of candidates $(c_x, c_y)$
with $x<y$ such that $c_x\succ_j c_y$, but $c_y\succ_k c_x$.
Clearly, for every pair of nodes $(j, k)$ at least one of the edges
$(j, k)$ and $(k, j)$ is present in this graph. The single-crossing condition
immediately implies that if there is an edge from $j$ to $k$, 
then $j$ precedes $k$ in $\lhd$. Therefore, if $G$ has a cycle, 
$\calR$ is not single-crossing with respect to any ordering of voters
in which voter $1$ appears first. On the other hand, if $G$ is
acyclic, it induces a total order over $[n]$, and it is immediate
that $\calR$ is single-crossing with respect to this order.  
\end{proof}

We will now construct the tools needed to prove
Theorems~\ref{thm:scimp} and~\ref{thm:scnp}.  The following is an
immediate but useful corollary to the definition of single-crossing
profiles; to some degree, it justifies viewing single-crossing
profiles as collections of voters over some spectrum of views.

\begin{observation}\label{cor:sc-reverse}
  If profile $\calR = (R_1, \ldots, R_n)$ is single-crossing with
  respect to the order $1 \lhd 2 \lhd \cdots \lhd n$
  then $\calR' = (R_1, \ldots, R_n, \revnot{R_1})$ is single-crossing
  as well.
\end{observation}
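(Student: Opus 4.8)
The plan is to show that $\calR'$ is single-crossing with respect to the order $1\lhd 2\lhd\cdots\lhd n\lhd(n+1)$ obtained by placing the new voter (who carries the order $\revnot{R_1}$) last. By Definition~\ref{def:sc}, it suffices to check that for every pair of distinct candidates $c,d\in C$, one of the two sets $\{i\mid c\succ_i d\}$ and $\{i\mid d\succ_i c\}$ forms an initial block of this order; the other is then automatically the complementary final block.

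First I would fix such a pair $c,d$ and, after possibly swapping the names $c$ and $d$, assume $c\succ_1 d$. Since $\calR$ is single-crossing with respect to $1\lhd\cdots\lhd n$ and voter $1$ is the $\lhd$-minimal voter, the single-crossing condition pins down the crossing point for this pair: we must have $\{i\in[n]\mid c\succ_i d\}=\{1,\dots,k\}$ for some $k\ge 1$, and hence $\{i\in[n]\mid d\succ_i c\}=\{k+1,\dots,n\}$.

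Then I would invoke the defining property of $\revnot{R_1}$, namely that reversing an order flips the relative order of \emph{every} pair simultaneously; in particular $c\succ_1 d$ forces $d\succ_{n+1}c$. Consequently, in $\calR'$ the set $\{i\mid c\succ_i d\}$ is still $\{1,\dots,k\}$, while $\{i\mid d\succ_i c\}$ becomes $\{k+1,\dots,n,n+1\}$ --- i.e., the appended voter merely lengthens the already-present block at the far end. This is a final block of $1\lhd\cdots\lhd n\lhd(n+1)$, so the single-crossing condition holds for $c,d$; since $c,d$ were arbitrary (the degenerate case $k=n$, where one side is empty, being handled vacuously), $\calR'$ is single-crossing. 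There is no real obstacle here: the one point worth noting is that it is precisely the simultaneous flipping of all pairs by $\revnot{\cdot}$ that guarantees the appended voter cannot create a \emph{new} crossing for any pair, but only extend the existing blocks.
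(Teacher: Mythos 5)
Your proof is correct, and it is exactly the straightforward verification the paper has in mind: the observation is stated without proof as an immediate consequence of Definition~\ref{def:sc}, and your argument (appending $\revnot{R_1}$ as the last voter, noting that for each pair the reversed order can only extend the block of the side opposite to voter~$1$) spells out precisely that immediate reasoning. No gaps.
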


We now define a family of single-crossing profiles that 
implements a fat sausage; we will also use it within the reduction in
Theorem~\ref{thm:scnp}.

\begin{definition}\label{def:slide}
  Let $m$ be a positive integer, $m > 2$, and let $C$ be an
  $m$-element candidate set. Rename the candidates so that $C = \{a,
  b, c_1, \ldots, c_{m-2}\}$. We call each profile $\calR = \{R_1,
  \ldots, R_m\}$ of the form:
  \begin{align*}
    R_1: & a \succ_1 b \succ_1 c_2 \succ_1\cdots \succ_1 c_{m-2} \\
    R_2: & b \succ_2 a  \succ_2 c_2 \succ_2 \cdots \succ_1 c_{m-2} \\
    R_3: & b \succ_3 c_2 \succ_3 a \succ_3 \cdots \succ_1 c_{m-2} \\
    &\vdots \\
    R_m: & b \succ_m c_2 \succ_m c_3 \succ_m \cdots \succ_m c_{m-2} \succ_m a 
  \end{align*}
  a {\em slide} over $C$. We refer to $a$ as the {\em sliding candidate}.
\end{definition}

\begin{proposition}\label{prop:sc-fat}
  Let $C$ be a set of candidates and let $\calR$ be a slide over $C$.
  $\calR$ is single-crossing and $\calC(\calR)$ is a fat sausage.
\end{proposition}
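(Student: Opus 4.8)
The plan is to verify the two assertions directly from Definition~\ref{def:slide}. For notational convenience I would rename the non-sliding candidates $b, c_1, \dots, c_{m-2}$ as $b_1, \dots, b_{m-1}$, so that every voter ranks them in the common order $b_1 \succ b_2 \succ \cdots \succ b_{m-1}$ and voter $R_j$ is exactly $b_1 \succ \cdots \succ b_{j-1} \succ a \succ b_j \succ \cdots \succ b_{m-1}$; in particular $a$ sits on top in $R_1$ and at the bottom in $R_m$, and as $j$ increases $a$ moves monotonically down through the fixed list.

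First I would establish that $\calR$ is single-crossing with respect to the natural order $1 \lhd 2 \lhd \cdots \lhd m$ of its $m$ voters. Fix a pair of distinct candidates $x, y$. If neither is $a$, every voter ranks them identically (according to $b_1 \succ \cdots \succ b_{m-1}$), so one of $\{i \mid x \succ_i y\}$, $\{i \mid y \succ_i x\}$ is all of $[m]$ and the other is empty, and the single-crossing condition holds vacuously. If $x = a$ and $y = b_k$, then $a$ precedes $b_k$ in exactly those voters in which $a$ has not yet slid past $b_k$, i.e.\ $\{i \mid a \succ_i b_k\} = \{1, \dots, k\}$ and $\{i \mid b_k \succ_i a\} = \{k+1, \dots, m\}$; since $\{1, \dots, k\} \lhd \{k+1, \dots, m\}$, the single-crossing condition is met for this pair. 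As this exhausts all pairs, $\calR$ is single-crossing.

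Next I would show that $\calC(\calR)$ is a fat sausage, i.e.\ that its only members are the singletons and $C$ itself. By Proposition~\ref{pro:inclusive} the singletons and $C$ are always clone sets, so it suffices to rule out every $X \subseteq C$ with $2 \le |X| \le m-1$. I would use the standard reformulation underlying the proof of Proposition~\ref{pro:inclusive}: $X$ is a clone set for $\calR$ if and only if the members of $X$ occur consecutively in every $R_i$ (equivalently, no candidate outside $X$ splits $X$ in any $R_i$). Case $a \in X$: in $R_1$ the element $a$ is on top, so the block containing it is an initial segment, forcing $X = \{a, b_1, \dots, b_t\}$ with $1 \le t \le m-1$; but in $R_m$ this set occupies positions $1, \dots, t$ together with the last position, which is an interval only when $t = m-1$, i.e.\ $X = C$, contradicting $|X| \le m-1$. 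Case $a \notin X$: then $X$ is a consecutive block of $R_1$ avoiding its top element, so $X = \{b_p, \dots, b_q\}$ with $1 \le p < q \le m-1$; but in $R_{p+1}$ the candidate $a$ is inserted strictly between $b_p$ and $b_{p+1}$, both of which lie in $X$, so $X$ is not consecutive there --- a contradiction. Hence no such $X$ exists, and $\calC(\calR)$ is a fat sausage.

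The whole argument is elementary; the only place that needs a bit of care is the case analysis in the fat-sausage part, where ruling out a hypothetical proper clone set means playing the two extreme voters $R_1$ and $R_m$ off against each other when $a \in X$, and exhibiting a suitable intermediate voter $R_{p+1}$ when $a \notin X$. I do not expect any real obstacle beyond keeping the indices straight (and noting the harmless off-by-one in the display that defines a slide).
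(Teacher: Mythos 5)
Your proof is correct and follows essentially the same route as the paper's: single-crossingness with respect to the natural voter order is checked directly from the slide's structure, and proper clone sets are ruled out by playing the extreme voters (where the sliding candidate $a$ is at the top and at the bottom) against an intermediate voter that splits any would-be clone avoiding $a$. The paper merely compresses these checks ("clearly single-crossing'', "easy to verify''), so your write-up supplies exactly the details it leaves to the reader.
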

\begin{proof}
  Let $\calR = (R_1, \ldots, R_n)$ be a slide over $C$ (we will use
  the notation as in Definition~\ref{def:slide}). $\calR$ is clearly
  single-crossing with respect to the order $1\lhd \cdots\lhd m$. 
  We claim that $\calC(\calR)$ is a fat sausage. To see this, first note
  that if $D$ is a clone set and $a, b \in D$ then $D = C$. This is
  so, because $R_m$ ranks $b$ first and $a$ last.  Let $D$ be a clone
  set in $\calC(\calR)$. It is easy to verify that if $D$ contains
  more than one candidate then it must contain $a$ and $b$, and thus
  $D = C$. This completes the proof.
\end{proof}

\begin{proposition}\label{prop:slide-unique}
  Let $m$ be a positive integer, $m > 2$, and let $\calR =
  (R_1, \ldots, R_m)$ be a slide over $m$-candidate set $C$. $\calR$
  is not single-crossing with respect to orders other than  
  $1 \lhd_1 2 \lhd_1 \cdots \lhd_1 m$ and $m \lhd_2 m-1 \lhd_2 \cdots \lhd_2 1$.
\end{proposition}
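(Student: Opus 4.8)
The plan is to show that any admissible witness order $\lhd$ for the slide $\calR$ is forced, up to global reversal, to be the identity order $1\lhd 2\lhd\cdots\lhd m$. First I would record the key structural feature of a slide: consider the sliding candidate $a$ together with each $c_\ell$, $\ell=2,\dots,m-2$, and also the pair $(a,b)$. For the pair $(a,b)$, voter $1$ ranks $a\succ b$ while every other voter ranks $b\succ a$; hence in any witness order $\lhd$, voter $1$ must be a \emph{endpoint} (either the $\lhd$-first or the $\lhd$-last voter), since the set $\{i\mid a\succ_i b\}=\{1\}$ must form a prefix or a suffix of $\lhd$. By Observation~\ref{cor:sc-reverse} (or simply by the symmetry of Definition~\ref{def:sc} under reversing $\lhd$), I may assume without loss of generality that voter $1$ is $\lhd$-first.

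Next I would pin down the positions of the remaining voters using the pairs $(a,c_\ell)$. For a fixed $\ell\in\{2,\dots,m-2\}$, inspecting Definition~\ref{def:slide} shows that the voters who rank $a\succ c_\ell$ are exactly voters $1,2,\dots,\ell$ (in $R_j$, candidate $a$ has slid past $c_2,\dots,c_{j-2}$, so $a\succ_j c_\ell$ iff $j-2<\ell$, i.e.\ $j\le \ell+1$; one has to be slightly careful with the exact index bookkeeping in Definition~\ref{def:slide}, but the qualitative statement is that $\{i\mid a\succ_i c_\ell\}$ is an \emph{initial segment} $\{1,\dots,t_\ell\}$ of $\{1,\dots,m\}$ with $t_2<t_3<\cdots<t_{m-2}$). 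Since voter $1$ is $\lhd$-first and single-crossing forces $\{i\mid a\succ_i c_\ell\}$ to be a $\lhd$-prefix, each such set $S_\ell:=\{i\mid a\succ_i c_\ell\}$ is a prefix of $\lhd$. These sets form a strictly increasing chain $S_2\subsetneq S_3\subsetneq\cdots\subsetneq S_{m-2}$, and a chain of nested prefixes of a linear order is totally determined: the order in which elements enter the chain is exactly the $\lhd$-order. This chain has $m-3$ proper inclusions among subsets of an $m$-element set, which — together with the prefix $\{1\}$ coming from the $(a,b)$ pair and the full set at the end — forces the $\lhd$-order on all of $\{1,\dots,m\}$ to be the natural one $1\lhd 2\lhd\cdots\lhd m$.

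The main obstacle is purely bookkeeping: getting the indices in Definition~\ref{def:slide} exactly right (the displayed $R_3$ there seems to have a typo, writing $\succ_1$ where $\succ_3$ is meant, and the ``$c_2$'' in $R_1,R_2$ presumably should be read so that every $c_\ell$ appears), and then checking that the chain of sets $\{i\mid a\succ_i c_\ell\}$ genuinely has the strictly-increasing-prefix structure claimed, covering the ``gaps'' between consecutive $t_\ell$ using the $(b,c_\ell)$ or $(c_k,c_\ell)$ pairs if necessary to nail down any voter not yet separated. Once the index conventions are fixed, the argument is a short forcing argument: voter $1$ is an endpoint by the $(a,b)$ pair, and the nested chain of prefixes $S_2\subsetneq\cdots\subsetneq S_{m-2}$ determined by the $(a,c_\ell)$ pairs leaves $\lhd$ with no freedom, so the only witness orders are $1\lhd_1 2\lhd_1\cdots\lhd_1 m$ and its reverse $m\lhd_2\cdots\lhd_2 1$.
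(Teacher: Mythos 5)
Your argument is correct, and it takes a genuinely different (and more direct) route than the paper. The paper proceeds by contradiction: if $\lhd$ is neither the natural order nor its reverse, some triple $i<j<k$ appears in one of four ``misordered'' patterns under $\lhd$, and for each pattern one exhibits a single pair $\{a,c\}$ (with $a$ the sliding candidate) whose supporters fail to form a $\lhd$-prefix; the four cases are then handled analogously. You instead run a forcing argument: the pair $(a,b)$ makes $\{1\}$ a prefix or suffix of $\lhd$, so voter $1$ sits at an end (and by the symmetry of Definition~\ref{def:sc} under reversing $\lhd$ — which is the right justification here, since Observation~\ref{cor:sc-reverse} is about appending $\revnot{R_1}$ to the profile rather than reversing $\lhd$ — you may put voter $1$ first), after which the nested chain of sets $\{i\mid a\succ_i d\}$, as $d$ ranges over the non-sliding candidates, consists of prefixes of $\lhd$ and pins $\lhd$ down completely. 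Both proofs hinge on the same structural fact that these sets are initial segments of the natural voter order; yours avoids the four-case analysis and the (unargued, though standard) claim that a non-monotone $\lhd$ contains a misordered triple, at the price of the index bookkeeping you flag — which is genuinely garbled in Definition~\ref{def:slide} ($c_1$ never appears in the displayed orders and some subscripts on $\succ$ are off), but under the intended reading the sets $\{i\mid a\succ_i d\}$ are exactly $\{1\},\{1,2\},\dots,\{1,\dots,m-1\}$ with consecutive sets differing by one voter, so there are no ``gaps'' to cover. One small caveat: your fallback of using $(b,c_\ell)$ or $(c_k,c_\ell)$ pairs to resolve hypothetical gaps would not actually work, since all voters rank the non-sliding candidates identically and such pairs carry no single-crossing information; fortunately, as just noted, the slide structure makes this fallback unnecessary.
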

\begin{proof}
  For the sake of contradiction assume that there is an order
  $\lhd$ different from $\lhd_1$ and $\lhd_2$ with respect to which
  $\calR$ is single-crossing. Suppose first that there
  are some $i, j, k \in [m]$ such that $i < j < k$ and $i \lhd k \lhd j$.

  Let $a$ be the sliding candidate in $\calR$.  Since $j \neq m$,
  there is some candidate $c$ such that both $R_i$ and $R_j$ rank
  $a$ above $c$, but $R_k$ ranks $c$ above $a$. However, this immediately
  implies that $\calR$ is not single-crossing with respect to
  $\lhd$.  The remaining three cases (in which there exist $i <
  j < k$ such that $j \lhd i \lhd k$, $j \lhd k \lhd i$, 
  or $k \lhd i \lhd j$) can be handled similarly.
\end{proof}

\newtheorem*{thmscimp}{Theorem~\ref{thm:scimp}}
\begin{thmscimp}
For every clone structure $\calC$ there exists a single-crossing profile
$\calR$ such that $\calC = \calC(\calR)$.
\end{thmscimp}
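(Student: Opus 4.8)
The plan is to prove, by induction on $|C|$, the following statement, which is slightly stronger than Theorem~\ref{thm:scimp} and is what makes the induction work: \emph{every clone structure $\calC$ over a finite set $C$ admits a single-crossing implementation $\calR=(R_1,\dots,R_n)$ whose last voter is the reversal of its first, i.e.\ $R_n=\ola{R_1}$.} (For $|C|\ge 2$ this forces $n\ge 2$; for $|C|=1$ it is vacuous.) Theorem~\ref{thm:scimp} is then immediate.

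\textbf{Base case (irreducible $\calC$).} By Theorem~\ref{thm:irreducible}, $\calC$ is a string of sausages or a fat sausage. If $\calC$ is the string of sausages associated with $c_1\succ\cdots\succ c_m$, I would take $\calR=(c_1\succ\cdots\succ c_m,\ c_m\succ\cdots\succ c_1)$: every two-voter profile is single-crossing, the two voters are mutual reversals, and a set is contiguous in both voters exactly when it is an interval $\{c_i,\dots,c_j\}$, so $\calC(\calR)=\calC$. If $\calC$ is a fat sausage over $m\ge 3$ candidates, I would take a slide over $C$ (Definition~\ref{def:slide}), which is single-crossing and implements the fat sausage by Proposition~\ref{prop:sc-fat}, and append to it the reversal of its first voter; by Observation~\ref{cor:sc-reverse} the result is still single-crossing, it still implements the fat sausage (a set is contiguous in $\ola{Q}$ iff it is contiguous in $Q$, so the appended voter adds no new constraint; cf.\ Proposition~\ref{pro:reversing}), and its last voter is now the reversal of its first. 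The cases $|C|\le 2$ fall under the string-of-sausages case.

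\textbf{Inductive step (reducible $\calC$).} Then $\calC$ has a proper subfamily $\calE$ with support $E\subsetneq C$, $|E|\ge 2$. Put $\calC'=\calC(\calE\to e)$ for a fresh element $e$, so $\calC=\calC'(e\to\calE)$, and both $\calE$ (over $E$) and $\calC'$ (over $C'=(C\setminus E)\cup\{e\}$) are clone structures on strictly smaller sets. By the inductive hypothesis fix single-crossing implementations $\calR=(R_1,\dots,R_n)$ of $\calC'$ with $R_n=\ola{R_1}$, and $\calQ=(Q_1,\dots,Q_{n'})$ of $\calE$ with $Q_{n'}=\ola{Q_1}$ (here $n,n'\ge 2$, since $|C'|,|E|\ge 2$). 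Writing each $R_i$ as $U_i\succ e\succ V_i$, I would build the $(n'+n-1)$-voter profile $\calR^*=(S_1,\dots,S_{n'+n-1})$ by ``sweeping'' $\calQ$ through the first position: for $1\le j\le n'$ let $S_j$ be $R_1$ with the block $E$, ordered according to $Q_j$, put in place of $e$; for $n'\le j\le n'+n-1$, writing $i=j-n'+1$, let $S_j$ be $R_i$ with the block $E$, ordered according to $Q_{n'}$, put in place of $e$ (the two rules agree at $j=n'$). I then check three things. (a) $\calR^*$ is single-crossing along $S_1\lhd\cdots\lhd S_{n'+n-1}$: a pair inside $E$ moves only among $S_1,\dots,S_{n'}$, and at most once there (single-crossingness of $\calQ$); a pair inside $C\setminus E$ is constant on $S_1,\dots,S_{n'}$ and then moves at most once (single-crossingness of $\calR$); and a pair with one endpoint in $E$ behaves exactly like the pair $\{e,y\}$ in the underlying $R$-voter because $E$ stays a block, so again at most one crossing. (b) $S_{n'+n-1}=\ola{S_1}$: since $R_n=\ola{R_1}$ we have $U_n=\ola{V_1}$ and $V_n=\ola{U_1}$, while $Q_{n'}=\ola{Q_1}$ reverses the $E$-block. (c) $\calC(\calR^*)=\calC'(e\to\calE)=\calC$.

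The content of (c), and the reason I carry the ``$R_n=\ola{R_1}$'' clause through the induction, is ruling out \emph{parasite} clones, which I expect to be the main obstacle. Because $E$ is always a block, any clone $X$ of $\calR^*$ is a subset of $E$, a superset of $E$, or disjoint from $E$; in the first case $X$ is a clone of $\calQ$, and in the second case contracting the block shows $(X\setminus E)\cup\{e\}$ is a clone of $\calR$, so both give $X\in\calC'(e\to\calE)$ — this is the routine part, together with the ``$\supseteq$'' inclusion. The delicate case is a parasite, $\emptyset\ne X\cap E\subsetneq E$ with $X\cap(C\setminus E)\ne\emptyset$. The plan for the contradiction: contiguity of $X$ in $S_1$ forces $X\cap(C\setminus E)$ to lie entirely on one side of the $E$-block, say inside $U_1$ (otherwise $X\supseteq E$), and then $X\cap E$ must be a \emph{prefix} of $Q_1$ restricted to $E$; but $S_{n'}$ has the same $U_1,V_1$ and its $E$-block ordered by $Q_{n'}=\ola{Q_1}$, so contiguity of $X$ in $S_{n'}$ forces $X\cap E$ to be a prefix of $\ola{Q_1}$ restricted to $E$, i.e.\ a \emph{suffix} of $Q_1$ restricted to $E$ — impossible for a nonempty proper subset of $E$ (the symmetric case $X\cap(C\setminus E)\subseteq V_1$ is handled identically). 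Hence no parasites exist, $\calC(\calR^*)=\calC$, the induction closes, and Theorem~\ref{thm:scimp} follows. (Note that the resulting profile has many voters, since the voter count grows at each level of the decomposition, which is consistent with the remark following the theorem statement.)
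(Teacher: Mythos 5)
Your proof is correct and follows essentially the same route as the paper: handle strings of sausages and fat sausages (via slides) as base cases, then implement embeddings by a ``sweep'' that keeps the embedded set as a block, using a pair of voters with identical outer context but reversed inner block (your invariant $Q_{n'}=\ola{Q_1}$, the paper's appeal to Observation~\ref{cor:sc-reverse}) to exclude parasite clones. The only differences are cosmetic---you sweep the inner profile first and vary the outer profile second (the paper does the reverse) and you carry the reversal property as an explicit inductive invariant rather than imposing it at each embedding step.
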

\begin{proof}
  It is easy to see that each string of sausages can be implemented
  with a single-crossing profile because a profile with a single voter
  suffices. Fat sausages can be implemented by
  Proposition~\ref{prop:sc-fat}. These are the only two types of
  irreducible clone structures and, thus, to prove the theorem it
  remains to show that clone structures implementable using
  single-crossing profiles are closed under embeddings.

  Let $\calC$ and $\calD$ be two clone structures over disjoint sets
  $C$ and $D$, such that both $\calC$ and $\calD$ can be implemented
  with a single-crossing profile.
  Let $\calR = (R_1, \ldots, R_n)$ be a single-crossing profile over
  $C$ such that $\calC = \calC(\calR)$ and let $\calQ = (Q_1, \ldots,
  Q_n)$ be a single-crossing profile over $D$ such that $\calD =
  \calC(Q)$. Note that since we can freely duplicate preference orders
  in a single-crossing profile, we can assume that both $\calR$ and
  $\calQ$ have the same number of voters.  Further, by
  Corollary~\ref{cor:sc-reverse}, we can assume that $Q_1$ and $Q_n$
  are reverses of each other, and we can assume that both $\calR$ and
  $\calQ$ are single-crossing with respect to the standard order over
  integers ($1 \lhd 2 \lhd \cdots \lhd n$).

  Fix an arbitrary candidate $c \in C$. We will construct a profile
  $\calP = (P_1, \ldots, P_{2n-1})$ such that $\calC(c \rightarrow
  \calD) = \calC(\calP)$. For each $i = 1, \ldots, n$, we define $P_i$
  to be identical to $R_i$, except that we replace the occurrence of
  $c$ with $Q_1$. For each $k = 2, \ldots, n$, we define $P_{(n-1)+k}$
  to be identical to $R_n$, except that we replace the occurrence of
  $c$ with $Q_k$.  (Somewhat abusing our notation, we could write
  $\calP = (R_1(c \rightarrow Q_1), R_2(c \rightarrow Q_1), \ldots,
  R_n(c \rightarrow Q_1), R_n(c \rightarrow Q_2), \ldots, R_n(c
  \rightarrow Q_n))$.)

  It is easy to verify that since both $\calR$ and $\calQ$ are
  single-crossing then so is $\calP$. Similarly, it is easy to verify
  that $\calD \subseteq \calC(\calP)$ and that for each $A \in \calC$,
  if $c \notin A$, then $A \in \calC(\calP)$ and if $c \in A$ then $(A
  \setminus \{c\})\cup D \in \calC(\calP)$. Thus, it remains to show
  that $\calP$ does not contain any ``parasite'' clones.

  Clearly, any clone set in $\calC(\calP)$ that contains members
  of $C$ only or members of $D$ only belongs to $\calC(c \rightarrow
  \calD)$. Consider preference orders $P_n$ and $P_{2n-1}$. By
  construction, $P_n$ is $R_n$ with $c$ replaced by $Q_1$ and
  $P_{2n-1}$ is $R_n$ with $c$ replaced by $\revnot{Q_1}$ (recall the
  assumption that $Q_n = \revnot{Q_1}$). Thus, any clone set of
  $\calC(\calP)$ that contains both a member of $C$ and a member of
  $D$ must contain all members of $D$.  By construction, any clone set
  $X \in \calC(\calP)$ that contains all members of $D$ belongs to
  $\calC(c \rightarrow \calD)$. This completes the proof.
\end{proof}

The above embedding construction stands in a sharp contrast to the
construction used for unrestricted profiles
(Proposition~\ref{pro:embedding}).  There, we could embed one clone
structure into the other without increasing the number of voters
needed to implement the profiles. Here, every embedding operation
nearly doubles the number of votes. Thus, for unrestricted preferences
we need between $1$ and $3$ voters to implement any given clone
structure,
but for single-crossing profiles our construction may require a number of
voters that is exponential in $|C|$.  On the other
hand, our construction is certainly not optimal (to see this, compare
the clone structure built in the proof of Theorem~\ref{thm:scnp} and
the profile that implements it there, and the profile that would arise
from applying the construction from Theorem~\ref{thm:scimp} to
implement this clone structure).  It is interesting to ask if it is possible to
implement every clone structure over candidate set $C$ with a
single-crossing profile with at most $\poly(|C|)$ preference orders.

Let us now turn to the issue of decloning toward a single-crossing
profile. Unfortunately, as opposed to the case of single-peaked
elections, for single-crossing the problem is $\np$-complete.
Our reduction uses the standard $\np$-complete problem
Exact Cover by $3$-Sets (X3C). %

\begin{definition}
  An instance $I = (B,\calS)$ consists of a base set $B = \{b_1,
  \ldots, b_{3k}\}$ and a collection $\calS = \{S_1, \ldots, S_s\}$ of
  $3$-element subsets of $B$.  It is a \emph{yes}-instance if there
  exists a set $A \subseteq \{1, \ldots, s\}$ such that (a)
  $\bigcup_{i \in A}S_i = B$ and (b) for each $i, j \in A$, $S_i \cap
  S_j = \emptyset$.
\end{definition}

\newtheorem*{thmscnp}{Theorem~\ref{thm:scnp}}
\begin{thmscnp}
  Given a profile $\calR$ over a candidate set $C$ and a positive
  integer $k$, it is $\np$-complete to decide if there exists a
  single-crossing profile $\calR'$ with $c(\calR') \geq k$ such that 
  $\calR'$ can be obtained from $\calR$ by decloning.
\end{thmscnp}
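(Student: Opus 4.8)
The plan is to prove $\np$-completeness: membership in $\np$ is routine, and $\np$-hardness follows from a reduction from Exact Cover by $3$-Sets (X3C). For membership, observe that a decloning is completely described by a collection $D^1,\dots,D^t$ of pairwise disjoint clone sets of $\calR$ to be collapsed, which is a polynomial-size certificate; from it one computes the decloned profile $\calR'$ and $c(\calR')=|C|-\sum_i(|D^i|-1)$ in polynomial time, and then invokes Proposition~\ref{prop:sc-easy} to decide in polynomial time whether $\calR'$ is single-crossing. Hence the problem lies in $\np$, and it remains to prove hardness.

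For the reduction, let $I=(B,\calS)$ be an X3C instance with $B=\{b_1,\dots,b_{3k}\}$ and $\calS=\{S_1,\dots,S_s\}$ (w.l.o.g.\ $s\ge k$). I would build a profile $\calR_I$ whose candidate set is a disjoint union of: a \emph{set bundle} $Y_j$ of a fixed size $N\ge 4$ for each $S_j$, where $Y_j$ contains three distinguished ``element copies'' $z_j^{(1)},z_j^{(2)},z_j^{(3)}$ corresponding to the three members of $S_j$; an \emph{element candidate} $g_i$ for each $b_i\in B$; and a ``frame'' of auxiliary candidates used only to rigidify voter orders. The voters are grouped into (i) a \emph{bundle block} for each $S_j$, which is a slide over $Y_j$ (Definition~\ref{def:slide}), so by Proposition~\ref{prop:sc-fat} this block alone is single-crossing and its clone structure over $Y_j$ is a fat sausage; and (ii) an \emph{element block} for each $b_i$ --- a small sub-profile involving $g_i$, the copies $z_j^{(\ell)}$ for the sets containing $b_i$, and some frame candidates --- designed so that it is \emph{not} single-crossing on its own, but becomes single-crossing once any one of its incident element copies is collapsed (merged with its neighbours inside its bundle). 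The frame candidates, and the placement of each $Y_j$ as a fixed block inside every other block, should ensure that the only non-singleton clone sets of $\calR_I$ are the bundles $Y_j$ (each an irreducible fat sausage, hence with no proper non-singleton clone subset); so collapsing $Y_j$ is the only available move and costs exactly $N-1$ candidates, and no $g_i$ or frame candidate lies in a non-singleton clone. I set the threshold $k_I := |C_I| - k(N-1)$.

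With this setup, the target equivalence reads: $\calR_I$ admits a single-crossing decloning $\calR'$ with $c(\calR')\ge k_I$ if and only if some subcollection $A\subseteq\calS$ with $|A|\le k$ covers $B$ --- which, since any cover of a $3k$-element set by $3$-element sets has at least $k$ members (equality exactly for exact covers), is equivalent to $I$ being a \emph{yes}-instance. For ($\Leftarrow$): given such an $A$ with $|A|=k$, collapse the $k$ bundles $Y_j$, $j\in A$, costing $k(N-1)$ and leaving $c(\calR')=k_I$; since every $b_i$ is covered, all element blocks are repaired, and I would assemble an explicit witnessing voter order by laying out the repaired element blocks and the remaining bundle blocks along one axis, gluing consecutive slide blocks via Observation~\ref{cor:sc-reverse}. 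For ($\Rightarrow$): a single-crossing decloning with $c(\calR')\ge k_I$ spends at most $k(N-1)$ lost candidates; since the only non-singleton clones are the size-$N$ bundles, at most $k$ bundles are collapsed, and single-crossingness forces every element block to be repaired, so every $b_i$ is covered by some collapsed bundle --- a cover of size $\le k$, hence an exact cover.

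The main obstacle is constructing and verifying the element blocks and their interaction with the bundles: one must exhibit a small non-single-crossing sub-profile (the prototype obstruction being a three-voter/three-candidate ``cyclic'' configuration, which one checks directly is not single-crossing) whose single-crossingness is restored \emph{precisely} by collapsing any one incident element copy, while simultaneously guaranteeing that assembling all the blocks introduces no ``parasite'' clones across blocks --- which I would check by inspecting the extreme voters of the construction, as in the proofs of Proposition~\ref{pro:embedding} and Proposition~\ref{prop:sc-fat} --- and that once all element blocks are repaired the whole profile is single-crossing (this is where the rigidity of slides, Proposition~\ref{prop:slide-unique}, is used to pin down the global voter order). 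Keeping each $Y_j$ a genuine size-$N$ fat sausage so the optimization honestly counts sets, and choosing $N$ large enough but polynomially bounded, is the remaining bookkeeping; everything else is routine.
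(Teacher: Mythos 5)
Your NP-membership argument and the overall strategy (reduce from X3C, use slides/fat sausages as the only proper non-singleton clones, exploit Proposition~\ref{prop:slide-unique} to rigidify voter orders) are on the right track, but the proposal has a genuine gap: the heart of the reduction --- the ``element block'' gadget --- is never constructed, and it is exactly the step where the difficulty lies. You need a sub-profile that is not single-crossing, yet becomes single-crossing as soon as \emph{any one} of the incident bundles $Y_j$ is collapsed; this is a disjunctive repair condition over a variable number of incident sets, and it is not obvious it can be realized. The structural obstacle is that decloning $Y_j$ only deletes the pairwise comparisons \emph{internal} to $Y_j$ (every outside candidate compares identically to all of $Y_j$ in every vote), so the obstruction in the block for $b_i$ must be woven entirely out of internal pairs of the bundles containing $b_i$; moreover, those internal pairs are ranked by \emph{every} voter in the profile, so the constraints they impose are global, not block-local. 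Hence your claims that the blocks ``should ensure'' the intended clone structure (no parasite clones, each $Y_j$ an honest fat sausage), that single-crossingness ``forces every element block to be repaired,'' and that the repaired blocks can be glued along one global voter order are precisely the assertions that need a concrete construction and verification; your three-voter cyclic prototype only handles the case of three incident sets and does not address elements occurring in fewer or more sets, nor the interference between blocks.

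For contrast, the paper's reduction avoids the disjunctive gadget altogether by encoding the \emph{kept} sets rather than the collapsed ones: for each $S_j$ it embeds a long slide over $C_j$ whose progression is synchronized with the three voter groups corresponding to the elements of $S_j$, with a planted transposition (at positions $2j-1,2j$ inside each group) so that, if $C_j$ is \emph{not} decloned, the internal pairs of $C_j$ force the voters of group $i$ (for each $b_i\in S_j$) to be ordered by $\lhd_j$ or its reverse. Two kept sets sharing an element then impose incompatible forced orders on the same group --- a simple \emph{pairwise} conflict --- so the budget $c(\calR')\ge 2sk+(s-k)$ forces at least $k$ kept sets, pairwise disjointness forces exactly $k$, and a counting argument yields the exact cover; the converse direction is witnessed by an explicit global order. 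If you want to salvage your ``cover by collapsed bundles'' formulation, you would have to actually exhibit the element-block profiles, prove the clone structure is as claimed, prove the exact repair equivalence, and produce the global witnessing order --- none of which is routine bookkeeping.
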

\begin{proof}
  Proposition~\ref{prop:sc-easy} implies that this problem is in NP:
  it suffices to guess clone sets that need to be decloned
  and use the algorithm from the proof of Proposition~\ref{prop:sc-easy}
  to verify that the resulting preference profile is single-crossing.
  
  To prove that this problem is NP-hard, 
  we give a reduction from X3C. Let $I = (B,\calS)$ be an input
  instance of X3C with $B = \{b_1, \ldots, b_{3k}\}$ and $\calS =
  \{S_1, \ldots, S_s\}$. By duplicating sets in $\calS$ 
  if necessary, we can assume that $s > 3k$.

  We construct a profile $\calR$ in stages. First let $\calP = (P_1,
  \ldots, P_s)$ be a slide over $[s]$; to be specific, we pick $1$ to
  be the sliding candidate, but this choice is irrelevant for the
  proof. For each $i = 1, \ldots, s$, we set $\calP'_i$
  to be a group of $2s$ preference orders, denoted $P'_{i.1}, \ldots,
  P'_{i,t}$, each identical to $P_i$; further in our construction we
  will modify the members of each group appropriately. We define
  $\calP'$ to consist exactly of these $s$ groups of orders: that is,
  abusing notation, $\calP' = \calP'_1 + \cdots +\calP'_s$.
  Intuitively, for $i = 1, \ldots, 3k$, group $\calP'_i$ corresponds to
  the element $b_i$ of $B$ and, after further modifications that we
  will introduce in the profile, the role of group $\calP'_i$ will be to
  ensure that it is impossible to pick two sets from $\calS$ that
  both contain $b_i$. The
  remaining $s-3k$ groups are added for the sake of uniformity and to
  maintain the slide structure within $\calP'$.

  Let $\{C_1, \ldots, C_s\}$ be a family of disjoint candidate sets
  such that $|C_j|=6s$ for each $j=1, \dots, s$.
  For each $j = 1, \ldots, s$, we let $\calQ'^j =
  (Q'^j_1, \ldots, Q'^j_{6s})$ be a slide over $C_j$ (picking
  an arbitrary member of $C_j$ to be the sliding candidate). 
  We obtain profile $\calQ^j$ by splitting $\calQ'^j$ into three
  contiguous groups of size $2s$ each and swapping the $2j$-th and $(2j-1)$-th voter
  in each group (that is, $2j$ and $2j-1$, $2j+2s$ and $2j-1+2s$, and $2j+4s$ and $2j-1+4s$).
  Clearly, $\calQ^j$ is single-crossing
  with respect to the voter ordering $\lhd_j$ obtained from the standard
  ordering $1 \lhd \cdots \lhd 6s$ by swapping the same pairs of voters, 
  namely, $2j$ and $2j-1$, $2j+2s$ and $2j-1+2s$, and $2j+4s$ and $2j-1+4s$.
  Moreover, by Proposition~\ref{prop:slide-unique}
  the profile $\calQ^j$ is single-crossing with respect to $\lhd_j$ and its
  reverse, but not with respect to any other order. 

  We construct our final profile $\calR$ from $\calP'$ by embedding
  the profiles $\calQ^j$, $j=1, \dots, 6s$, into it. Specifically, 
  for each $S_j \in \calS$ such that $S_j = \{b_x,b_y,b_z\}$, $x < y < z$, we replace
  candidate $j$ with one of the preference orders from $\calQ^j = (Q^j_1,
  \ldots, Q^j_{6s})$ as follows: 
  \smallskip
  \begin{enumerate}
  \item For each $i$, $1 \leq i < x$, in group $\calP'_i$ we replace
    candidate $j$ with preference order $Q^j_1$.

  \item For $i = x$ and for each $\ell = 1, \ldots, 2s$, we replace
    candidate $j$ in $P'_{i,\ell}$ with $Q^j_\ell$.

  \item For each $i$, $x < i < y$, in group $\calP'_i$ we replace
    candidate $j$ with preference order $Q^j_{2s}$.

  \item For $i = y$ and for each $\ell = 1, \ldots, 2s$, we replace
    candidate $j$ in $P'_{i,\ell}$ with $Q^j_{2s+\ell}$.

  \item For each $i$, $y < i < z$, in group $\calP'_i$ we replace
    candidate $j$ with preference order $Q^j_{4s}$.

  \item For $i = z$ and for each $\ell = 1, \ldots, t$, we replace
    candidate $j$ in $P'_{i,\ell}$ with $Q^j_{4s+\ell}$.

  \item For each $i$, $z < i < s$, in group $\calP'_i$ we replace
    candidate $j$ with preference order $Q^j_{6s}$.
  \end{enumerate}

  As a result, we obtain a profile $\calR$ over the 
  candidate set $C = \bigcup_{i=1}^sC_i$, consisting of $s$ groups of voters, 
  $\calR_1, \ldots, \calR_s$, where each group $\calR_i$ contains $2s$ voters denoted
  by $R_{i,1}, \ldots, R_{i,2s}$. 
  Clearly, $\calR$ can be constructed in time polynomial in $s$.
  We claim that $I$ is a \emph{yes}-instance of X3C if and only if
  there exists a single-crossing profile $\calT$ with $c(\calT) \geq 2sk+(s-k)$
  that can be obtained by decloning $\calR$.

  Let us assume that there exists a single-crossing profile $\calT$ 
  with $c(\calT) \geq 2sk+(s-k)$ that can be
  obtained by decloning $\calR$. Just as $\calR$ was divided into $s$
  groups $\calR_1, \ldots, \calR_s$, $\calT$ is divided into
  corresponding $s$ groups $\calT_1, \ldots, \calT_s$. We will show
  that in this case $I$ is a \emph{yes}-instance of X3C.

  Note that $\calC(\calR)$ is a composition of a fat sausage over $[s]$
  and fat sausages over the candidate sets $C_i$ for $i=1, \dots, s$. Thus, $C_1, \ldots,
  C_s$ are the only nontrivial clones in $\calC(\calR)$.  
  Define $A = \{ i \mid C_i \text{ is not decloned in } \calT \}$. We claim that
  $A$ corresponds to an exact cover of $B$, i.e., 
  $\bigcup_{i \in A}S_i = B$ and the sets $S_i$, $i\in A$, 
  are pairwise disjoint.

  Since $c(\calT) \geq 2sk+(s-k)$ and $C_i$s are the only nontrivial
  clones in $\calT$, it must be the case that $|A| \geq k$. Now, consider two
  sets $S_j$ and $S_k$ such that $j,k \in A$. Assume that there exists
  an element $b_i$ such that $b_i \in S_j \cap S_k$.  We will
  show that this implies that $\calT$ is not single-crossing.  
  Indeed, since $C_j$ is not decloned and $b_i \in S_j$, any
  order of voters witnessing single-crossingness of $\calT$ has to
  order the voters in $\calT_i = (T_{i,1}, \ldots, T_{i,t})$ according
  to $\lhd_j$ or its reverse.
  Similarly, since $b_i \in S_k$ and $C_k$ has not been decloned,
  any order witnessing single-crossingness of $\calT$ has to order the
  voters in $\calT_i$ according to $\lhd_k$ or
  its reverse.  By construction, $\lhd_j$ and $\lhd_k$ are neither identical
  nor each others' reverses. Thus, it must be the case that $S_j \cap S_k =
  \emptyset$. In consequence, it must be the case that $|A| = k$ and
  $\bigcup_{i \in A}S_i = B$.
  
  For the other direction, let us assume that $I$ is a
  \emph{yes}-instance and let $A \subseteq \{1, \ldots, s\}$ be such
  that $\bigcup_{i \in A}S_i = B$ and for each $i,j \in A$ we
  have $S_i \cap S_j = \emptyset$. Let $\calT$ be a profile obtained
  from $\calR$ be decloning each set $C_i$ such that $i \notin A$.  We
  claim that $\calT$ is single-crossing. To see this, it suffices to
  take an order $\lhd$ that orders the groups $\calT_1, \ldots,
  \calT_n$ of $\calT$ as $\calT_1 \lhd \calT_2 \lhd \cdots \lhd \calT_s$,
  within each group $\calT_i$, $i=1, \dots, 3k$, orders the voters
  according to $\lhd_j$, where $j \in A$ and $b_i \in S_j$
  (by choice of $A$ such a $j$ is unique), and within each group
  $\calT_i$, $i= 3k+1, \dots, s$, orders the voters arbitrarily.
\end{proof}

\newtheorem*{propscfixedeasy}{Proposition~\ref{prop:sc-fixedeasy}}
\begin{propscfixedeasy}
  Given a profile $\calR$ over candidate set $C$, a positive
  integer $k$, and an order $\lhd$, we can decide in polynomial time if there exists a
  profile $\calR'$ with $c(\calR') \geq k$ that is single-crossing with respect to $\lhd$
  and can be obtained from $\calR$ by decloning.
\end{propscfixedeasy}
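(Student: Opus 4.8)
The plan is to reduce the problem, for a fixed voter order which we may take to be $1\lhd 2\lhd\cdots\lhd n$, to a single combinatorial optimization over the clone structure $\calC(\calR)$. The first step is a normalization lemma: any profile $\calR'$ obtainable from $\calR$ by a sequence of decloning steps is obtainable by decloning a single family $\mathcal{D}=\{D_1,\dots,D_r\}$ of \emph{pairwise disjoint} clone sets of $\calR$ with $|D_i|\ge 2$, and then $c(\calR')=c(\calR)-\sum_i(|D_i|-1)$. (Indeed, each ``new'' candidate of $\calR'$ traces back to a subset $D_i\subseteq C$; these are pairwise disjoint, each $D_i$ is a clone set of $\calR$ because the other collapses are substitutions of blocks and do not destroy contiguity, and nested or overlapping collapses are never beneficial since, e.g., decloning $D_1\subsetneq D_2$ in turn leaves the candidate count at $c(\calR)-(|D_2|-1)$, exactly as decloning $D_2$ alone.) So the task becomes: decide whether there is a family of pairwise disjoint clone sets whose decloning yields a profile single-crossing with respect to $\lhd$ and with $\sum_i(|D_i|-1)\le c(\calR)-k$.

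The crux is a characterization of which such families work. Call an unordered pair $\{c,d\}$ a \emph{bad pair} if, reading the voters in the order $1,2,\dots,n$, the relative order of $c$ and $d$ changes more than once. I claim that decloning pairwise disjoint clone sets $D_1,\dots,D_r$ yields a profile that is single-crossing with respect to $\lhd$ if and only if every bad pair is contained in some $D_i$. For the proof, note that each $D_i$ occupies a contiguous block in every vote, so for two distinct super-candidates $u,v$ of the decloned profile, the relative order of $u$ and $v$ in voter $i$ equals the relative order, in $R_i$, of any $c$ in $u$'s block and $d$ in $v$'s block; hence the number of alternations of $(u,v)$ along $\lhd$ equals that of such a representative pair $(c,d)$. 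Since single-crossingness with respect to $\lhd$ is precisely the statement that no pair of distinct candidates alternates twice, and since a bad pair placed inside a common $D_i$ simply disappears (its two members become one super-candidate), the equivalence follows: ``only if'' because an uncovered bad pair gives two distinct super-candidates that alternate twice, and ``if'' because representatives of two distinct super-candidates lie in different blocks, hence do not form a bad pair, hence alternate at most once.

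The remaining question is then purely about the clone structure and is solved by a union-find computation. For each bad pair $\{c,d\}$ let $B(c,d)=\bigcap\{X\in\calC(\calR)\mid c,d\in X\}$; this is a clone set by repeated application of A2 (all the $X$'s contain $c$, so successive intersections stay clone sets), and it can also be read off $T(\calC(\calR))$ (it is $C_v$ when the least common ancestor $v$ of $c,d$ is a P-node, and the union of the spanned consecutive children when $v$ is a Q-node). Let $\mathcal{B}=\{B(c,d)\mid\{c,d\}\text{ bad}\}$ and form $\mathcal{D}^{*}$ by taking, for each connected component $K$ of the intersection graph on $\mathcal{B}$, the set $\bigcup_{B\in K}B$; by repeated use of A2 each such union is a clone set, distinct components give disjoint sets, and every bad pair is covered, so $\mathcal{D}^{*}$ is a valid family. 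Moreover $\mathcal{D}^{*}$ is optimal: in any valid family $\mathcal{D}$, two members covering intersecting elements of $\mathcal{B}$ must coincide (a disjoint family has no overlapping sets), so each component of $\mathcal{B}$ is covered by a single member of $\mathcal{D}$; thus every $D\in\mathcal{D}$ is a disjoint union of some $D^{*}_i$'s, whence $\sum_i(|D^{*}_i|-1)\le\sum_{D\in\mathcal{D}}(|D|-1)$, using that if a set $D$ disjointly contains $D^{*}_{i_1},\dots,D^{*}_{i_s}$ then $|D|-1\ge\sum_t(|D^{*}_{i_t}|-1)$. Hence the answer is ``yes'' iff $c(\calR)-\sum_i(|D^{*}_i|-1)\ge k$, and every ingredient --- computing $\calC(\calR)$, enumerating the $O(|C|^2)$ candidate pairs and counting alternations in $O(n)$ time each, computing the $B(c,d)$'s, and running union-find --- is polynomial.

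I expect the main obstacle to be the characterization in the second step: establishing cleanly that, for a fixed $\lhd$, single-crossingness after decloning is \emph{exactly} ``cover all bad pairs'' --- in particular that decloning introduces no new obstruction and that covering a bad pair genuinely eliminates it. Contiguity of clone blocks in every vote is what makes this work, but the bookkeeping (especially the well-definedness of the relative order of super-candidates, independently of the chosen representatives) needs care. The optimality half of the third step is the only other delicate point, and it comes down to the elementary inequality about $|D|-1$ under disjoint unions; the normalization lemma in the first step, while conceptually necessary, should be routine.
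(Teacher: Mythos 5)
Your proof is correct, and its skeleton matches the paper's: both identify the pairs of candidates whose relative order flips more than once along $\lhd$, both observe that any successful decloning must place each such pair inside a single collapsed clone set, and both pass to the minimal clone set containing each such pair (your $B(c,d)$, defined as an intersection of clone sets, coincides with the paper's betweenness-closure $C(x,y)$). Where you genuinely diverge is in how these minimal sets are combined and how optimality is established. The paper asserts that the family $\{C(x,y)\}$ is laminar and prescribes decloning its maximal members; you instead merge overlapping $B$-sets along connected components of their intersection graph, verify via A2 that each component union is again a clone set, and prove optimality through the counting inequality $|D|-1\ge\sum_t\bigl(|D^*_{i_t}|-1\bigr)$ applied to an arbitrary disjoint covering family. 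This buys robustness: the paper's laminarity observation is false as literally stated (for the three-voter profile $R,\ola{R},R$ over $\{1,2,3,4\}$ with $R: 1\succ 2\succ 3\succ 4$, both $\{1,2\}$ and $\{2,3\}$ arise as minimal clone sets of violating pairs and overlap without nesting); the paper's final prescription can still be salvaged, since one can show the \emph{maximal} sets of the family are pairwise disjoint, but your component-union argument needs no such repair, and it also supplies pieces the paper leaves implicit --- the reduction of arbitrary decloning sequences to disjoint families of clone sets of the original profile, the exact iff-characterization (single-crossing with respect to $\lhd$ after decloning if and only if every bad pair is covered), and a complete optimality proof. A further small point in your favor: your definition of a bad pair (more than one alternation along $\lhd$) is the complete one, whereas the paper's condition ``$c_x\succ_1 c_y$, $c_x\succ_n c_y$, $c_y\succ_i c_x$ for some middle $i$'' only describes the case in which the first and last voters agree on the pair.
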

\begin{proof}
We can assume without loss of generality that $\calR=(R_1, \dots, R_n)$
is single-crossing with respect to the standard order $1 \lhd \cdots \lhd n$ over $[n]$.
Consider any pair of candidates $(c_x, c_y)$ that violates the single-crossing
condition with respect to $\lhd$, i.e., $c_x\succ_1 c_y$, $c_x\succ_n c_y$
and $c_y\succ_i c_x$ for some $i$, $1<i<n$; we will write $x\perp y$ if this is the case.
 Clearly, if $\calR$
is single-crossing, $c_x$ and $c_y$ have to be decloned into the same candidate.
Let $C(x, y)$ be the unique minimal (with respect to set inclusion) 
subset of candidates such that: 
(a) $c_x\in C(x, y)$;
(b) $c_y\in C(x, y)$; and 
(c) for every pair of candidates $c_t, c_z\in C(x, y)$, 
and every candidate $c_w\in C$ such that $c_t\succ_i c_w \succ_i c_z$ for some $i\in[n]$ 
it holds that $c_w\in C(x, y)$.
The set $C(x, y)$ is well-defined and can be constructed inductively: we start with $c_x$ and $c_y$,
and at each step we add all candidates that appear between some candidates
already in $C(x, y)$ in at least one preference order. 
A simple inductive argument shows that $C(x, y)$ is a clone set, 
and, for $\calR'$ to be single-crossing, 
all elements of $C(x, y)$ must be decloned into the same candidate.
We further observe that the set family $\{C(x, y)\mid x\perp y\}$
is laminar, i.e., for every $x, y, z, t\in [m]$ such that $x\perp y$ and $z\perp t$ we have
$C(x, y)\subset C(z, t)$ or $C(z, t)\subset C(x, y)$ or $C(x, y) = C(z, t)$.
Thus, an optimal decloning of $\calR$ can be obtained by decloning
the maximal (with respect to set inclusion) sets in $\{C(x, y)\mid x\perp y\}$.
\end{proof}

\end{document}